\newcommand\independent{\protect\mathpalette{\protect\independenT}{\perp}}
\def\independenT#1#2{\mathrel{\rlap{$#1#2$}\mkern2mu{#1#2}}}
\newtheorem{thm}{Theorem}
\newtheorem{defi}{Definition}
\newtheorem{exmpl}{Example}
\def\independenT#1#2{\mathrel{\rlap{$#1#2$}\mkern2mu{#1#2}}}
\title{Online Context-aware Data Release with Sequence Information Privacy}
\author{\IEEEauthorblockN{Bo~Jiang  \quad Ming~Li \quad Ravi~Tandon\\
}
\IEEEauthorblockA{Department of Electrical and Computer Engineering}\\
\IEEEauthorblockA{University of Arizona, Tucson, AZ, 85721\\
Email: $\{\textit{bjiang, lim, tandonr }\}$@email.arizona.edu}
}
\begin{document}
\maketitle
\begin{abstract}
Publishing streaming data in a privacy-preserving manner has been a key research focus for many years. This issue presents considerable challenges, particularly due to the correlations prevalent within the data stream. Existing approaches either fall short in effectively leveraging these correlations, leading to a suboptimal utility-privacy tradeoff, or they involve complex mechanism designs that increase the computation complexity with respect to the sequence length.
In this paper, we introduce Sequence Information Privacy (SIP), a new privacy notion designed to guarantee privacy for an entire data stream, taking into account the intrinsic data correlations. We show that SIP provides a similar level of privacy guarantee compared to local differential privacy (LDP), and it also enjoys a lightweight modular mechanism design.
We further study two online data release models (instantaneous or batched) and propose corresponding privacy-preserving data perturbation mechanisms. We provide a numerical evaluation of how correlations influence noise addition in data streams. Lastly, we conduct experiments using real-world data to compare the utility-privacy tradeoff offered by our approaches with those from existing literature. The results reveal that our mechanisms offer utility improvements more than twice those based on LDP-based mechanisms.

\end{abstract}
\section{Introduction}
In the era of big data, data sharing has become extensive and pervasive across various industries, transforming the way businesses and organizations operate. The data-sharing mechanisms play a critical role in enabling decision-making, analytics, and automation. The setting of these mechanisms can be broadly classified into two categories: offline and online. The offline setting considers static data/dataset, such as database queries, which involve accessing and utilizing stored data for various applications. The online setting, on the other hand,  often involves real-time processing and dissemination of data generated by IoT devices or cloud-based systems \cite{7867732}. These mechanisms encompass varied applications such as real-time heart rate monitoring via smartwatches \cite{medical}, instant updates from cloud-based infrastructure, and smart grid management for efficient energy distribution, etc. 

As the shared data may contain personally sensitive information, investigating data-sharing methods in a privacy-preserving manner becomes critical. Differential Privacy (DP)\cite{Dwork20061,Dwork2006,Dwork2008}, which is \textit{de facto} standard in the privacy research community, has achieved great success in the offline data-sharing setting and has led many real-world implementations, such as surveying demographics and commuting patterns \cite{Machanavajjhala:2008:PTM:1546682.1547184}, and the  2020 U.S. Census  \cite{inproceedingsCensus}. DP is well-suited for answering aggregated queries and requires a trusted server. Conversely, Local DP (LDP) mechanisms\cite{Dwork2008} allow for the publication of individual records without reliance on a trusted server. They can be used to answer both statistical and individual queries. LDP-based mechanisms have been successfully adopted by  Google's RAPPOR \cite{Rappor} for collecting web browsing behavior, and Apple's MacOS to identify popular emojis and media preferences in Safari \cite{Cormode:2018:PSL:3183713.3197390, LPSApple}. However, previous research has indicated that when independent $\epsilon$-LDP mechanisms are applied to correlated data, the actual leakage for each mechanism is significantly greater than $\epsilon$ for highly correlated data\cite{free_lunch, Kifer2012ARA, Kifer2014PufferfishAF} (the leakage upper bound is $k\epsilon$ when releasing $k$ consecutive correlated data points). 
A strict way to upper bound the privacy leakage is to properly allocate the global privacy budget  to each LDP-based mechanism by sequential composition. However, the privacy budget allocated to each mechanism may be too small to ensure an ideal utility, because LDP provides strong (worst-case) privacy guarantees and fails to \textit{leverage correlations} in their definitions. This oversight can potentially lead to less-than-optimal \textit{utility-privacy tradeoffs}.  

Context-aware privacy notions, which incorporate context information (typically the data distribution) in privacy definition, offer a more relaxed and adaptive way to measure privacy leakage\cite{context}. Mutual Information Privacy (MIP), for instance, gauges the mutual information between the raw data and its release\cite{7498650}. MIP evaluates the Kullback-Leibler (KL) divergence, a statistical measure that quantifies the expected distance between two distributions, thereby naturally incorporating the data's prior distribution and correlations. However, MIP provides an average-case privacy protection, which may not be sufficient in practice\cite{relation}. Moreover, MIP is not sequentially composable, making it unsuitable for the online setting. In such settings, the requirement for \textit{sequential composability} means that the decomposed privacy guarantee of each time step must remain independent of subsequent steps. Lastly, MIP-based mechanisms, which require averaging over the input/output's support, typically introduce substantial \textit{computational complexity} due to the exponential growth of the support with the data sequence length. Notably, as online data release demands the timely sharing of data, the privacy protection mechanisms need to be lightweight and have low computation complexity to avoid causing any delays in the data release.

\begin{figure*}[htp]
\begin{small}
\centering 
{ \includegraphics[width=0.65\textwidth]{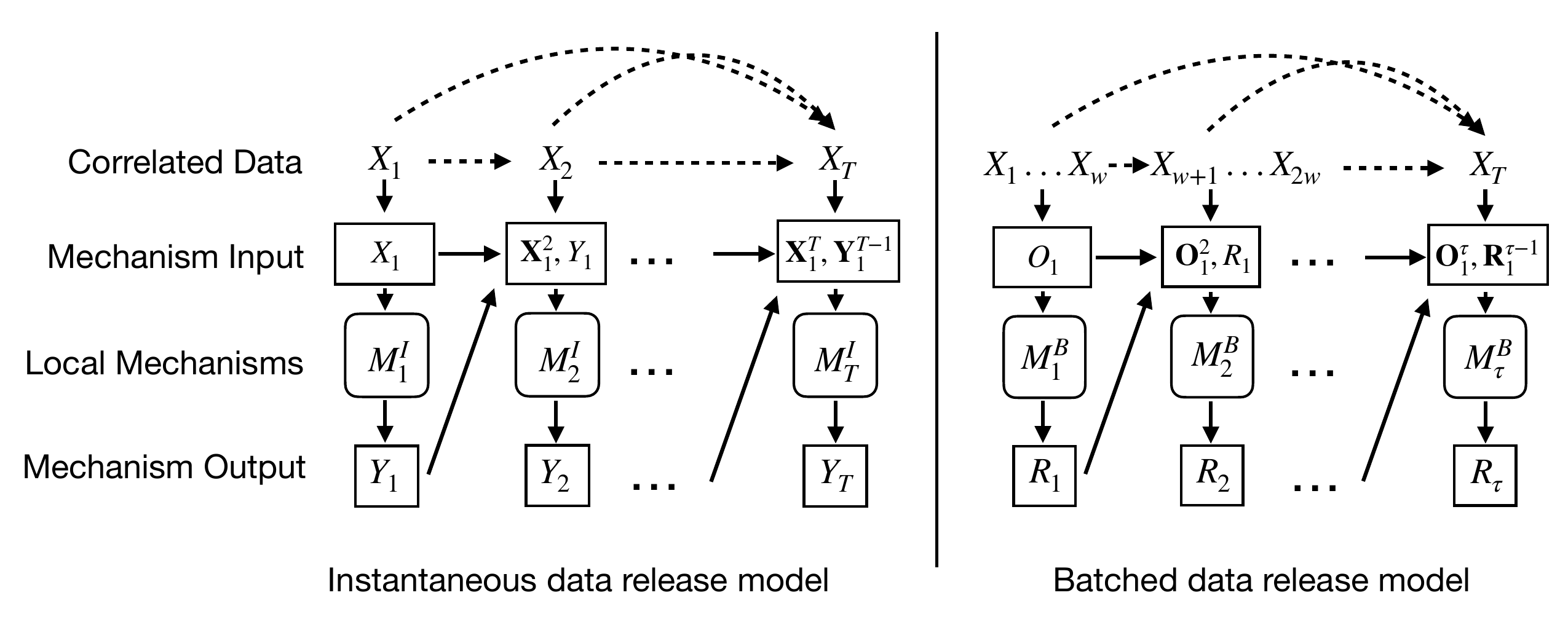}
\caption{Online privacy-preserving data release systems, a side-by-side comparison between the instantaneous data release model and the batched data release model.}}
\label{sys_model} 
\end{small}
\end{figure*}

Another context-aware privacy notion, Local Information Privacy (LIP)\cite{LIP1,LIP2}. LIP bounds the privacy leakage via the ratio between the posterior and the prior data distribution, which provides a worst-case privacy guarantee similar to LDP, while the utility of LIP-based mechanisms can be significantly enhanced compared to LDP (we provide a further comparison in section VI). However, LIP and LIP-based privacy mechanisms were originally proposed/designed for offline/one-time data release. In this paper, we extend it to handle the time series data in the online setting and propose Sequential Information Privacy (SIP). We show that SIP has sequential composition properties similar to LDP (while the privacy is bounded by a factor of 2 from LDP, meaning it also achieves strong privacy). At the same time, SIP enjoys modular mechanism design with low complexity. We develop novel mechanisms for two data release settings: instantaneous release setting,  where data is released as it is generated; and batched release setting, where data is accumulated and released periodically. Both approaches have their own merits and use cases. Our proposed mechanisms ensure real-time privacy preservation while maintaining data utility.

Our main contributions in this paper are three-fold:
\begin{itemize}
\item To quantify the privacy leakage in the data stream, we introduce a novel privacy notion, termed Sequence Information Privacy (SIP), which measures the overall privacy leakage in the data sequence. We consider two common real-world online data release settings: instantaneous data release and batched data release. Subsequently, we define two metrics, Instantaneous Information Leakage (IIL) and Batched Information Leakage (BIL), corresponding to the aforementioned settings. We show that SIP enjoys similar composition theorems as LDP, which is upper bounded by the sum of IIL and BIL in each time step, in accordance with the release setting in a linear or sub-linear fashion (advanced composition).
\item We propose privacy protection mechanisms corresponding to each data release setting. For the instantaneous release setting, we derive optimal mechanism and its parameters in closed form; furthermore, we demonstrate correlation-dependent noise through an example. For the batched release setting, we first show the problem can be degraded to a sub-optimal problem by simplifying the mechanism parameter. Then, we propose a data release  algorithm with simplified parameters based on gradient descent. We study the influence of batch size on data utility and computational complexity.
\item We provide extensive experimental results, utilizing two real datasets with different application types (and correspondingly, different utility measures). We evaluate the utility-privacy tradeoff provided by both mechanisms and compare these results with existing solutions. Our analysis shows that, while the privacy guarantee offered by $\epsilon$-SIP is strictly stronger than $2\epsilon$-LDP, the utility provided by $\epsilon$-SIP based mechanisms significantly exceeds that of $2\epsilon$-LDP.
\end{itemize}

\section{System and Threat Model}
\subsection{System Setup}
Let us consider the scenario of releasing time-series data in an online manner. Denote the raw data at each time stamp $k$ as $X_k$ that takes value from a finite support $\mathcal{X}$. Denote the data stream up to time $T$ as $\bold{X}_{1}^T=\{X_1,..., X_T\}$, and $\bold{x}_{1}^T$ as a realization of $\bold{X}_{1}^T$. We use the bold symbol  to denote a vector. In the context-aware setting, the data stream is considered as a correlated random vector with $\text{Pr}(X_1=x)=P_{1}(x)$, and $\text{Pr}(X_{k+1}=\bold{v}|\bold{X}_{1}^k=\bold{u})=C^{\bold{v}}_{\bold{u}}$, for all $u$, $v$ sequence. Further, we consider two types of data release scenarios: 1) instantaneously  release and 2) batched release. Instantaneous release means each of the data in the sequence is released instantaneously. One example of the instantaneous release is the navigation app on the smartphone, users are sending location data to the server and accessing location-based services on the fly. Another example is online games, users' operation data is collected by the server continuously (usually less than every 20 ms). On  the other hand, we also consider data to be released in a batched manner, for the applications where moderate delay is allowed to minimize the communication cost. Backend software in smartphones or PCs implements batched release by periodically sending collected logs to the server. This allows for efficient data management and analysis. Traffic monitoring systems utilize batched release to collect and upload aggregated traffic flow data at regular intervals. This approach ensures accurate monitoring of traffic density and enables effective analysis of traffic patterns. Models of these two release settings are depicted in Fig. 1.

\textbf{Instantaneous release setting:} In the instantaneous release setting, to protect data privacy, data at each time step (for example, time $k$) is perturbed to $Y_k$ before being released to the public. Assume that $Y_k$ takes a value of $y$ from the same domain as $X$, and the perturbation is done by a randomized mechanism $\mathcal{M}^I_k$, where the superscript $I$ in the notation denotes instantaneous release setting. The mechanism outputs $Y_k$ by considering the whole data sequence till $k$ as well as all previous outputs, i.e., $Y_k=\mathcal{M}_k^I(\bold{X}_1^k,\bold{Y}_1^{k-1})$.
Also, it is natural to assume that $X_{k+t}\independent{Y_k}|\{\bold{X}_{1}^{k},\bold{Y}_1^{k-1}\}$, for all $t\ge{1}$ (we use $\independent$ to denote 
 independent between random variables), as the current release should not depend on data at future time steps.  More specifically, the mechanism $\mathcal{M}^I_k$ is defined as follows 
\begin{equation}\label{action}
    \begin{aligned}
    a^I_{k}(y_{k}| x_{k}, &\bold{x}_{1}^{k-1}, \bold{y}_{1}^{k-1}) =\\ &\operatorname{Pr}\left(Y_{k}=y_{k} \mid \mathbf{X}_{1}^{k}=\mathbf{x}_{1}^{k}, \mathbf{Y}_{1}^{k-1}= \bold{y}_{1}^{k-1}\right).
    \end{aligned}
\end{equation}
% \begin{align}\label{action}
    
% \end{align}

\textbf{Batched release setting:} The raw data sequence is partitioned into different batches. Denote $\bold{O}_{l} = \bold{X}_{(l-1)w +1}^{lw}$ as one batch generated from the raw sequence. $w$ here denotes the length of the batch, and $l\in[1,\tau]$ represents the batch index. In this paper, we assume $w$ of each batch to be the same, however, it is straightforward to extend our analysis and results to the case where $w$ for each batch is different from each other.

% from the input and $\mathbf{O}$ as a batched input sequence, $\mathsf{B}$ as the total number of batches, then the sequence of $\bold{X}_{1}^{T}$ can be represented as $\bold{O}_1^{\tau}$, where $\bold{O}_{l} = \bold{X}_{(l-1)w}^{lw}$, and $w$ denotes the length of each batch.

Similarly to the instantaneous setting, for data privacy consideration, at time $l$, a privacy protection mechanism considers all previous input/output and releases a perturbed version $\bold{R}_l$. Denote the mechanism for batched release as $\mathcal{M}_l^B(\bold{O}_1^l,\bold{R}_1^{l-1})$, then $\bold{R}_l =\mathcal{M}_l^B(\bold{O}_1^l,\bold{R}_1^{l-1})$, it is also natural to assume that $\bold{O}_{l+t}\independent{\bold{R}_l}|\{\bold{O}_{1}^{l},\bold{R}_1^{l-1}\}$, for all $t>0$. As a result, $\mathcal{M}_l^B$ is defined as follows:
\begin{equation}\label{action2}
    \begin{aligned}
    a^B_{l}(\bold{r}_{l}| \bold{o}_{l}, &\bold{o}_{1}^{l-1}, \bold{r}_{1}^{l-1}) = \\
&\operatorname{Pr}\left(\bold{R}_{l}=\bold{r}_{l} \mid \mathbf{O}_{1}^{l}=\mathbf{o}_{1}^{l}, \mathbf{R}_{1}^{l-1}= \bold{r}_{1}^{l-1}\right).
    \end{aligned}
\end{equation}

\subsection{Adversary Model}
In this paper, the adversary can be anyone who has access to the released data. e.g. the server, or anyone in the public. We assume the adversary is honest but curious. He/She does not have access to the user's release system, and can only passively receive and observe the output sequence from the privacy-preserving mechanism. It is assumed that the adversary is interested in learning the raw data sequence. His/Her inference model stems from the Bayesian posterior probability distribution, given all historical observations. For the instantaneous release setting, for a set of observations $\bold{y}_{1}^{k-1}$, the adversary's belief of $\bold{X}_{1}^{k}$ is defined as:
\begin{align}
    \beta^I_{k}(\bold{x}_1^k|\bold{y}_1^{k-1}) = \operatorname{Pr}(\bold{X}_{1}^{k} = \bold{x}_{1}^{k} | \bold{Y}_{1}^{k-1} = \bold{y}_{1}^{k-1}). 
\end{align}
We denote this posterior belief as the belief state of the adversary in the instantaneous release setting.
For the batched release setting, similarly, the adversary's belief of $\bold{O}_{1}^{l}$ after receiving $\bold{r}_{1}^{l-1}$ is 
\begin{align}
    \beta^B_{l}(\bold{o}_1^l|\bold{r}_1^{l-1}) = \operatorname{Pr}(\bold{O}_{1}^{l} = \bold{o}_{1}^{l} | \bold{R}_{1}^{l-1} = \bold{r}_{1}^{l-1}),
\end{align}
which is defined as the adversary's belief state in the batched release setting. 

Besides, it is assumed that the adversary has the following abilities:
\begin{itemize}
    \item The adversary knows the initial prior distribution of the first data $P_1(\cdot)$ and the data correlation in the data sequence. Hence, the adversary's belief state can be updated from time to time. This is a common assumption used in nearly all information-theoretic approaches. Note that this is a worst-case assumption. According to \cite{LIP2}, if the adversary has a different knowledge from the true data prior distribution, the privacy leakage is decreased, and the reduced amount is proportional to the deviation from the true prior. 
    \item The adversary knows the privacy-protection mechanism, including the release setting, current data index (time step), and the perturbation parameters.
\end{itemize}

\begin{table*}[h!]
\centering
\renewcommand{\arraystretch}{2} % Adjusts the row height
\caption{Comparison of different Privacy Notions in the Online Setting}
\begin{tabular}{c|ccc|cc}
\hline
 & & \textbf{ Definition} & &~~~~~~~~~~~~~~~~~~~~~~~~~~~  \textbf{Mechanism} & \\
\hline
\textbf{Privacy Notion} & \textbf{Sequential Composability} & \textbf{Privacy Guarantee} & \textbf{Leveraging Correlation} & \textbf{Computation Complexity} & \textbf{Online Release}\\
\hline
LDP &Yes  &Worst-case  & No & $\mathcal{O}(|\mathcal{X}|)$\cite{Conti_release_LDP} & Yes\\
% \hline
Pufferfish &Yes, with Markovity   & Worst-case & Yes & $\mathcal{O}(T^3|\mathcal{X}|^3)$ \cite{Pufferfish_mec} & $No$\\
% \hline
MIP &{No}  & Average-case & Yes & $\mathcal{O}(|\mathcal{X}^{2T}|)$\cite{9715086} & Yes\\
% \hline
SIP (this paper) &Yes  & Worst-case & Yes &$\mathcal{O(}{|\mathcal{X}|})$ & Yes\\
\hline
\end{tabular}
\label{table:1}
\end{table*}

\section{Privacy Definition and comparison with existing privacy notions}
\subsection{Context-aware Sequence Information Leakages}
For data privacy, we start by defining the Sequence Information Leakage that occurs after a series of successive outputs. We then demonstrate how this leakage can be decomposed into each local leakage at various time steps.
\begin{defi}
The sequence information leakage (SIL) for releasing $\bold{Y}_1^T$ as privatized version of $\bold{X}_1^T$ is defined as: 
\begin{small}
\begin{equation}
\begin{aligned}
   \mathcal{L}(\bold{Y}_1^T\to{\bold{X}_1^T})=\max_{\bold{x}_1^T,\bold{y}_1^T\in{\mathcal{X}^T}}\left|\log\frac{\operatorname{Pr}(\bold{X}_1^T=\bold{x}_1^T|\bold{Y}_1^T=\bold{y}_1^{T})}{\operatorname{Pr}(\bold{X}^T_1=\bold{x}_1^T)}\right|.
\end{aligned}
\end{equation}
\end{small}
\end{defi}
This can be interpreted as the adversary's maximum information gain about $\bold{X}_1^T$ after observing the output sequence of $\bold{Y}_1^T$ compared to his prior knowledge. We say a privacy-preserving mechanism satisfies $(\epsilon,\delta)$-Sequence Information Privacy (SIP) if the following condition holds:
\begin{equation}
    \text{Pr}(\mathcal{L}(\bold{Y}_1^T\to{\bold{X}_1^T})> \epsilon)<\delta.
\end{equation}

We next define the instantaneous information leakage at each time step. 
\begin{defi}
The instantaneous information leakage (IIL) at time $k$ is defined as: 
\begin{small}
\begin{equation}
   \mathcal{L}({Y}_k\to{\bold{X}_1^k})=\max_{\bold{x}_1^k\in\mathcal{X}^k,y_k\in\mathcal{X}}\left|\log\frac{\operatorname{Pr}(\bold{X}^k_1=\bold{x}^k_1|\bold{Y}_1^k=\bold{y}_{1}^{k})}{\operatorname{Pr}(\bold{X}_1^k=\bold{x}_1^k|\bold{Y}_1^{k-1}=\bold{y}_{1}^{k-1})}\right|.
\end{equation}
\end{small}
\end{defi}
The operational meaning of the instantaneous information leakage can be interpreted as the adversary's additional belief on the data sequence $\bold{X}_1^k$ after observing $Y_k$ at time $k$ compared to the belief before taking this observation. Such a definition is presented based on an online data release manner, i.e., before observing $Y_k$, $k-1$
outputs have already been published. 

Similarly, we define the information leakage in the setting of batched data release:

\begin{defi}
The batched information leakage (BIL) when releasing the $l$-th batched data sequence is defined as: 
\begin{small}
\begin{equation}\label{eq:BIL}
   \mathcal{L}(\bold{R}_l\to{\bold{O}_1^l})=\max_{\bold{o}_1^l\in\mathcal{X}^{lw},\bold{r}_l\in\mathcal{X}^w}\left|\log{\frac{\operatorname{Pr}(\bold{O}^l_1=\bold{o}^l_1|\bold{R}_1^l=\bold{r}_{1}^{l})}{\operatorname{Pr}(\bold{O}_1^l=\bold{o}_1^l|\bold{R}_1^{l-1}=\bold{r}_{1}^{l-1})}}\right|.
\end{equation}
\end{small}
\end{defi}
With similar operational meaning as the instantaneous information leakage: the adversary's additional knowledge on $\bold{O}_1^l$ after observing $\bold{R}_l$.

Regarding the relationship between the IIL/BIL and SIL we have the following theorem:
\begin{thm}\label{thm1}
For a sequence of instantaneous-release privacy protection mechanisms $\mathcal{M}^I(1:T)$, such that $\mathcal{M}^I_k$ releases $Y_k$ at time $k$, if the IIL $\mathcal{L}(Y_k\to{\bold{X}_1^k})\le{\epsilon_k}$, $\forall{k}\in{1,2,..,T}$, then $\mathcal{M}^I(1:T)$ satisfies (${\sum_{k=1}^T\epsilon_k},0$)-SIP. Similarly, for a sequence of batched-release privacy protection mechanisms $\mathcal{M}^B(1:\tau)$, if each BIL satisfies $\mathcal{L}(\bold{R}_l\to{\bold{O}_1^l})\le{\epsilon_l}$, $\forall{l}\in{1,2,..,\tau}$, then $\mathcal{M}^B(1:\tau)$ satisfies (${\sum_{l=1}^{\tau}\epsilon_l},0$)-SIP.
\end{thm}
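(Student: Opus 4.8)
The plan is to prove the instantaneous-release statement by a telescoping (chain-rule) decomposition of the posterior-to-prior ratio, and then to observe that the batched-release statement follows by the identical argument after substituting the per-batch quantities $(\mathbf{O}_l,\mathbf{R}_l)$ for the per-step quantities $(X_k,Y_k)$.

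First I would fix an arbitrary realization $\mathbf{x}_1^T,\mathbf{y}_1^T$ (zero-probability conditioning events may be discarded, since they do not affect the maximum) and write the ratio inside the SIL as a telescoping product
\[
\frac{\operatorname{Pr}(\mathbf{X}_1^T=\mathbf{x}_1^T\mid\mathbf{Y}_1^T=\mathbf{y}_1^T)}{\operatorname{Pr}(\mathbf{X}_1^T=\mathbf{x}_1^T)}
=\prod_{k=1}^{T}\frac{\operatorname{Pr}(\mathbf{X}_1^T=\mathbf{x}_1^T\mid\mathbf{Y}_1^k=\mathbf{y}_1^k)}{\operatorname{Pr}(\mathbf{X}_1^T=\mathbf{x}_1^T\mid\mathbf{Y}_1^{k-1}=\mathbf{y}_1^{k-1})},
\]
where the $k=1$ denominator is understood as the unconditional prior. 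Taking absolute logarithms and using the triangle inequality then reduces the problem to bounding the absolute log of each factor by $\epsilon_k$.

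The crux is to identify the $k$-th factor with the ratio appearing in the IIL at time $k$, namely $\operatorname{Pr}(\mathbf{X}_1^k\mid\mathbf{Y}_1^k)/\operatorname{Pr}(\mathbf{X}_1^k\mid\mathbf{Y}_1^{k-1})$. To do this I would apply Bayes' rule (with $\mathbf{Y}_1^{k-1}=\mathbf{y}_1^{k-1}$ held fixed as background) to both ratios, rewriting the first as $\operatorname{Pr}(Y_k=y_k\mid\mathbf{X}_1^T=\mathbf{x}_1^T,\mathbf{Y}_1^{k-1}=\mathbf{y}_1^{k-1})/\operatorname{Pr}(Y_k=y_k\mid\mathbf{Y}_1^{k-1}=\mathbf{y}_1^{k-1})$ and the second as the same expression with $\mathbf{X}_1^k=\mathbf{x}_1^k$ in place of $\mathbf{X}_1^T=\mathbf{x}_1^T$. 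These two coincide precisely because the model assumes $X_{k+t}\independent Y_k\mid\{\mathbf{X}_1^k,\mathbf{Y}_1^{k-1}\}$ for all $t\ge1$ (the release at time $k$ does not depend on future data), so conditioning additionally on the tail $\mathbf{X}_{k+1}^T$ leaves the law of $Y_k$ unchanged; moreover, that common conditional probability is exactly the mechanism $a_k^I(y_k\mid x_k,\mathbf{x}_1^{k-1},\mathbf{y}_1^{k-1})$. Consequently each telescoping factor has absolute log at most $\epsilon_k$ by the IIL hypothesis, and summing over $k$ yields $\big|\log[\operatorname{Pr}(\mathbf{X}_1^T=\mathbf{x}_1^T\mid\mathbf{Y}_1^T=\mathbf{y}_1^T)/\operatorname{Pr}(\mathbf{X}_1^T=\mathbf{x}_1^T)]\big|\le\sum_{k=1}^T\epsilon_k$. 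Since this holds for every realization, $\mathcal{L}(\mathbf{Y}_1^T\to\mathbf{X}_1^T)\le\sum_{k=1}^T\epsilon_k$ deterministically, so the event that the leakage exceeds $\sum_{k=1}^T\epsilon_k$ has probability zero, which is exactly the claimed $(\sum_{k=1}^T\epsilon_k,0)$-SIP guarantee.

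I expect the identification step above to be the only non-routine part: the SIL budgets the leakage about the \emph{entire} stream $\mathbf{X}_1^T$, whereas each IIL budgets only the leakage about the \emph{prefix} $\mathbf{X}_1^k$, and reconciling the two is exactly where the ``no dependence on the future'' assumption is consumed; everything else is Bayes' rule, telescoping, and the triangle inequality. For the batched case, the same chain-rule argument run over $l=1,\dots,\tau$, now invoking $\mathbf{O}_{l+t}\independent\mathbf{R}_l\mid\{\mathbf{O}_1^l,\mathbf{R}_1^{l-1}\}$ and the mechanism $a_l^B$, delivers $(\sum_{l=1}^{\tau}\epsilon_l,0)$-SIP.
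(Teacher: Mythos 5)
Your proposal is correct and is essentially the paper's own argument: the paper likewise decomposes the posterior-to-prior ratio into a product of per-step factors (peeling recursively from $k=T$ down rather than writing the telescoping product up front) and consumes the assumption $X_{k+t}\independent Y_k\mid\{\mathbf{X}_1^k,\mathbf{Y}_1^{k-1}\}$ at exactly the point you identify, to replace the full-stream conditional with the prefix conditional so that each factor becomes the IIL at time $k$. The only difference is presentational (recursive peeling of the $X$-prefix versus telescoping over the $Y$-conditioning followed by Bayes' rule), so no substantive comparison is needed.
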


Theorem 1 posits that the privacy budget, as it pertains to a sequence of mechanisms, decomposes linearly in relation to the amount of data disclosed. Additionally, the aggregation of local leakages contributes to the global sequence leakage. Importantly, the introduction of a minor failure probability, denoted as $\delta$, allows for the achievement of sub-linear growth in the privacy budget. This property is summarized in the following Theorem.

% Which means if the instantaneous leakage at each time $k$ is bounded by $\epsilon_k$, $\forall{k}\in{1,2,..,T}$, the sequence leakage is bounded by $\sum_{k=1}^T\epsilon_k$.
% Similarly, the relationship between batched information leakage and sequence information leakage is shown in the following proposition.

% Suppose at each time step $k$, the information is released by mechanism $\mathcal{M}_k(\bold{x}_1^k,\bold{y}_1^{k-1})$, which can be represented by the probability of $\text{Pr}(Y_{k}=y_k|\bold{X}_1^k=\bold{x}_1^k,\bold{Y}_1^{k-1}=\bold{y}_1^{k-1})$.  Define the perturbation probabilities $\text{Pr}(Y_{k}=y_k|\bold{X}_1^k=\bold{x}_1^k,\bold{Y}_1^{k-1}=\bold{y}_1^{k-1})$ as $ a_{k}(y_{k}| x_{k}, x_{1}^{k-1}, y_{1}^{k-1}) $.

\begin{thm}\label{coro1}
    The sequence of mechanisms $\mathcal{M}^I(1:T)$ in Theorem \ref{thm1} satisfies $(T\epsilon(e^{\epsilon}-1)+\sqrt{T}\epsilon\sqrt{2\ln(1/\delta)},\delta)$-SIP; similarly, the mechanisms $\mathcal{M}^B(1:\tau)$ satisfy $(\tau\epsilon(e^{\epsilon}-1)+\tau\epsilon\sqrt{2\ln(1/\delta)},\delta)$-SIP.
\end{thm}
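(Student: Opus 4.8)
The plan is to prove the $(\epsilon,\delta)$ statement as the analogue of the Dwork--Rothblum--Vadhan advanced composition theorem, upgrading the triangle‑inequality argument behind Theorem~\ref{thm1} (the ``basic composition'') to a concentration argument. The first step is to record the telescoping identity that already underlies Theorem~\ref{thm1}: because the causal structure $X_{k+t}\independent{Y_k}\mid\{\bold{X}_1^{k},\bold{Y}_1^{k-1}\}$ forces $X_k\independent\bold{Y}_1^{k-1}\mid\bold{X}_1^{k-1}$, the factor $\operatorname{Pr}(x_k\mid\bold{x}_1^{k-1})$ cancels between numerator and denominator in $\operatorname{Pr}(\bold{x}_1^k\mid\bold{y}_1^{k-1})/\operatorname{Pr}(\bold{x}_1^k)$, and unwinding this gives
\begin{equation}\label{eq:telescope}
\log\frac{\operatorname{Pr}(\bold{x}_1^T\mid\bold{y}_1^T)}{\operatorname{Pr}(\bold{x}_1^T)}=\sum_{k=1}^T\ell_k,\qquad \ell_k:=\log\frac{\operatorname{Pr}(\bold{x}_1^k\mid\bold{y}_1^k)}{\operatorname{Pr}(\bold{x}_1^k\mid\bold{y}_1^{k-1})},
\end{equation}
with $|\ell_k|\le\epsilon_k\le\epsilon$ by the IIL hypothesis. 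Whereas Theorem~\ref{thm1} bounds the left side of \eqref{eq:telescope} by the triangle inequality, here I would instead treat $\ell_1,\dots,\ell_T$ as random variables --- functions of the random release $\bold{Y}_1^T$ for the worst‑case fixed input $\bold{x}_1^T$ --- and control their sum by concentration.

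Next I would set up a martingale. Fix $\bold{X}_1^T=\bold{x}_1^T$, draw $\bold{Y}_1^T$ from the true conditional law, and let $\mathcal{F}_k=\sigma(\bold{Y}_1^k)$. Two facts are needed. (i) Boundedness: $|\ell_k|\le\epsilon$, already noted. (ii) A conditional‑mean bound: rewriting $\ell_k=\log\frac{\operatorname{Pr}(y_k\mid\bold{x}_1^k,\bold{y}_1^{k-1})}{\operatorname{Pr}(y_k\mid\bold{y}_1^{k-1})}$ by Bayes' rule, the conditional expectation $\mathbb{E}[\ell_k\mid\mathcal{F}_{k-1}]$ equals the KL divergence between $\operatorname{Pr}(\cdot\mid\bold{x}_1^k,\bold{y}_1^{k-1})$ and $\operatorname{Pr}(\cdot\mid\bold{y}_1^{k-1})$ (using $Y_k\independent\bold{X}_{k+1}^{T}\mid\{\bold{X}_1^k,\bold{Y}_1^{k-1}\}$ to identify the law of $Y_k$ given the past); since the log‑ratio of those two distributions is exactly the $\epsilon$‑bounded quantity $\ell_k$, the well‑known estimate $D_{\mathrm{KL}}(p\|q)\le\epsilon(e^{\epsilon}-1)$ for pairs with $\epsilon$‑bounded log‑ratio yields $\mathbb{E}[\ell_k\mid\mathcal{F}_{k-1}]\le\epsilon(e^{\epsilon}-1)$. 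Hence $S_m:=\sum_{k=1}^{m}\big(\ell_k-\mathbb{E}[\ell_k\mid\mathcal{F}_{k-1}]\big)$ is a martingale with increments of size $\mathcal{O}(\epsilon)$ and per‑step conditional drift at most $\epsilon(e^{\epsilon}-1)$.

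Then I would apply a tail bound to $S_T$. Either Azuma--Hoeffding, or --- to recover the sharp leading constant --- the per‑step moment‑generating‑function estimate of the original advanced‑composition analysis, gives $\operatorname{Pr}\big[\sum_{k=1}^T\ell_k> T\epsilon(e^{\epsilon}-1)+z\big]\le\exp\big(-z^2/(2T\epsilon^2)\big)$; choosing $z=\sqrt{T}\,\epsilon\sqrt{2\ln(1/\delta)}$ makes the right side $\delta$, and the same estimate applied to $-\ell_k$ handles the opposite tail. Consequently, except on a $\bold{Y}_1^T$‑event of probability at most $\delta$, $\big|\log\frac{\operatorname{Pr}(\bold{x}_1^T\mid\bold{y}_1^T)}{\operatorname{Pr}(\bold{x}_1^T)}\big|\le T\epsilon(e^{\epsilon}-1)+\sqrt{T}\,\epsilon\sqrt{2\ln(1/\delta)}$, which is the claimed $(\,T\epsilon(e^{\epsilon}-1)+\sqrt{T}\epsilon\sqrt{2\ln(1/\delta)},\,\delta\,)$-SIP guarantee for $\mathcal{M}^I(1:T)$. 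The batched statement follows by re‑running the entire argument over the $\tau$ batches, with $\bold{O}_l,\bold{R}_l$ and the BIL bound in place of $X_k,Y_k$ and the IIL bound, and with $\tau$ replacing $T$ throughout.

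I expect the main obstacle to be obtaining the advertised constants in the concentration step: a naive Azuma bound on increments of width $2\epsilon$ degrades the deviation term to $2\sqrt{2T\ln(1/\delta)}\,\epsilon$, so recovering $\sqrt{2T\ln(1/\delta)}\,\epsilon$ requires the finer per‑coordinate MGF bound that simultaneously exploits the $\mathcal{O}(\epsilon^2)$ drift and the $\mathcal{O}(\epsilon)$ range, exactly as in the classical advanced‑composition proof. A secondary point of care is the $(\epsilon,\delta)$-SIP bookkeeping --- the two‑sided tail argument, the worst‑case versus averaged treatment of $\bold{x}_1^T$, and ensuring the overall failure probability is charged as $\delta$ (absorbing the factor‑two from adding the two tails into constants).
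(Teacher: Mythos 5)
Your proposal follows essentially the same route as the paper's own proof: decompose the sequence leakage into the per-step log-ratios $c_k=\log(P_k/Q_k)$ with $|c_k|\le\epsilon$, bound their (conditional) mean by $\epsilon(e^{\epsilon}-1)$ via the bounded-log-ratio KL estimate, and apply an Azuma--Hoeffding tail bound with deviation $\sqrt{T}\epsilon\sqrt{2\ln(1/\delta)}$. If anything you are more careful than the paper on the two points it glosses over --- conditioning the means on $\mathcal{F}_{k-1}$ so that Azuma genuinely applies, and using the interval form of the Hoeffding lemma (increments confined to a window of width $2\epsilon$) to recover the stated constant.
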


{Remark: Theorem 2 is similar to the sequential composition of LDP, and the proof is done in a similar way.} Theorem 1 and Theorem 2 effectively break down a global task into manageable local goals. Specifically, to design either the instantaneous $\mathcal{M}^I$ or batched $\mathcal{M}^B$ mechanism at each moment under a total SIP budget, it suffices to limit the IIL or BIL. Detailed proof of Theorem 1 and Theorem 2  are provided in the appendix.

% \section{Comparison with existing  privacy notions}
% In this section, we compare the privacy notions proposed in this paper to other existing ones.

\subsection{Comparison with Existing Privacy Notions}

\textbf{Local Differential Privacy:} The decentralized version of DP, Local Differential Privacy (LDP)\cite{Dwork2008}, has gained much attention since its introduction. It adopts a similar structure as Differential Privacy but considers the input as each individual's data, so the privacy-utility tradeoff of each individual is customizable. The definition of LDP, when adapted to the sequential data release model, can be summarized as follows:

\begin{defi}
    A privacy protection mechanism $\mathcal{M}$, is said to be $\epsilon$-local differentially private for the sequence of $\bold{X}_1^T$, if for all $\bold{x}_1^T, \tilde{\bold{x}}_1^T\in\{\mathcal{X}\}_1^T$, and for all $\bold{y}_1^T\in \text{Range}(\mathcal{M}(\bold{X}_1^T))$, 
    \begin{equation}\label{eq:dp}
        \frac{\text{Pr}(\bold{Y}_1^T=\bold{y}_1^T|\bold{X}_1^T=\bold{x}_1^T)}{\text{Pr}(\bold{Y}_1^T=\bold{y}_1^T|\tilde{\bold{X}}_1^T=\tilde{\bold{x}}_1^T)}\in{[e^{-\epsilon},e^{\epsilon}]}.
    \end{equation}
\end{defi}

The relationship between $\epsilon$-LDP and $\epsilon$-SIP is shown in the following Theorem:
\begin{thm}
    If a privacy-preserving mechanism $\mathcal{M}$ satisfies $\epsilon$-LDP, then it satisfies $\epsilon$-SIP. Conversely, if $\mathcal{M}$ satisfies  ${\epsilon}$-SIP, then it satisfies $2\epsilon$-LDP.
\end{thm}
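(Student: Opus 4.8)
The plan is to rewrite both notions in terms of a single quantity, the normalized likelihood $a(\mathbf{y}_1^T\mid\mathbf{x}_1^T)/\operatorname{Pr}(\mathbf{Y}_1^T=\mathbf{y}_1^T)$ where $a(\mathbf{y}_1^T\mid\mathbf{x}_1^T):=\operatorname{Pr}(\mathbf{Y}_1^T=\mathbf{y}_1^T\mid\mathbf{X}_1^T=\mathbf{x}_1^T)$, and then compare the two conditions. First I would apply Bayes' rule to observe that, whenever the relevant prior is positive, $\frac{\operatorname{Pr}(\mathbf{X}_1^T=\mathbf{x}_1^T\mid\mathbf{Y}_1^T=\mathbf{y}_1^T)}{\operatorname{Pr}(\mathbf{X}_1^T=\mathbf{x}_1^T)}=\frac{a(\mathbf{y}_1^T\mid\mathbf{x}_1^T)}{\operatorname{Pr}(\mathbf{Y}_1^T=\mathbf{y}_1^T)}$. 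Consequently, $\epsilon$-SIP is \emph{equivalent} to the statement that $a(\mathbf{y}_1^T\mid\mathbf{x}_1^T)/\operatorname{Pr}(\mathbf{Y}_1^T=\mathbf{y}_1^T)\in[e^{-\epsilon},e^{\epsilon}]$ for all sequences $\mathbf{x}_1^T,\mathbf{y}_1^T$ in the support.

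For the direction $\epsilon$-LDP $\Rightarrow$ $\epsilon$-SIP, I would expand the output marginal as $\operatorname{Pr}(\mathbf{Y}_1^T=\mathbf{y}_1^T)=\sum_{\tilde{\mathbf{x}}_1^T}a(\mathbf{y}_1^T\mid\tilde{\mathbf{x}}_1^T)\operatorname{Pr}(\mathbf{X}_1^T=\tilde{\mathbf{x}}_1^T)$. The LDP condition says each term $a(\mathbf{y}_1^T\mid\tilde{\mathbf{x}}_1^T)$ lies in $[e^{-\epsilon}a(\mathbf{y}_1^T\mid\mathbf{x}_1^T),\,e^{\epsilon}a(\mathbf{y}_1^T\mid\mathbf{x}_1^T)]$; since the prior weights sum to one, the convex combination $\operatorname{Pr}(\mathbf{Y}_1^T=\mathbf{y}_1^T)$ stays in the same interval, which after dividing gives exactly the equivalent form of $\epsilon$-SIP from the previous paragraph.

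For the converse $\epsilon$-SIP $\Rightarrow$ $2\epsilon$-LDP, I would fix any two input sequences $\mathbf{x}_1^T,\tilde{\mathbf{x}}_1^T$ and any output $\mathbf{y}_1^T$, and write the LDP ratio as $\frac{a(\mathbf{y}_1^T\mid\mathbf{x}_1^T)}{a(\mathbf{y}_1^T\mid\tilde{\mathbf{x}}_1^T)}=\frac{a(\mathbf{y}_1^T\mid\mathbf{x}_1^T)/\operatorname{Pr}(\mathbf{Y}_1^T=\mathbf{y}_1^T)}{a(\mathbf{y}_1^T\mid\tilde{\mathbf{x}}_1^T)/\operatorname{Pr}(\mathbf{Y}_1^T=\mathbf{y}_1^T)}$. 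By the SIP characterization each factor lies in $[e^{-\epsilon},e^{\epsilon}]$, so their ratio lies in $[e^{-2\epsilon},e^{2\epsilon}]$, i.e. $2\epsilon$-LDP holds.

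The substance of the argument is just these two short Bayes manipulations; the only real care needed is bookkeeping at the boundary — ensuring $\operatorname{Pr}(\mathbf{Y}_1^T=\mathbf{y}_1^T)>0$ whenever some compatible $a(\mathbf{y}_1^T\mid\mathbf{x}_1^T)>0$ with $\mathbf{x}_1^T$ in the prior support, and treating sequences of zero prior mass (for which the SIL ratio is vacuous) separately. Under the standing assumption that $P_1$ and the correlation kernel assign positive probability to the sequences in question, these degenerate cases do not arise. I would also append a remark that the factor $2$ in the converse is order-tight, which is consistent with the paper's claim that $\epsilon$-SIP is \emph{strictly} stronger than $2\epsilon$-LDP rather than equivalent to $\epsilon$-LDP.
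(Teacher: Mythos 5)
Your proposal is correct and follows essentially the same route as the paper's own proof: both reduce the SIP ratio via Bayes' rule to the normalized likelihood $\operatorname{Pr}(\mathbf{Y}_1^T=\mathbf{y}_1^T\mid\mathbf{X}_1^T=\mathbf{x}_1^T)/\operatorname{Pr}(\mathbf{Y}_1^T=\mathbf{y}_1^T)$, prove the forward direction by expanding the output marginal as a prior-weighted convex combination bounded termwise by the LDP condition, and prove the converse by writing the LDP ratio as a product of two SIP ratios. Your added attention to zero-probability sequences is a minor refinement the paper leaves implicit.
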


LDP provides worst-case privacy protection, which means the condition in \eqref{eq:dp} must hold for every possible $\bold{x}_1^T, \tilde{\bold{x}}_1^T,$ and $\bold{y}_1^T$. {Note that, the $2 \epsilon$ bound from LDP is in fact an advantage of the SIP notion since it also bounds worst-case leakage while considering context (close to the privacy guarantee of LDP), however, the utility-privacy tradeoff is much better than LDP due to the exploitation of data prior/context in our mechanisms.} LDP is also sequentially decomposable by applying independent mechanisms \cite{Composition}, and can be adapted in the online release setting. Formally, for a sequence of privacy-preserving mechanisms $\mathcal{M}(1:T)$, if each mechanism $\mathcal{M}$ satisfies $\epsilon_k$-LDP for $X_k$. Then $\mathcal{M}(1:T)$ satisfies $\sum_{k=1}^T \epsilon_k$-LDP for $\bold{X}_1^T$\cite{Dwork_DPcontinualOb}.
 When implementing LDP with sequential composition, the total privacy budget could drastically inflate if each local mechanism is using a budget that is reasonable for ensuring utility. Conversely, the utility of each local data release could significantly diminish if a more restrained budget is applied to maintain robust privacy protection.

 Recent studies have begun to incorporate data correlation into their model assumptions {to enhance the utility-privacy tradeoff provided by the LDP mechanism}. For example, Cao et al. proposed a method to quantify DP leakage for data with temporal correlations \cite{Quantify_DP}. They developed DP mechanisms for location aggregations under temporal correlation \cite{DP_mec}, under the assumption that data correlation can be discerned by an adversary. Subsequently, they define a subset of the output's support to decrease the DP sensitivity. However, such an approach may result in an increased failure probability of DP, also the correlation itself is still not leveraged in the mechanism design. Further studies \cite{Quantify_LP, prolong_DP, 6949097, CGM, 9488899} also explored scenarios where adversaries possess knowledge of data correlations. These works primarily concentrate on creating novel mechanisms to protect a single user's privacy by extending DP. Nevertheless, unlike the setting of this paper, these studies consider sequential data release in an offline setting.
In contrast, Wang et al. put forward an online data release mechanism in \cite{Conti_release_LDP}, satisfying either DP or LDP, depending on the specific circumstances. However, still, their proposed noise-adding mechanisms for each instantaneous data point operate independently.

% \subsection{Comparison with }

\textbf{Pufferfish Privacy:} Another privacy notion that provides privacy protection over a set of self-defined secrets is Pufferfish privacy. When adapted into the data release model, Pufferfish privacy is defined as:

\begin{defi}($\epsilon$-Local Pufferfish Privacy \cite{Kifer2012ARA})
Given set of potential secrets $\mathbb{S}$, a set of discriminative pairs $\mathbb{S}_{pairs}$, a set of data evolution scenarios $\mathscr{P}_{\{\mathcal{X}\}_1^T,\mathcal{S}}$, and a privacy parameter $\epsilon\in{{R}^+}$,
an (potentially randomized) algorithm $\mathcal{M}$ satisfies $\epsilon$-PufferFish ($\mathbb{S}$, $\mathbb{S}_{pairs}$, $\mathscr{P}_{\{\mathcal{X}\}_1^T,\mathcal{S}}$) privacy if
\begin{itemize}
    \item for all possible outputs $\bold{y}_1^T \in \text{range}(\mathcal{M})$,
    \item for all pairs $(s_i,s_j)\in\mathbb{S}_{pairs}$ of potential secrets,
    \item for all distributions $\textbf{P}_{\{\mathcal{X}\}_1^T,\mathcal{S}}\in \mathscr{P}_{\{\mathcal{X}\}_1^T,\mathcal{S}}$ for which $\operatorname{Pr}(s_i| \textbf{P}_{\{\mathcal{X}\}_1^T,\mathcal{S}})\neq{0}$ and $\operatorname{Pr}(s_j|\textbf{P}_{\{\mathcal{X}\}_1^T,\mathcal{S}})\neq{0}$
\end{itemize}
the following holds:
\begin{equation*}
    e^{-\epsilon}\le\frac{\operatorname{Pr}(\mathcal{M}(\bold{X}_1^T)=\bold{y}_1^T|\textbf{P}_{\{\mathcal{X}\}_1^T,\mathcal{S}},s_i)}{\operatorname{Pr}(\mathcal{M}(\bold{X}_1^T)=\bold{y}_1^T|\textbf{P}_{\{\mathcal{X}\}_1^T,\mathcal{S}},s_j)}\le{e^{\epsilon}}.
\end{equation*}
\end{defi}
The relationship between Pufferfish Privacy and SIP isn't directly deducible as they operate under different assumptions. Pufferfish assumes the possibility of multiple data generation scenarios, captured by $\textbf{P}_{{\mathcal{X}}_1^T,\mathcal{S}}$. Conversely, SIP presumes that the correlation among data is given. In the context of streaming data release, these correlations can naturally be learned from previous releases, thus SIP is more suitable for the online release setting.

Pufferfish privacy further distinguishes itself by protecting a latent variable $S$ that correlates with the data stream, as opposed to the data itself. This model has been the subject of other studies, primarily from an information-theoretic perspective, such as those in \cite{7918623, 8262832, 8438536}. Contrarily, SIP postulates that each individual data value is privately sensitive. As a result of this assumption, Pufferfish isn't as intuitively decomposable like LDP and SIP, { without adopting additional assumptions of Markovity in the data sequence.}

Among the existing literature on Pufferfish Privacy, \cite{Pufferfish_mec} proposed a privacy protection mechanism that also considers data correlation. The proposed mechanism operates under the assumption that the data distribution adheres to the Markov Quilt properties. This premise simplifies data correlation and reduces computational complexity, while also allowing the mechanism to be sequentially composable. However, the data release mechanism only functions when the secrets are sequentially obtained, whereas the data sequence is predetermined. {This assumption renders the mechanism unsuitable for an online setting. Furthermore, the algorithm necessitates an exhaustive search of all possible combinations of proximate and distant nodes in the Markov Quilt, resulting in only a modest reduction in complexity.}

\textbf{Information Theoretical Approaches:} We next compare SIP with privacy notions borrowed from information theory. The first definition to compare with is Mutual Information Privacy (MIP), which is measured by the mutual information between the input and the output of the privacy-protection mechanism:
\begin{defi}\cite{7498650}
    A mechanism $\mathcal{M}$ satisfies $\epsilon$-MIP for some $\epsilon\in{\mathbf{R}^+}$, if the mutual information between $\bold{X}_1^T$ and $\bold{Y}_1^T$ satisfies $I(\bold{X}_1^T;\bold{Y}_1^T)\le{\epsilon}$.
\end{defi}
It is evident that a privacy-preserving mechanism $\mathcal{M}$ based on SIP, guaranteeing $\mathcal{L}(\bold{Y}_1^T\to{\bold{X}_1^T})\le {\epsilon}$, would also ensure that the mutual information $I(\bold{X}_1^T;\bold{Y}_1^T)\le{\epsilon}$. This is due to the mutual information being a statistical average of the SIP. MIP provides a relatively weak average-case privacy guarantee. { Also, MIP is not sequentially composable: after applying the chain rule, $I(\bold{X}_1^T;\bold{Y}_1^T)\le{\epsilon}$ can only be decomposed into bounding each $I(\bold{X}_1^T; Y_t|\bold{Y}_1^{t-1})$, which depends on the whole input sequence.}  Nevertheless, as mutual information or conditional mutual information intuitively captures data correlation and can be easily decomposed using the chain rule, MIP has been extensively explored as a privacy metric for time-series data, as demonstrated in \cite{10.1145/2714576.2714577}.

Several studies on privacy-preserving online data release, such as \cite{9153837,9715086}, have also capitalized on data correlation, releasing aggregated location data online while ensuring individual or group privacy measured by MIP is constrained. As mutual information functions are convex, they can conveniently be integrated into optimization problems. The obfuscation mechanism at different time stamps is selected by a reinforcement learning algorithm. {However, the complexity is still high depending on the length of the sequence.}

SIP can be perceived as an online sequential version of local Information Privacy (LIP)\cite{Centrlized_IP, LIP1,LIP2}, defined as the ratio between the prior and posterior probabilities following observation of the output from the mechanism. In \cite{LIP1}, comprehensive results on mechanism design based on LIP are offered, showing superior utility-privacy trade-offs compared to LDP, MIP, and Pufferfish. However, none of these mechanisms take into account the context of online sequential data release. 

Considering the four typical challenges for sequential data release in the online setting: utility-privacy tradeoff, sequential composability, leveraging correlation, and low computation complexity, we summarize different privacy notions described in this section and how these notions address the four challenges in Table \ref{table:1}.

%%%%%%%%%%%%%%%%%%%%%%%%%%%%%%%%%%%%%%%%%%%%%

\section{Utility Privacy Tradeoff for Instantaneous Release}
\subsection{Problem Formulation}

In this paper, we define data utility as the expected distance between the arbitrary Quality of Service (QoS) function of the input ($Q(X)$) and output ($Q(Y)$). In the instantaneous release model, the utility is gauged by the expected distance between $Q(X_k)$ and $Q(Y_k)$ at each time stamp $k$. This measurement is known as the Instantaneous Expected Distance (IED):

\begin{equation}\label{eq:IED}
E\left[\mathsf{D}(Q(X_k),Q(Y_k)\big|\bold{Y}_1^{k-1}=\bold{y}_1^{k-1}\right].
\end{equation}

In \eqref{eq:IED}, $Q$ signifies the query function of $X$ and $Y$ that depends on the specific application, while $\mathsf D(a,b): ({R},{R})\to{{R}^+}$ denotes a distortion or distance measure between $a, b$. The expectation ${E}[\cdot]$ is taken over both the underlying distribution of the data $P_X(x)$, as well as over the randomness of the mechanism.
The expected distance between $Q(X)$ and $Q(Y)$ represents a general type of utility measurement. For instance, in a Location-Based Service (LBS), $Q(X)=X$, and the Euclidean distance between $X$ and $Y$ is typically used to gauge performance: $\textsf{Utility}= -{E}[(X-Y)^2]$. Another example is histogram estimation, where the aim is to ascertain how many people belong to each data category or classification according to users' data value. In this case, $Q(X)$ is an indicator function, and the absolute distance utility function can be written as $\textsf{Utility}= -\sum_{i=1}^K {E}[|\mathbbm{1}_{{X\in{S_{i}}}}-\mathbbm{1}_{{Y\in{S_{i}}}}|]$. These examples demonstrate that for different applications, the utility function can be adapted by modifying the $Q$ function and the distortion function $\mathsf D(\cdot,\cdot)$.

The online mechanism we explore in this paper concentrates on minimizing the IED defined in Eq. \eqref{eq:IED}, subject to IIL constraints i.e., the problem is defined as:

\begin{equation}\label{tradeoff1}
    \begin{aligned}
        & \min E\left[\mathsf{D}(Q(X_k),Q(Y_k)\big|\bold{Y}_1^{k-1}=\bold{y}_1^{k-1}\right],\\
        & \text{Such that} ~\mathcal{L}({Y}_k\to{\bold{X}_1^k})\le{\epsilon_l}, ~\forall{k\in[1, T]}.
    \end{aligned}
\end{equation}

% In the following sections, we illustrate the methodology for designing mechanisms that optimize the utility-privacy tradeoff.
% Firstly, we introduce the Conditional Randomized Response (CRR) perturbation mechanism and demonstrate how to apply this mechanism to limit each local leakage. 
% Finally, we leverage an example to elucidate the core concepts and underlying intuitions.
% \subsection{Utility-Privacy tradeoff for instantaneous setting}

\textbf{Privacy metric:} By Bayes rule, the privacy metric in the IIL can be expressed as:
\begin{equation}\label{cons}
\begin{aligned}
&\frac{\text{Pr}(\bold{X}_1^k=\bold{x}_1^k|\bold{Y}_1^k=\bold{y}_1^{k})}{\text{Pr}(\bold{X}^k_1=\bold{x}_1^k|\bold{Y}_1^{k-1}=\bold{y}_1^{k-1})}\\
    =&\frac{\text{Pr}(Y_k=y_k|\bold{X}_1^k=\bold{x}_1^k,\bold{Y}_1^{k-1}=\bold{y}_1^{k-1})}{\text{Pr}(Y_k=y_k|\bold{Y}_1^{k-1}=\bold{y}_1^{k-1})}\\
    =&\frac{ a^I_{k}(y_{k}| \bold{x}_{1}^{k}, \bold{y}_{1}^{k-1})}{\sum_{\bold{\tilde{x}}_1^k\in{\mathcal{X}^k}}a^I_{k}(y_{k}| \bold{\tilde{x}}_1^k, \bold{y}_{1}^{k-1})\beta^I_k(\bold{\tilde{x}}_1^k|\bold{y}_1^{k-1})}.
\end{aligned}
\end{equation}
Note that, the term $\beta_k(\bold{\tilde{x}}_1^k|\bold{y}_1^{k-1}) = \text{Pr}(\bold{X}_1^k=\bold{\tilde{x}}_1^k|\bold{Y}_1^{k-1}=\bold{y}_1^{k-1})$ can be further expressed as:

\begin{small}
\begin{equation}\label{eq:belief_update}
    \begin{aligned}
        &\text{Pr}(\bold{X}_1^k=\bold{\tilde{x}}_1^k|\bold{Y}_1^{k-1}=\bold{y}_1^{k-1})\\
        =&\text{Pr}({X}_k=\tilde{x}_k|\bold{X}_1^{k-1}=\bold{\tilde{x}}_1^{k-1})\text{Pr}(\bold{X}_1^{k-1}=\bold{\tilde{x}}_1^{k-1}|\bold{Y}_1^{k-1}=\bold{y}_1^{k-1})\\
        =&C^{\tilde{x}_k}_{\bold{x}_1^{k-1}}\frac{ a^I_{k-1}(y_{k-1}|\bold{\tilde{x}}_{1}^{k-1}, y_{1}^{k-1})\beta^I_{k-1}(\bold{\tilde{x}}_1^{k-1}|\bold{y}_1^{k-2})}{\sum_{\bold{\bar{x}}_1^{k-1}}a^I_{k-1}(y_{k-1}|\bold{{\bar{x}}}_{1}^{k-1}, y_{1}^{k-2})\beta^I_{k-1}(\bold{{\bar{x}}}_1^{k-1}|\bold{y}_1^{k-2})},\\
    \end{aligned}
\end{equation}
\end{small}
where the nominator is the perturbation probability, and the denominator is a linear combination of the perturbation parameters with coefficients of some calculable posteriors. 

\textbf{Utility function:} The utility function can be further expressed as:
\begin{equation}\label{utility_IED}
\begin{aligned}
&E\left[D(Q(X_k)-Q(Y_k))|\big|\bold{Y}_1^{k-1}=\bold{y}_1^{k-1}\right]\\
% =&\sum_{x,y}D(Q(x)-Q(y))\sum_{\bold{x}_1^{k-1}}\text{Pr}(X_k=x,\bold{X}_1^{k-1}=\bold{x}_1^{k-1},Y_k=y_k|\bold{Y}_1^{k-1}=\bold{y}_1^{k-1})\\
% =&\sum_{x,y}D(Q(x)-Q(y))\sum_{\bold{x}_1^{k-1}}\text{Pr}(Y_k=y_k|\bold{Y}_1^{k-1}=\bold{y}_1^{k-1},X_k=x,\bold{X}_1^{k-1}=\bold{x}_1^{k-1})\text{Pr}(X_k=x,\bold{X}_1^{k-1}=\bold{x}_1^{k-1}|\bold{Y}_1^{k-1}=\bold{y}_1^{k-1})\\
=&\sum_{x,y}D(Q(x)-Q(y))\\
&~~~\cdot\sum_{\bold{x}_1^{k-1}}a^I_k(y_k|\bold{x}_1^{k-1},x,\bold{y}_1^{k-1})\beta^I_k(\bold{x}_1^{k-1},x|\bold{y}_1^{k-1}).\\
\end{aligned}
\end{equation}

As a result, the utility-privacy tradeoff can be expressed as:
\begin{small}
\begin{equation}\label{tradeoff12}
    \begin{aligned}
        & \min \eqref{utility_IED},~
         \text{Such that} ~\eqref{cons}\in[e^{-\epsilon_k},e^{\epsilon_k}].
    \end{aligned}
\end{equation}
\end{small}
Hence, the local optimization problem can be represented as a function of the data correlation, previous data release policy, and belief state.

% which means considering historical $\bold{X}_1^{k-2}$ will not affect the values of $\beta^s_k({x}_{k-1},x|\bold{y}_1^{k-1})$. as a result, for any $a_k(y_k|\bold{x}_1^{k-1},x,\bold{y}_1^{k-1})$, there is an policy $a^s_k(y_k|{x}_{k-1},x,\bold{y}_1^{k-1})$, such that:
% \begin{equation}
%     \text{IEAD}(a_k(y_k|\bold{x}_1^{k-1},x,\bold{y}_1^{k-1}))=\text{IEAD}(a^s_k(y_k|{x}_{k-1},x,\bold{y}_1^{k-1})).
% \end{equation}
% Hence, simplifying $a_k(y_k|\bold{x}_1^{k-1},x,\bold{y}_1^{k-1})$ to $a^s_k(y_k|{x}_{k-1},x,\bold{y}_1^{k-1})$ does not affect the utility. This completes the proof of theorem 2.

\subsection{Conditional Randomize Response Mechanism:}

Next, we're introducing the Conditional Randomized Response (CRR) perturbation mechanism. Then, we derive our optimal mechanism based on CRR.

In a context-aware setting, the input to a privacy-preserving mechanism consists of the data values and their probabilistic distribution. For the online release setting, this distribution isn't fixed but instead depends on all previously observable releases. Therefore, the data distribution can be symbolized as the belief state defined earlier. This observation leads us to the CRR mechanism, which takes as input the current data value, as well as the belief state.  

An illustrative example of the CRR mechanism is shown in Fig. 2. This mechanism perturbs $X_1$ according to probabilities designed based on the adversary's prior knowledge. After $Y_1$ is output, the adversary's prior knowledge about $X_2$ shifts to $\text{Pr}(X_2|Y_1)$. Hence, the mechanism designs the perturbation parameters at time 2 in accordance with the distribution of $\text{Pr}(X_2|Y_1)$.

\begin{figure}[t]
    \centering
    \includegraphics[width=0.45\textwidth]{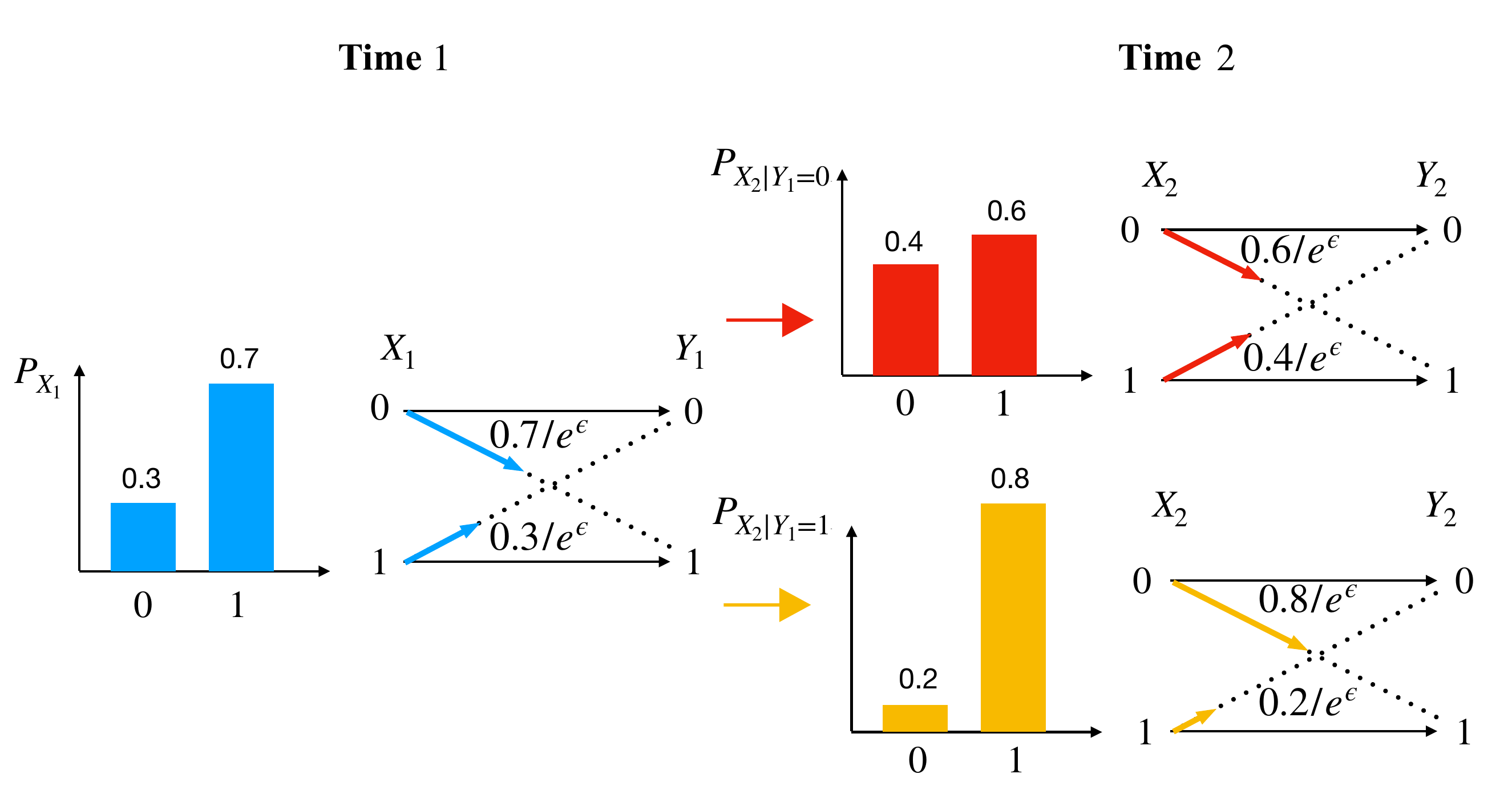}
    \label{Fig_CRR}
    \caption{Illustration of Conditional Randomize Response (CRR) mechanism, each mechanism parameters depend on data prior as well as previous release.}
\end{figure}

% As the perturbation probabilities at each time step depends on previous outputs as well as the data correlations, as a result, the noise required at each time step are also correlated. Define the noise at time $k$ as $N_k(X_1^k,Y_1^{k-1})$, thus in a discrete data horizon with cardinality of $|\mathcal{D}|$, the perturbation mechanism can be rewritten as a noisy channel:
% \begin{equation*}
%   Y_k=X_k+N_k(X_1^k,Y_1^{k-1}) \mod{|\mathcal{D}|}.
% \end{equation*}
% Equivalently, $N_k$ can be viewed as a random variable with the support of $\mathcal{D}$, with $\text{Pr}(N_k=j)$ as  $\sum_{i=0}^\infty\text{Pr}(N_k=i|\mathcal{D}|+j)$. Define the noise power as $|N_k|=E[N^2_k]=\sum_{i=0}^{|\mathcal{D}|}i^2Pr(N_k=i)$.

% Intuitively, to achieve optimal utility, the mechanism should add least amount of noise while satisfying the privacy constraints. Observe that to minimize $|N_k|$ is equivalent to minimize all the probabilities of $\text{Pr}(N_k=i)$ where $i\neq{0}$. taking the value of $N_k$ in, the optimal parameters at time $k$ follows the optimization problem of,  $\forall{i\in[1,|\mathcal{D}-1|]}$:
% \begin{equation}\label{opt}
% \begin{aligned}
% &\min q(x_k+i\mod |\mathcal{D}|)_{\bold{x}_1^k}^{\bold{y}_1^{k-1}};\\  
%  \text{s.t.}~&Eq.\eqref{cons},\\
%  &q(x_k+i\mod |\mathcal{D}|)_{\bold{x}_1^k}^{\bold{y}_1^{k-1}}\in[0,1],\\
%  &\sum_{i=0}^{|D|-1}q(x_k+i\mod |\mathcal{D}|)_{\bold{x}_1^k}^{\bold{y}_1^{k-1}}=1.
%  \end{aligned}
% \end{equation}

We next derive the closed-form optimal solutions which minimize IED at each time step, which follows the next theorem.
\begin{thm}\label{thm3}
For the class of utility function {with distance $D$} satisfying the following properties: 1. non-negativity: $D(X,Y)\ge 0$; 2. Identity of Indiscernibles: $D(X,X)=0$, 3. Symmetry: $D(X,Y)=D(Y,X)$, and 4. Triangle Inequality: $D(X,Y)\le D(X,Z) + D(Z,Y)$, the following perturbation parameters at time $k$ is the optimal solution of the problem defined in \eqref{tradeoff12},
$$  a^I_{k}(y_k| \bold{x}_{1}^k, \bold{y}_{1}^{k-1})  =\left\{
\begin{aligned}
&\frac{\beta^I(x_{k}|\bold{y}_1^{k-1})}{e^{\epsilon_k}}  & ~\text{if}~ y_k\neq x_k \\
&1-\frac{1-\beta^I(x_{k}|\bold{y}_1^{k-1})}{e^{\epsilon_k}}  & ~\text{if}~ y_k=x_k \\
\end{aligned}
\right.$$
\end{thm}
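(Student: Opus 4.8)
The plan is to collapse the optimization \eqref{tradeoff12}, which a priori ranges over mechanisms indexed by the whole history $(\bold x_1^k,\bold y_1^{k-1})$, into a small linear program over a single $|\mathcal X|\times|\mathcal X|$ stochastic matrix. First I would show that, without loss of optimality, $a^I_k(y_k\mid\bold x_1^k,\bold y_1^{k-1})$ may be taken to depend on the input only through $x_k$: given any feasible mechanism, replace it by its conditional average $\bar a(y_k\mid x_k):=\sum_{\bold x_1^{k-1}}a^I_k(y_k\mid\bold x_1^{k-1},x_k,\bold y_1^{k-1})\operatorname{Pr}(\bold x_1^{k-1}\mid X_k=x_k,\bold y_1^{k-1})$. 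Since the leakage ratio in \eqref{cons} equals $a^I_k(y_k\mid\cdot)/\operatorname{Pr}(Y_k=y_k\mid\bold y_1^{k-1})$ and the admissible interval $[e^{-\epsilon_k},e^{\epsilon_k}]$ is convex, the averaged mechanism is still feasible; it leaves the induced output law $p_y:=\operatorname{Pr}(Y_k=y\mid\bold y_1^{k-1})$ unchanged, and by \eqref{utility_IED} it leaves the objective unchanged as well, since that objective sees the mechanism only through $\bar a$ and the marginal belief $\pi_x:=\beta^I(x\mid\bold y_1^{k-1})$. With $D_{xy}:=\mathsf D(Q(x),Q(y))$ the problem becomes: minimize $\sum_{x,y}\pi_x D_{xy}\,\bar a(y\mid x)$ over row-stochastic $\bar a$ subject to $e^{-\epsilon_k}p_y\le\bar a(y\mid x)\le e^{\epsilon_k}p_y$ for all $x,y$, where $p_y=\sum_x\pi_x\bar a(y\mid x)$.

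Next I would use the four metric hypotheses on $\mathsf D$. Non-negativity and identity of indiscernibles give $D_{xy}\ge0$ and $D_{xx}=0$, so the objective charges only off-diagonal mass and is strictly decreasing in each diagonal entry $\bar a(x\mid x)$. For a fixed output law $p$, all diagonal entries are simultaneously maximized by pushing every off-diagonal entry down to its lower bound $e^{-\epsilon_k}p_y$, which yields $\bar a(y\mid x)=e^{-\epsilon_k}p_y$ for $y\ne x$ and $\bar a(x\mid x)=1-e^{-\epsilon_k}(1-p_x)$. Imposing self-consistency $p_y=\sum_x\pi_x\bar a(y\mid x)$ on this family collapses to $(1-e^{-\epsilon_k})p_y=(1-e^{-\epsilon_k})\pi_y$, forcing $p_y=\pi_y$; substituting back reproduces the claimed parameters, the off-diagonal probability being $\beta^I(y_k\mid\bold y_1^{k-1})/e^{\epsilon_k}$ — which is exactly what normalizes each row. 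Feasibility of this candidate then reduces to $e^{-\epsilon_k}\pi_x\le1-e^{-\epsilon_k}(1-\pi_x)\le e^{\epsilon_k}\pi_x$, the left inequality being trivial and the right one delimiting the regime (equivalently $e^{\epsilon_k}\ge(1-\pi_x)/\pi_x$ for each $x$) in which the stated optimum is attained.

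It then remains to certify that the candidate is globally optimal — minimizing the objective not just for its own induced output law but over all feasible mechanisms — and this is the step I expect to be the real obstacle, precisely because the privacy constraint is self-referential: the cap on each $\bar a(y\mid x)$ depends on $p_y$, which is itself a linear functional of all the entries, so the feasible set is not the box it superficially resembles and ``maximize every diagonal entry'' is only unambiguously correct once $p$ is pinned down. My plan is to first prove that every optimal solution must satisfy $p=\beta^I(\cdot\mid\bold y_1^{k-1})$ — combining the equalities $p_y=\sum_x\pi_x\bar a(y\mid x)$ with the established feasibility of the candidate in the assumed regime — after which the box structure is restored and one closes either via a standard exchange argument or by exhibiting KKT multipliers: at the candidate every off-diagonal lower bound is active, and the triangle inequality $D_{xy}\le D_{xz}+D_{zy}$ is what lets the dual multipliers be chosen so that all reduced costs vanish on the support and stay non-negative off it. Alternatively, one can invoke the single-shot Information-Privacy-optimal mechanism of \cite{LIP1} with the prior set to the belief state $\beta^I(\cdot\mid\bold y_1^{k-1})$, the first-step reduction being exactly what makes that substitution legitimate in the online setting; a final minor check is that in this regime it is the off-diagonal lower bounds, not the diagonal upper bound $\bar a(x\mid x)\le e^{\epsilon_k}p_x$, that bind.
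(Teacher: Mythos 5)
Your overall route matches the paper's in its essential structure: reduce to a linear program in the perturbation parameters, push the off-diagonal entries to the boundary of the privacy constraints, observe that the induced output law then coincides with the belief state, and certify global optimality via KKT. Your front-end reduction (replacing $a^I_k(y_k\mid\bold x_1^k,\bold y_1^{k-1})$ by its conditional average given $x_k$, and checking that feasibility, the output law, and the objective are all preserved) is a cleaner and more explicit justification of the paper's ``the mechanism does not depend on the historical raw data'' insight than the paper itself gives, and your observations that the theorem's off-diagonal entry must read $\beta^I(y_k\mid\bold y_1^{k-1})/e^{\epsilon_k}$ (else the rows do not normalize) and that the candidate is only feasible in the regime $\beta^I(x\mid\bold y_1^{k-1})\ge 1/(e^{\epsilon_k}+1)$ are both correct and worth making explicit.

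The genuine gap is that the one step you yourself flag as ``the real obstacle'' --- certifying global optimality over the self-referential feasible set $e^{-\epsilon_k}p_y\le\bar a(y\mid x)\le e^{\epsilon_k}p_y$ with $p_y=\sum_x\pi_x\bar a(y\mid x)$ --- is never carried out. You offer three alternative routes (prove every optimum has $p=\pi$, an exchange argument, or exhibit KKT multipliers), but none is completed: the claim that every optimal solution must satisfy $p=\pi$ does not follow from the self-consistency computation you performed, since that computation only applies to the one-parameter family in which all off-diagonal entries sit at their lower bounds, and the assertion that the triangle inequality makes the reduced costs non-negative is stated as an expectation rather than verified. This is precisely the content of the paper's proof: it writes the constraints $g_i$ as ratios whose gradients account for the dependence of the denominator on all entries, and then exhibits explicit multipliers $\mu_i\ge 0$ and $\lambda_j$ satisfying stationarity and complementary slackness, which is where the nonnegativity of $D$ and the remaining metric hypotheses actually enter. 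Until you either produce those multipliers or complete one of your alternative arguments, you have a well-motivated candidate and a correct reduction, but not a proof of optimality.
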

% \begin{prop}
% The optimal solutions also apply to the general utility definition.
% \end{prop}
Detailed proof is provided in the appendix in the supplementary document.
Key insights of the optimal solutions:
\begin{itemize}
    \item The optimal perturbation parameters are found at the boundaries of the privacy constraints, so that the least amount of noise is added while the privacy can be protected.
    \item The conditional probability of $Y_k$ is identical to that of $X_k$ given $\bold{Y}^{k-1}_1$, i.e., $\text{Pr}(X_k=x_k|\bold{Y}^{k-1}_1=\bold{y}^{k-1}_1)=\text{Pr}(Y_k=x_k|\bold{Y}^{k-1}_1=\bold{y}^{k-1}_1)$, which means the output at each time step always has the same marginal distribution of the input data.
    \item The mechanism does not depend on the historical raw data sequence. i.e. $\bold{x}_1^{T-1}$. This is due to the following two aspects: 1. the data utility at time $k$ only depends on $X_k$. 2. A simplification from $\bold{x}_1^{T-1}$ won't violate the privacy constraints.
\end{itemize}
\setlength{\textfloatsep}{0pt}
\begin{algorithm}[tbh]
	\small
	\caption{{SIP mechanism for instantaneous release}}
	\label{algo1}
	\begin{algorithmic}[1]
	\item Input: current time $k$, initial prior $P_1(x)$, data correlation $C^{\bold{u}}_{\bold{v}}$, historical release sequence $\bold{Y}_1^{k-1}$, current $X_k$, previous $\beta^I(x_{k-1}|\bold{y}_1^{k-2})$.
	\item Output: Instantaneous release $Y_k$
	\item if $k = 1$: 
        \item ~~~~Release $Y_1=y_1$ according to the following rule:
        \item ~~~~$y_1= x_1$ w.p. $1-(1-P_1(y_1))/e^{\epsilon}$;
        \item ~~~~other $y_1$ w.p. $P_1(y_1)/e^{\epsilon}$ .
	\item else: 
        \item ~~~~for all $x_k \in \mathcal{X}$:
        \item ~~~~~~~~update $\beta^I(x_{k}|y_{1}^{k-1})$ according to \eqref{eq:belief_update}:
	\item ~~~~Release $Y_k$ according to the following rule:
        \item ~~~~$y_k = x_k$ w.p.  $1-(1-\beta^I(y_k|\bold{y}_1^{k-1}))/e^{\epsilon}$, 
        \item ~~~~other $y_k$ w.p. $ \beta^I(y_k|\bold{y}_1^{k-1})/e^{\epsilon}$.
	\end{algorithmic}
\end{algorithm}

Given the optimal mechanism parameters, the algorithm for instantaneous release is shown in Alg. \ref{algo1}. As the calculating of the belief state takes constant time complexity, the computation complexity of Alg. \ref{algo1} is $\mathcal{O}(|\mathcal{X}|)$.

\begin{figure*}[htp]
\begin{small}
\centering 
{ \includegraphics[width=0.23\textwidth]{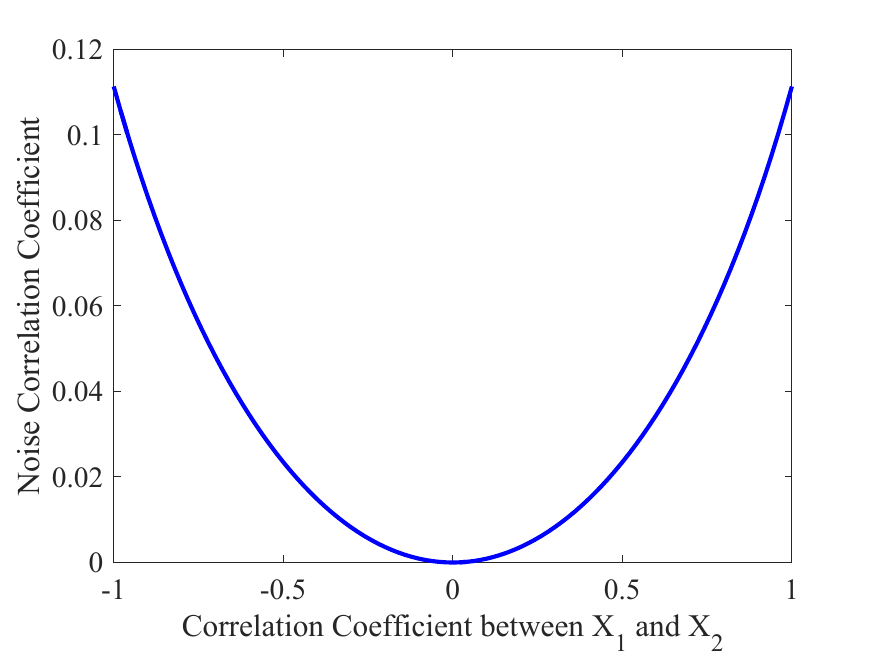} 
\label{Fig_corr} } 
{ \includegraphics[width=0.23\textwidth]{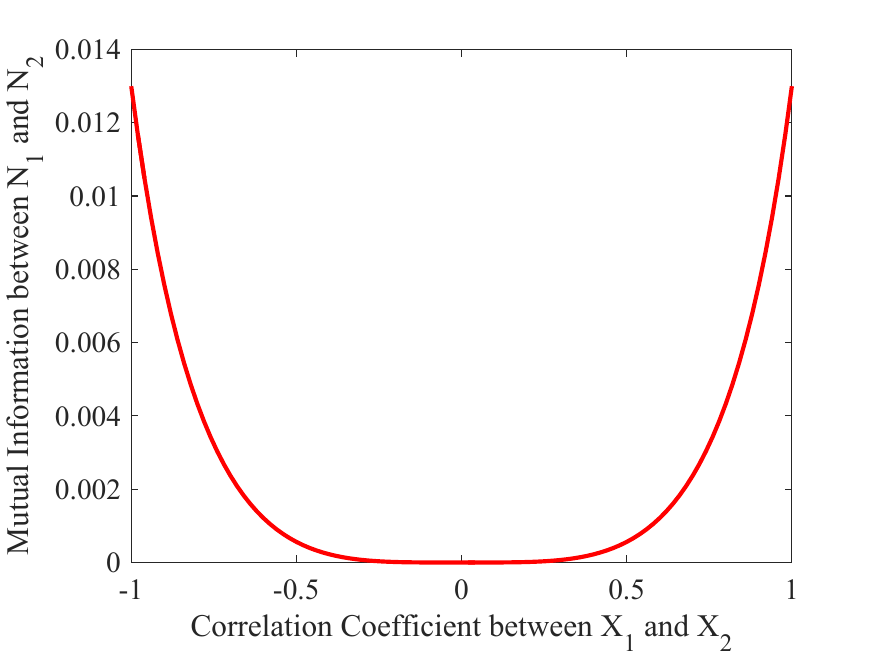} 
\label{Fig_corr2} } 
{ \includegraphics[width=0.23\textwidth]{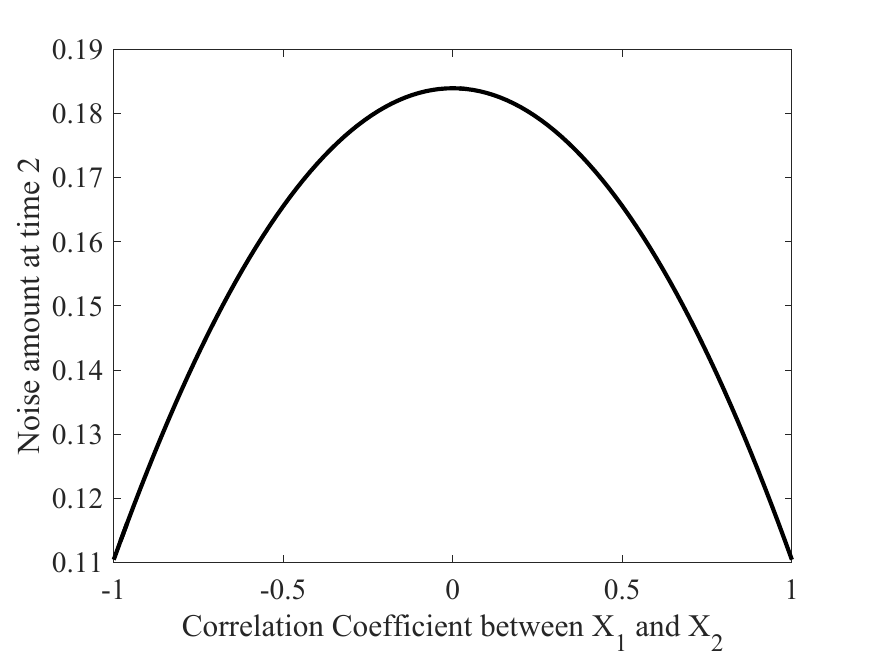} 
\label{Fig_corr2} } 
{ \includegraphics[width=0.23\textwidth]{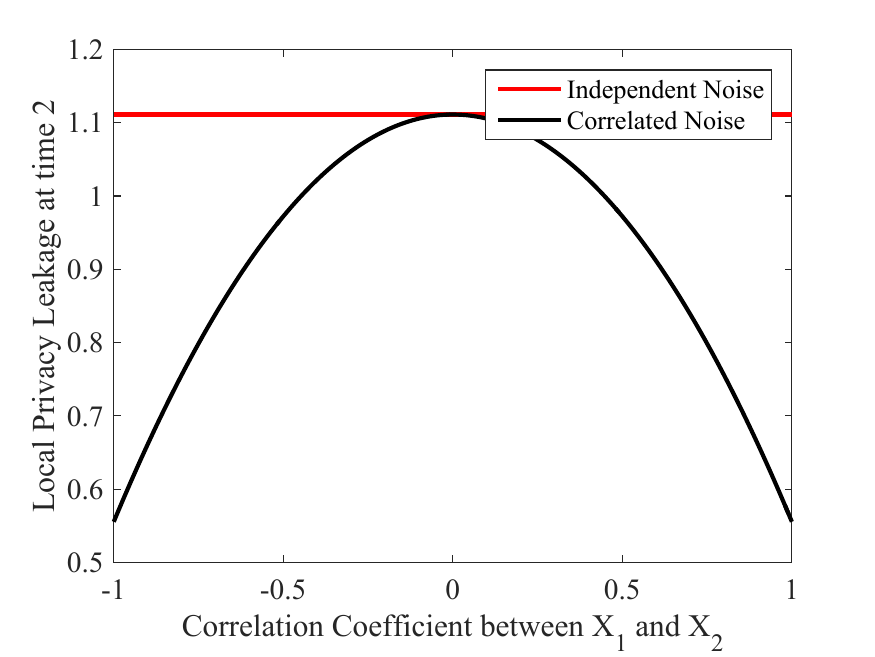} 
\label{Fig_corr2} }
\caption{Noise correlation, Mutual Information, Noise at time 2 and minimized leakage as a function of the correlation coefficient (when $P_1=0.5$)} 
\label{Noise_Corr} 
\end{small}
\end{figure*}
\begin{figure*}[htp]
\begin{small}
\centering 
{ \includegraphics[width=0.23\textwidth]{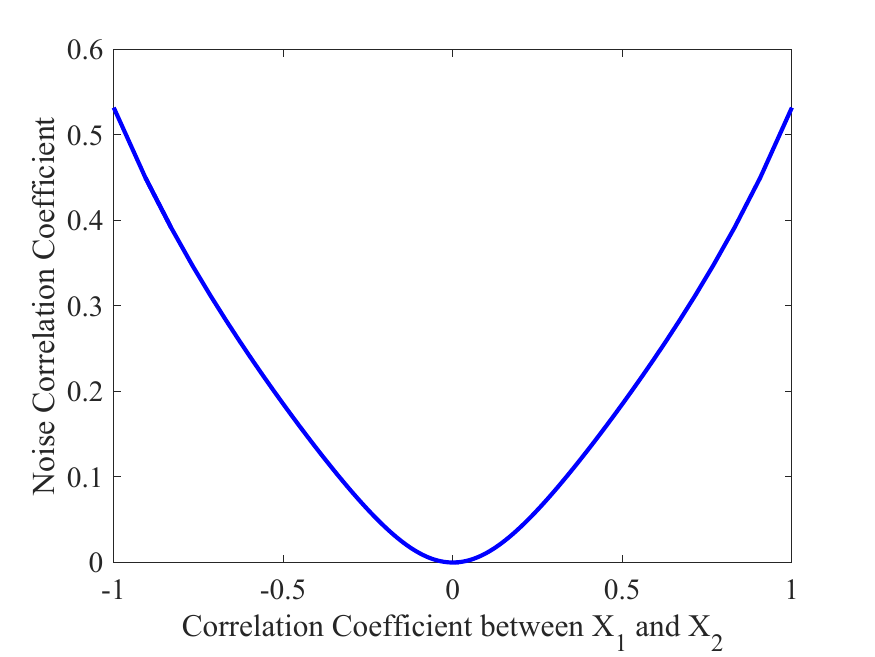} 
\label{Fig_corr} } 
{ \includegraphics[width=0.23\textwidth]{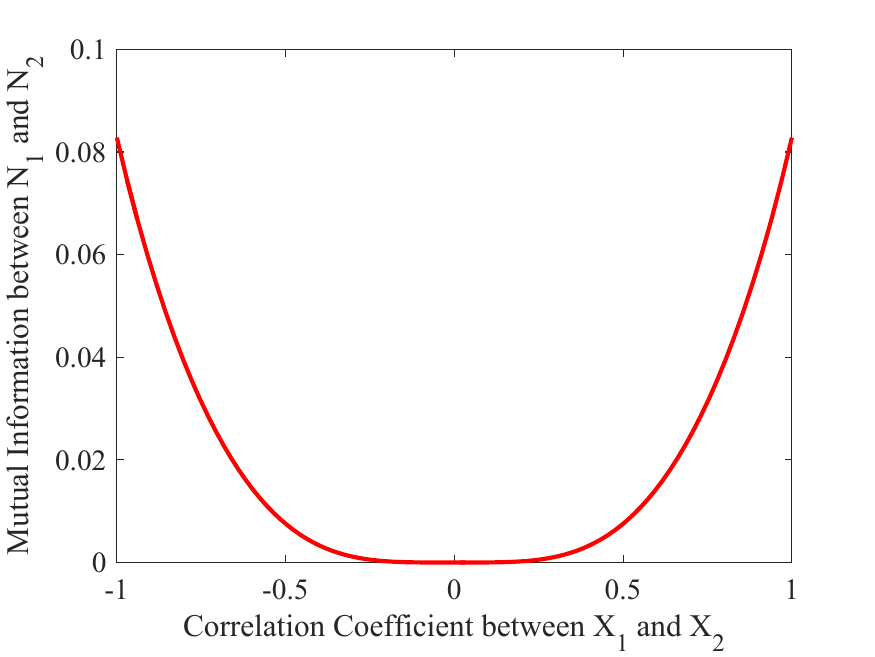} 
\label{Fig_corr2} } 
{ \includegraphics[width=0.23\textwidth]{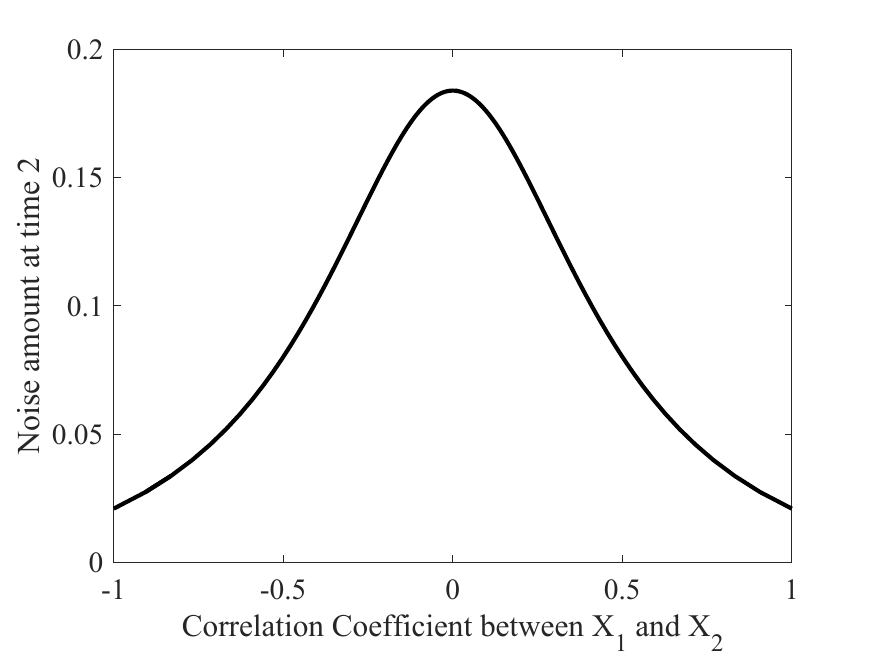} 
\label{Fig_corr3} }
{ \includegraphics[width=0.23\textwidth]{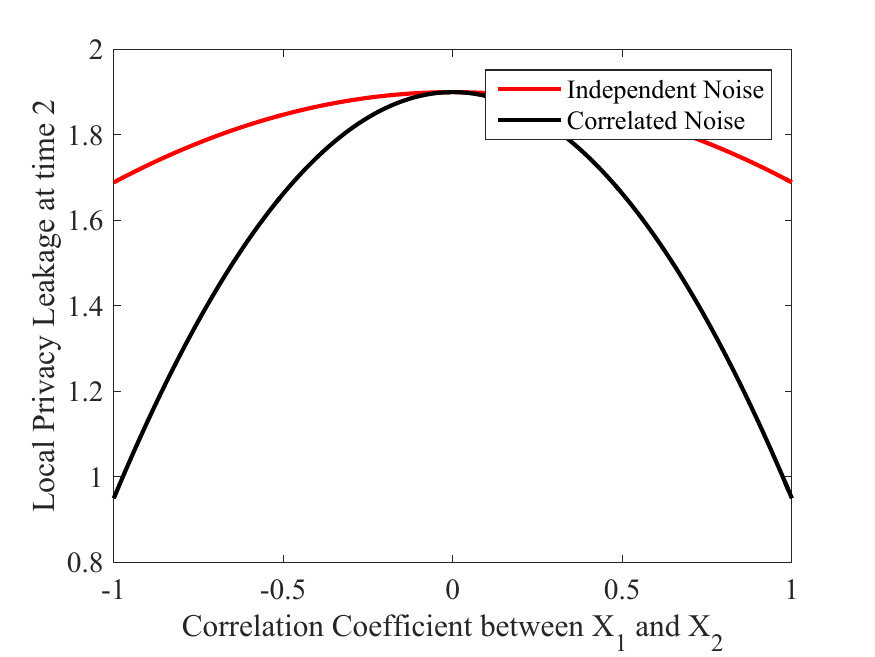} 
\label{Fig_corr4} }
\caption{Noise correlation, Mutual Information, Noise at time 2 and minimized leakage as a function of the correlation coefficient (when $P_1=0.95$)} 
\label{Noise_Corr} 
\end{small}
\end{figure*}

\subsection{Data Correlation Dependent Noise}

We present an illustrative example next to elucidate how data correlation influences the amount of noise added at different time steps. When data is correlated, the necessary noise at each timestamp also relies on one another.

\begin{exmpl}
Imagine releasing two correlated data points $X_1$ and $X_2$, and $X_1,X_2\in\{0,1\}$. Denote the prior distribution of $X_1$ as $P_1=\text{Pr}(X_1=1)$ and $1-P_1=\text{Pr}(X_1=0)$. We assume that the relationship between $X_1$ and $X_2$ is represented by a symmetric correlation channel, expressed as $\text{Pr}(X_2=1|X_1=0)=\text{Pr}(X_2=0|X_1=1)=\phi$. Prior to being published, $X_1$ and $X_2$ are perturbed to $Y_1$ and $Y_2$ respectively by the mechanisms $\mathcal{M}_1$ and $\mathcal{M}_2$. To ensure privacy, the local leakage of $\mathcal{L}_1(Y_1\to{X_1})\le{\epsilon_1}$ and $\mathcal{L}_2(Y_2\to{X_2,X_1})\le{\epsilon_2}$ must hold.
\end{exmpl}

% \subsubsection{Optimal Perturbation Parameters}
% At time step 1, the optimized LIP mechanism can be directly applied as the local leakage can be viewed as the definition of LIP. 
% By previous result, we have: $q(1)_{0}=\text{Pr}(Y_1=1|X_1=0)=\frac{P_1}{e^{\epsilon_1}}$, $q(0)_{1}=\text{Pr}(Y_1=0|X_1=1)=\frac{1-P_1}{e^{\epsilon_2}}$. At time step 2, the optimal parameters can be derived following theorem \ref{thm3} ($q$ is defined as $q(y_2)^{y_1}_{x_1,x_2}$):
% \begin{equation*}
%     \begin{aligned}
%         &q(1)^{1}_{00}=q(1)^{1}_{10}=\frac{\phi(1-P_1)+(1-\phi)(e^{\epsilon_1}-1+P_1)}{e^{\epsilon_1+\epsilon_2}}\\
%         &q(0)^{1}_{01}=q(0)^{1}_{11}=\frac{(1-\phi)(1-P_1)+\phi(e^{\epsilon_1}-1+P_1)}{e^{\epsilon_1+\epsilon_2}}\\
%         &q(1)^{0}_{00}=q(1)^{0}_{10}=\frac{\phi(e^{\epsilon_1  }-P_1)+(1-\phi)P_1}{e^{\epsilon_1+\epsilon_2}}\\
%         &q(0)^{0}_{01}=q(0)^{0}_{11}=\frac{(1-\phi)(e^{\epsilon_1}-P_1)+\phi P_1}{e^{\epsilon_1+\epsilon_2}}
%     \end{aligned}
% \end{equation*}
% \begin{rmk}
% When $X_1$ and $X_2$ are independent, implying $\phi=0.5$, all parameters equal $\frac{0.5}{\epsilon_2}$. Consequently, in situations where $X_1$ and $X_2$ are independent, the perturbation parameters for $X_2$ also do not rely on the values of $X_1$ and $Y_1$. When $\phi$ approaches either 0 or 1, the mechanism tends to release data according to their prior distribution.
% \end{rmk}

\subsubsection{Data Correlation and Noise correlation}
The correlation coefficient of of $X_1$ and $X_2$ can be expressed as:
\begin{equation*}
    \begin{aligned}
       \rho_{X_1,X_2}=&\frac{\sigma_{X_1X_2}}{\sigma_{X_1}\sigma_{X_2}}
       =\frac{E[X_1X_2]-E[X_1]E[X_2]}{\sqrt{\{E[X_1^2]-E^2[X_1]\}\{E[X_2^2]-E^2[X_2]\}}}.
    \end{aligned}
\end{equation*}
Taking values in, we have:
\begin{equation}
    \rho_{X_1,X_2}=\frac{(1-2\phi)\sqrt{P_1(1-P_1)}}{\sqrt{[P_1\phi+(1-P_1)(1-\phi)][(1-P_1)\phi+P_1(1-\phi)]}}
\end{equation}
Observe that when $\phi=0$, $\rho_{X_1,X_2}={1}$, when $\phi=1/2$, $\rho_{X_1,X_2}=0$; when $\phi=1$, $\rho_{X_1,X_2}=-{1}$. Thus by varying the value of $\phi$ from $0$ to $1$, the correlation coefficient also changes from positive to negative.

The noise $N$ at time 1 / time 2 in this example is defined as a binary random variable, such that $N=1$ means the data value is flipped before release; $N=0$ means the data is directly released. We refer to {$\text{Pr}(N=1)$} as the ``amount of noise". Our subsequent objective is to determine the correlation coefficient of $\rho_{N_1,N_2}$ (detailed derivation of $\rho_{N_1,N_2}$ as well as other parameters are shown in the Appendix).

\subsubsection{Relationship between $\rho_{X_1,X_2}$ and $\rho_{N_1,N_2}$}

We initially fix the value of $P_1$, then vary $\phi$ from $0$ to $1$ and observe changes in $\rho_{N_1,N_2}$. We know that $\rho_{x_1,x_2}$ also ranges from $-1$ to $1$.
Besides charting how noise correlation varies, we are also interested in identifying some specific cases. For instance, when $\rho_{X_1,X_2}=1$ or $\rho_{X_1,X_2}=-1$, we want to ascertain the value of $\rho_{N_1,N_2}$ and how it is influenced by the prior of $X_1$. Nevertheless, even if $\rho_{N_1,N_2}=0$, it doesn't conclusively prove independence. Thus, we consider calculating the mutual information between $N_1$ and $N_2$. If $I(N_1;N_2)=0$, we can conclude that $N_1$ and $N_2$ are independent.

Furthermore, we aim to demonstrate how the prior of $X_1$ and the correlation between $X_1$ and $X_2$ affect the amount of noise required at the second timestamp. The hope is that under the correlated release mechanism when data are more strongly correlated, the required amount of noise at the subsequent timestamp decreases.

Finally, we set an upper bound for the noise at time 2, and identify the optimal parameters that minimize the total privacy leakage of $X_1$ and $X_2$. For comparison, we also derive the parameters that minimize the leakage of $X_2$, which is equivalent to treating $X_1$ and $X_2$ as independent (with the prior of $X_2$ changing according to the correlations). The goal of this comparison is to illustrate that by considering data correlation, under a fixed amount of noise, we can further minimize the total leakage of the data stream. In other words, we are adding noise where it matters the most.

The results of our analysis are depicted in Fig. 3 and 4, where Fig. 3 shows the outcomes when $P_1=0.5$ and Fig. 4 presents the results when $P_1=0.95$.

From both Fig. 3 and Fig. 4, we observe a decrease in $\rho_{N_1,N_2}$ as $\rho_{X_1,X_2}$ approaches 0. This suggests that when the data are more correlated, the noise added is more dependent on the previous data release mechanisms. Conversely, when $\rho_{X_1X_2}=0$, $I(N_1;N_2)=0$, which implies that when data $X_1$ and $X_2$ are independent, the noise added at two different timestamps is also independent.
Another observation is that as $X_1$ and $X_2$ become more correlated, the required amount of noise at time 2 decreases correspondingly. This is because $Y_1$ already conveys a substantial amount of information about $X_2$, and the prior of $X_2$ is relatively more skewed than in scenarios where $X_1$ and $X_2$ are less correlated. 
When the amount of noise at time 2 is bounded, introducing correlated noise can further minimize the total privacy leakage. When $P_1=0.5$, the independent noise remains unchanged because the prior of $X_2$ does not fluctuate with $\phi$.
Upon comparing Fig. 3 and 4, we find that the intercept of Fig. 4 is greater than that of Fig. 3. This implies that when the prior is more uniformly distributed, the correlation of the noise decreases, while when the prior distribution of $P_1$ is more skewed, the correlation of the noise increases.

\section{Utility-Privacy tradeoff for batch setting}
Note that we cannot directly extend the optimal solution in Theorem \ref{thm3} to the release vector in the batched release setting, as the optimal parameters only maximize the probability of releasing one output that is identical to the input, which are not optimal for releasing vectors containing multiple data. Because they fail to enlarge the probabilities to release the sequence where most data are identical to the input but only a few are different. In this section, we first investigate the utility and privacy tradeoff in the batched release setting, followed by model simplification. Finally, we present our algorithm in the batched setting.

\subsection{Utility in the Batched Release Setting}

In the batched release setting, the utility measurement is termed the Batched Expected Distance (BED):

\begin{equation}\label{eq:BED}
E\left[\mathsf{D}(Q(\bold{O}_{l}),Q(\bold{R}_{l})\big|\bold{R}_1^{l-1}=\bold{r}_1^{l-1}\right].
\end{equation}

Similar to the instantaneous release setting, for batched release setting, the optimization problem that yields the utility-privacy  tradeoff is defined as:
\begin{equation}\label{tradeoff2}
    \begin{aligned}
        & \min E\left[\mathsf{D}(Q({\bold{O}}_{l}),Q(\bold{R}_{l})\big|\bold{R}_1^{l-1}=\bold{r}_1^{l-1}\right],\\
        & \text{Such that} ~\mathcal{L}(\bold{R}_l\to{\bold{O}_1^l})\le{\epsilon_l}, ~\forall{l\in[1,\mathsf \tau]}.
    \end{aligned}
\end{equation}
The privacy constraint of the batch setting can be expressed as:
\begin{equation}
    \frac{ a^B_{l}(r_{l}| \bold{o}_{1}^{l}, \bold{r}_{1}^{l-1})}{\sum_{\bold{\tilde{o}}_1^l}a^B_{l}(r_{l}| \bold{\tilde{o}}_1^l, \bold{r}_{1}^{l-1})\beta^B_l(\bold{\tilde{o}}_1^l|\bold{r}_1^{l-1})}.
\end{equation}
The belief state $\beta^B_l(\bold{{o}}_1^l|\bold{r}_1^{l-1})$ can be expressed as:
\begin{equation}
    C^{\tilde{o}_l}_{\tilde{\bold{o}}_1^{l-1}}\frac{ a^B_{l-1}(r_{l-1}|\bold{\tilde{o}}_{1}^{l-1}, \bold{r}_{1}^{l-1})\beta^B_{l-1}(\bold{\tilde{o}}_1^{l-1}|\bold{r}_1^{l-2})}{\sum_{\bold{\bar{o}}_1^{l-1}}a^B_{l-1}(r_{l-1}|\bold{{\bar{o}}}_{1}^{l-1}, \bold{r}_{1}^{l-2})\beta^B_{l-1}(\bold{{\bar{o}}}_1^{l-1}|\bold{r}_1^{l-2})},
\end{equation}
where the correlation term can be derived as:
\begin{equation}
\begin{aligned}
    C^{\bold{o}_l}_{\bold{o}_1^{l-1}}=&\text{Pr}(\bold{o}_l|\bold{o}_1^{l-1})\\
    % =&\text{Pr}\left(\bold{X}_{(l-1)w}^{lw}=\bold{x}_{(l-1)w}^{lw}\big| \bold{X}_{1}^{(l-1)w-1}=\bold{x}_{1}^{(l-1)w-1}\right)\\
    =&\prod_{i = (l-1)w+1}^{lw}\text{Pr}\left({X}_i=x_i\big| \bold{X}_{1}^{i-1}=\bold{x}_{1}^{i-1}\right).
\end{aligned}
\end{equation}
The utility function of BED can be further expressed as:
\begin{small}
\begin{equation}\label{utility_BED}
\begin{aligned}
&E\left[D(Q(\bold{O}_l)-Q(\bold{R}_l))|\big|\bold{R}_1^{l-1}=\bold{r}_1^{l-1}\right]\\
=&\sum_{\bold{o},\bold{r}}D(Q(\bold{o})-Q(\bold{r}))\sum_{\bold{o}_1^{l-1}}a^B_l(r_l|\bold{o}_1^{l-1},\bold{o},\bold{r}_1^{l-1})\beta^B_l(\bold{o}_1^{l-1},\bold{o}|\bold{r}_1^{l-1}).\\
\end{aligned}
\end{equation}
\end{small}
% The mechanism parameters can be further simplified by consider a limited historical batched data. The corresponding analysis is similar to that of the instantaneous release setting. We omit the intermediate steps and show the final form of the utility-privacy tradeoff as follows:

\textbf{Mechanism simplification}
Note that the mechanism parameter $a_l^B$ contains the whole raw data sequence, which makes the computational cost grow exponentially. Next, we introduce a subset of policies that requires a memory of length $L$:
\begin{equation}
\begin{aligned}\label{action3}
    &a_{l}^s(\bold{r}_{l}| \bold{o}^l_{l-L+1}, \bold{r}_{1}^{l-1}) =\\ &\operatorname{Pr}\left(\bold{R}_{l}=\bold{r}_{l} \mid \bold{O}_{l-L+1}^l=\bold{o}_{l-L+1}^l, \mathbf{R}_{1}^{l-1}= r_{1}^{l-1}\right).
\end{aligned}
\end{equation}
% Note that, compared with \eqref{action}, \eqref{action2} differs in the condition terms, where \eqref{action} depends on the whole histories of raw data sequence, \eqref{action2} depends only on the data of the past two time instances.

The next Theorem states that considering a subset of the policies \eqref{action3} {will not violate the privacy constraints}

\begin{thm}
 {For a batched release mechanism $\mathcal{M}^s_l$ that is parameterized by $a^s_l$, and satisfies the condition in \eqref{eq:condition},}
\begin{equation}\label{eq:condition}
    \frac{\operatorname{Pr}(\bold{O}_{l-L+1}^l=\bold{o}_{l-L+1}^l|\bold{R}_1^l=\bold{r}_{1}^{l})}{\operatorname{Pr}(\bold{O}_{l-L+1}^l=\bold{o}_{l-L+1}^l|\bold{R}_1^{l-1}=\bold{r}_{1}^{l-1})}\in[e^{-\epsilon},e^{\epsilon}],
\end{equation}
{it is sufficient to show $\mathcal{M}^s_l$ makes the BIL defined in \eqref{eq:BIL} upper bounded by $\epsilon$.}
\end{thm}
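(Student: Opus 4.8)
The plan is to show that the full-prefix likelihood ratio appearing inside the BIL is \emph{identical} to the length-$L$ windowed ratio in \eqref{eq:condition}; then the hypothesis transfers verbatim and the $\max/\log$ in \eqref{eq:BIL} is bounded by $\epsilon$ with no loss.

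First I would record the Markov property forced by restricting the mechanism's memory. Since $\mathcal{M}^s_l$ parameterized by \eqref{action3} produces $\bold{R}_l$ reading only $\bold{O}_{l-L+1}^l$ and $\bold{R}_1^{l-1}$ (together with fresh internal randomness), the conditional law of $\bold{R}_l$ does not depend on the earlier batches $\bold{O}_1^{l-L}$, i.e.
\[
\operatorname{Pr}(\bold{R}_l=\bold{r}_l \mid \bold{O}_1^l=\bold{o}_1^l,\bold{R}_1^{l-1}=\bold{r}_1^{l-1}) = a_l^s(\bold{r}_l \mid \bold{o}_{l-L+1}^l,\bold{r}_1^{l-1}),
\]
and marginalizing $\bold{O}_1^{l-L}$ against its conditional law given $(\bold{O}_{l-L+1}^l,\bold{R}_1^{l-1})$ shows the right-hand side also equals $\operatorname{Pr}(\bold{R}_l=\bold{r}_l \mid \bold{O}_{l-L+1}^l=\bold{o}_{l-L+1}^l,\bold{R}_1^{l-1}=\bold{r}_1^{l-1})$. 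In words: conditioning $\bold{R}_l$ on the full prefix and on the length-$L$ window yields the same distribution.

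Next I would apply Bayes' rule twice — once with $\bold{O}_1^l$, once with $\bold{O}_{l-L+1}^l$ — each time conditioned on $\bold{R}_1^{l-1}=\bold{r}_1^{l-1}$:
\[
\frac{\operatorname{Pr}(\bold{O}_1^l=\bold{o}_1^l \mid \bold{R}_1^l=\bold{r}_1^l)}{\operatorname{Pr}(\bold{O}_1^l=\bold{o}_1^l \mid \bold{R}_1^{l-1}=\bold{r}_1^{l-1})} = \frac{\operatorname{Pr}(\bold{R}_l=\bold{r}_l \mid \bold{O}_1^l=\bold{o}_1^l,\bold{R}_1^{l-1}=\bold{r}_1^{l-1})}{\operatorname{Pr}(\bold{R}_l=\bold{r}_l \mid \bold{R}_1^{l-1}=\bold{r}_1^{l-1})},
\]
\[
\frac{\operatorname{Pr}(\bold{O}_{l-L+1}^l=\bold{o}_{l-L+1}^l \mid \bold{R}_1^l=\bold{r}_1^l)}{\operatorname{Pr}(\bold{O}_{l-L+1}^l=\bold{o}_{l-L+1}^l \mid \bold{R}_1^{l-1}=\bold{r}_1^{l-1})} = \frac{\operatorname{Pr}(\bold{R}_l=\bold{r}_l \mid \bold{O}_{l-L+1}^l=\bold{o}_{l-L+1}^l,\bold{R}_1^{l-1}=\bold{r}_1^{l-1})}{\operatorname{Pr}(\bold{R}_l=\bold{r}_l \mid \bold{R}_1^{l-1}=\bold{r}_1^{l-1})}.
\]
By the Markov identity of the previous step the two right-hand sides coincide, hence so do the two left-hand sides. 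Then \eqref{eq:condition} bounds the second ratio in $[e^{-\epsilon},e^{\epsilon}]$, so the first ratio — which is exactly the quantity inside the $\log$ in \eqref{eq:BIL} — lies in $[e^{-\epsilon},e^{\epsilon}]$ for every $\bold{o}_1^l\in\mathcal{X}^{lw}$ and $\bold{r}_l\in\mathcal{X}^w$; taking $\log$, absolute value, and maximum gives $\mathcal{L}(\bold{R}_l\to\bold{O}_1^l)\le\epsilon$.

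I expect the only delicate point to be stating the Markov step cleanly: one must justify that truncating the memory genuinely makes $\bold{R}_l$ conditionally independent of $\bold{O}_1^{l-L}$ given $(\bold{O}_{l-L+1}^l,\bold{R}_1^{l-1})$, and interpret all conditionings on the support exactly as in the definition of BIL so that zero-probability events cause no trouble. Everything after that — the two Bayes expansions and the closing $\max/\log$ — is mechanical.
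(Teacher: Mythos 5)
Your proposal is correct and is essentially the paper's own argument: both hinge on Bayes' rule together with the fact that the restricted policy makes $\bold{R}_l$ conditionally independent of $\bold{O}_1^{l-L}$ given $(\bold{O}_{l-L+1}^l,\bold{R}_1^{l-1})$. The paper packages this as a factor $\mathsf{A}$ multiplying the windowed ratio and shows $\mathsf{A}=1$, whereas you reduce both the full-prefix and windowed ratios to the same likelihood ratio over $\bold{R}_l$ — a cosmetic reorganization of the same computation.
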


\begin{proof}
For the privacy constraints (one term):

% \begin{equation}
% \begin{aligned}
%     &\frac{\text{Pr}(\bold{X}^k_1=\bold{x}^k_1|\bold{Y}_1^k=\bold{y}_{1}^{k})}{\text{Pr}(\bold{X}_1^k=\bold{x}_1^k|\bold{Y}_1^{k-1}=\bold{y}_{1}^{k-1})}\\ = &\frac{\text{Pr}(\bold{X}^{k-L}_1=\bold{x}^{k-L}_1|\bold{Y}_1^k=\bold{y}_{1}^{k},\bold{X}_{k-L+1}^k=\bold{x}_{k-L+1}^k)}{\text{Pr}(\bold{X}_1^{k-L}=\bold{x}_1^{k-L}|\bold{Y}_1^{k-1}=\bold{y}_{1}^{k-1},\bold{X}_{k-L+1}^k=\bold{x}_{k-L+1}^k)}\frac{\text{Pr}(\bold{X}_{k-L+1}^k=\bold{x}_{k-L+1}^k|\bold{Y}_1^k=\bold{y}_{1}^{k})}{\text{Pr}(\bold{X}_{k-L+1}^k=\bold{x}_{k-L+1}^k|\bold{Y}_1^{k-1}=\bold{y}_{1}^{k-1})}.
% \end{aligned}
% \end{equation}

\begin{equation}
\begin{aligned}
    &\frac{\text{Pr}(\bold{O}^l_1=\bold{o}^l_1|\bold{R}_1^l=\bold{r}_{1}^{l})}{\text{Pr}(\bold{O}_1^l=\bold{o}_1^l|\bold{R}_1^{l-1}=\bold{r}_{1}^{l-1})}\\ 
    = &\mathsf{A}\cdot\frac{\text{Pr}(\bold{O}_{l-L+1}^l=\bold{o}_{l-L+1}^l|\bold{R}_1^l=\bold{r}_{1}^{l})}{\text{Pr}(\bold{O}_{l-L+1}^l=\bold{o}_{l-L+1}^l|\bold{R}_1^{l-1}=\bold{r}_{1}^{l-1})},
\end{aligned}
\end{equation}
where

% \begin{equation}
% \begin{aligned}
%     &\frac{\text{Pr}(\bold{X}^{k-L}_1=\bold{x}^{k-L}_1|\bold{Y}_1^k=\bold{y}_{1}^{k},\bold{X}_{k-L+1}^k=\bold{x}_{k-L+1}^k)}{\text{Pr}(\bold{X}_1^{k-L}=\bold{x}_1^{k-L}|\bold{Y}_1^{k-1}=\bold{y}_{1}^{k-1},\bold{X}_{k-L+1}^k=\bold{x}_{k-L+1}^k)}\\
%     =&\frac{\text{Pr}(\bold{X}^{k-L}_1=\bold{x}^{k-L}_1|\bold{Y}_1^{k-1}=\bold{y}_{1}^{k-1},\bold{X}_{k-L+1}^k=\bold{x}_{k-L+1}^k)\text{Pr}(Y_k=y_k|\bold{Y}_1^{k-1}=\bold{y}_{1}^{k-1},\bold{X}_{1}^k=\bold{x}_1^{k})}{\text{Pr}(Y_k=y_k|\bold{Y}_1^{k-1}=\bold{y}_{1}^{k-1},\bold{X}_{k-L+1}^k=\bold{x}_{k-L+1}^k)\text{Pr}(\bold{X}_1^{k-L}=\bold{x}_1^{k-L}|\bold{Y}_1^{k-1}=\bold{y}_{1}^{k-1},\bold{X}_{k-L+1}^k=\bold{x}_{k-L+1}^k)}\\
%     =&\frac{\text{Pr}(Y_k=y_k|\bold{Y}_1^{k-1}=\bold{y}_{1}^{k-1},\bold{X}_{1}^k=\bold{x}_1^{k})}{\text{Pr}(Y_k=y_k|\bold{Y}_1^{k-1}=\bold{y}_{1}^{k-1},\bold{X}_{k-L+1}^k=\bold{x}_{k-L+1}^k)}=1\\
% \end{aligned}
% \end{equation}

\begin{equation}
\begin{aligned}
    &\mathsf{A}\\=&\frac{\text{Pr}(\bold{O}^{l-L}_1=\bold{o}^{l-L}_1|\bold{R}_1^l=\bold{r}_{1}^{l},\bold{O}_{l-L+1}^l=\bold{o}_{l-L+1}^l)}{\text{Pr}(\bold{O}_1^{l-L}=\bold{o}_1^{l-L}|\bold{R}_1^{l-1}=\bold{r}_{1}^{l-1},\bold{O}_{l-L+1}^l=\bold{o}_{l-L+1}^l)}\\
    =&\frac{\text{Pr}(\bold{O}^{l-L}_1=\bold{o}^{l-L}_1|\bold{R}_1^{l-1}=\bold{r}_{1}^{l-1},\bold{O}_{l-L+1}^l=\bold{o}_{l-L+1}^l)}{\text{Pr}(\bold{R}_l=\bold{r}_l|\bold{R}_1^{l-1}=\bold{r}_{1}^{l-1},\bold{O}_{l-L+1}^l=\bold{o}_{l-L+1}^l)}\\
    &~~~~~~~~\cdot\frac{\text{Pr}(\bold{R}_l=\bold{r}_l|\bold{R}_1^{l-1}=\bold{r}_{1}^{l-1},\bold{O}_{1}^l=\bold{o}_1^{l})}{\text{Pr}(\bold{O}_1^{l-L}=\bold{o}_1^{l-L}|\bold{R}_1^{l-1}=\bold{r}_{1}^{l-1},\bold{O}_{l-L+1}^l=\bold{o}_{l-L+1}^l)}\\
    =&\frac{\text{Pr}(\bold{R}_l=\bold{r}_l|\bold{R}_1^{l-1}=\bold{r}_{1}^{l-1},\bold{O}_{1}^l=\bold{o}_1^{l})}{\text{Pr}(\bold{R}_l=\bold{r}_l|\bold{R}_1^{l-1}=\bold{r}_{1}^{l-1},\bold{O}_{l-L+1}^l=\bold{o}_{l-L+1}^l)}=1.\\
\end{aligned}
\end{equation}
The last equation holds because the new policy does not depend on the sequence of $\bold{O}_1^{l-L}$. Then, by considering the subset of policies, the BIL becomes:
\begin{equation}
\begin{aligned}
    &\mathcal{L}(\bold{R}_l\to{\bold{O}_{l-L+1}^l})=\\
    &\max_{\bold{o}_{l-L+1}^l, \bold{r}_l}\left|\log\frac{\text{Pr}(\bold{O}_{l-L+1}^l=\bold{o}_{l-L+1}^l|\bold{R}_1^l=\bold{r}_{1}^{l})}{\text{Pr}(\bold{O}_{l-L+1}^l=\bold{o}_{l-L+1}^l|\bold{R}_1^{l-1}=\bold{r}_{1}^{l-1})}\right|.
\end{aligned}
\end{equation}
\end{proof}

The utility function of BED, by adopting a policy in \eqref{action2}, can be expressed as:
\begin{small}
\begin{equation}\label{utility_action2}
\begin{aligned}
&E\left[D(Q(\bold{O}_l)-Q(\bold{R}_l))\big|\bold{R}_1^{l-1}=\bold{r}_1^{l-1}\right]\\
=&\sum_{\bold{o},\bold{r}}D(Q(\bold{o})-Q(\bold{r}))\\
&~~\cdot\sum_{\bold{o}^{l-1}_{l-L+1}}\text{Pr}(\bold{O}_l=o,\bold{O}^{l-1}_{l-L+1}=\bold{o}^{l-1}_{l-L+1},R_l=r_l|\bold{R}_1^{l-1}=\bold{r}_1^{l-1})\\
=&\sum_{\bold{o},\bold{r}}D(Q(\bold{o})-Q(\bold{r}))\\
&~~\cdot\sum_{\bold{o}^{l-1}_{l-L+1}}\big\{\text{Pr}(R_l=r_l|\bold{R}_1^{l-1}=\bold{r}_1^{l-1},\bold{O}_l=\bold{o},\bold{O}^{l-1}_{l-L+1}=\bold{o}^{l-1}_{l-L+1})\\
&~~\cdot\text{Pr}(\bold{O}_l=o,\bold{O}^{l-1}_{l-L+1}=\bold{o}^{l-1}_{l-L+1}|\bold{R}_1^{l-1}=\bold{r}_1^{l-1})\big\}\\
=&\sum_{\bold{o},\bold{r}}D(Q(\bold{o})-Q(\bold{r}))\\
&~~\cdot\sum_{\bold{o}^{l-1}_{l-L+1}}a^s_l(r_l|\bold{o}^{l-1}_{l-L+1},o,\bold{r}_1^{l-1})\beta^s_l(\bold{o}^{l-1}_{l-L+1},o|\bold{r}_1^{l-1}).\\
\end{aligned}
\end{equation}
\end{small}

On the other hand, the simplified belief state using \eqref{action} can be further expressed as:
\begin{small}
\begin{equation}\label{belif2}
    \begin{aligned}
        &\text{Pr}(\bold{O}^{l}_{l-L+1}=\bold{o}^{l}_{l-L+1}|\bold{R}_1^{l-1}=\bold{r}_{1}^{l-1})\\
        =&C_{{l-L+1}}^{l}\text{Pr}(\bold{O}^{l-1}_{l-L+1}=\bold{o}^{l-1}_{l-L+1}|\bold{R}_1^{l-1}=\bold{r}_{1}^{l-1})\\
        % =&C_{{l-L+1}}^{l}\frac{\sum_{o_{l-L}}\text{Pr}(R_{l-1}=r_{l-1}|\bold{O}^{l-1}_{l-L}=\bold{o}^{l-1}_{l-L},\bold{R}_1^{l-2}=\bold{r}_1^{l-2})\text{Pr}(\bold{O}^{l-1}_{l-L}=\bold{o}^{l-1}_{l-L}|\bold{R}_1^{l-2}=\bold{r}_1^{l-2})}{\text{Pr}(R_{l-1}=r_{l-1}|\bold{R}_1^{l-2}=\bold{r}_1^{l-2})}\\
        % =&C_{{l-L+1}}^{l}\frac{\sum_{o_{l-L}}\text{Pr}(R_{l-1}=r_{l-1}|\bold{O}^{l-1}_{l-L}=\bold{o}^{l-1}_{l-L},\bold{R}_1^{l-2}=\bold{r}_1^{l-2})\text{Pr}(\bold{O}^{l-1}_{l-L}=\bold{o}^{l-1}_{l-L}|\bold{R}_1^{l-2}=\bold{r}_1^{l-2})}{\sum_{\bar{\bold{o}}^{l-1}_{l-L}}\text{Pr}(R_{l-1}=r_{l-1}|{\bold{O}}^{l-1}_{l-L}=\bar{\bold{o}}^{l-1}_{l-L},\bold{R}_1^{l-2}=\bold{r}_1^{l-2})\text{Pr}({\bold{O}}^{l-1}_{l-L}=\bar{\bold{o}}^{l-1}_{l-L}|\bold{R}_1^{l-2}=\bold{r}_1^{l-2})}\\
        =&C_{{l-L+1}}^{l}\frac{\sum_{\bold{o}_{l-L}}a_{l-1}(\bold{r}_{l-1}|\bold{o}^{l-1}_{l-L},\bold{r}_1^{l-2})\beta^s_{l-1}(\bold{o}^{l-1}_{l-L}|\bold{R}_1^{l-2}=\bold{r}_1^{l-2})}{\sum_{\bar{\bold{o}}^{l-1}_{l-L}}a_{l-1}(\bold{r}_{l-1}|\bar{\bold{o}}^{l-1}_{l-L},\bold{r}_1^{l-2})\beta^s_{l-1}(\bar{\bold{o}}^{l-1}_{l-L}|\bold{R}_1^{l-2}=\bold{r}_1^{l-2})}\\
        =&C_{{l-L+1}}^{l}\frac{\sum_{\bold{o}_{l-L}}a^s_{l-1}(\bold{r}_{l-1}|\bold{o}^{l-1}_{l-L},\bold{r}_1^{l-2})\beta^s_{l-1}(\bold{o}^{l-1}_{l-L}|\bold{R}_1^{l-2}=\bold{r}_1^{l-2})}{\sum_{\bar{\bold{o}}^{l-1}_{l-L}}a^s_{l-1}(\bold{r}_{l-1}|\bar{\bold{o}}^{l-1}_{l-L},\bold{r}_1^{l-2})\beta^s_{l-1}(\bar{\bold{o}}^{l-1}_{l-L}|\bold{R}_1^{l-2}=\bold{r}_1^{l-2})}.\\
    \end{aligned}
\end{equation}
\end{small}
Where we use $C$ in \eqref{belif2} to denote the correlation among batched sequences. The simplification allows us to reduce computation complexity without violating the privacy guarantee. In the following, we let $L=1$. Note that by restricting the policy to only using a subset of historical input batches, the solution to \eqref{eq:batch} will be a sub-optimal solution of the original utility-privacy tradeoff formulation in \eqref{tradeoff2}.  Hence, the simplified utility-privacy tradeoff of the batched model becomes:
\begin{equation}\label{eq:batch}
\begin{aligned}
&\sum_{o,r}D(Q(\bold{o})-Q(\bold{r}))a^s_l(\bold{r}_l|\bold{o},\bold{r}_1^{l-1})\beta^s_l(o|\bold{r}_1^{l-1}),\\
& \text{Such that}~~ \frac{ a^s_{l}(\bold{r}| \bold{o}, \bold{r}_{1}^{l-1})}{\sum_{{\tilde{\bold{o}}}}a^s_{l}(\bold{r}| {\tilde{\bold{o}}}, \bold{r}_{1}^{l-1})\beta^s_l({\tilde{\bold{o}}}|\bold{r}_1^{l-1})}\in[e^{-\epsilon},e^{\epsilon}].
\end{aligned}
\end{equation}

 Notably, the objective function in \eqref{eq:batch} is a linear combination of $a^s_{l}(\bold r| \bold{o}, \bold{r}_{1}^{l-1})$ for all $\bold r, \bold o$, enabling the attainment of the global optimal solution through convergence-based algorithms such as the gradient descent. Herein, we utilize the gradient descent algorithm to numerically solve the optimization problem in \eqref{eq:batch}, as outlined in Alg. \ref{alg:BRM}. In essence, we initialize the perturbation parameters from a uniform distribution. In each iteration, we calculate a partial derivative of the utility function with respect to each parameter. Upon updating each $a_l$ with a step length $\phi$, we verify if the current parameters satisfy the privacy constraints. If they do not, we halt the update of the current parameters. Observe that the mechanism needs to update all parameters $\bold{o},\bold{r}\in\mathcal{X}^w$. The computation complexity of Alg. 2. is $\mathcal{O}(|\mathcal{X}|^{2w}).$

\begin{algorithm}[tbh]
	\small
	\caption{{SIP mechanism for Batched release}}
	\label{alg:BRM}
	\begin{algorithmic}[1]
	\item Input: current time $l$, initial prior $\text{P}_{\bold{O}_1}(\bold{o})$, transitional matrix $C^{l+1}_{l}$ for all $0<l<B$, historical release sequence $\bold{R}_1^{l-1}$, current $\bold{o}_l$, Utility function $U$, step length $\phi$.
	\item Output: Batched release $R_l$
	\item if $l \neq 1$: 
        \item ~~~~Update belief state for all $o$ according to \eqref{belif2}
        \item Initialize $a_l(\bold{r}_l|\bold{o}_l,\bold{r}_1^{l-1})=1/|\mathcal{X}|^{w}$ for all $\bold{o}_l$ and $\bold{r}_l$, $U^*=\infty$, $Act = \text{ones}({|\mathcal{X}|^{w}}, |\mathcal{X}|^{w}])$.
        \item While sum$(Act)\neq 0:$
        \item ~~~for $\bold{o}\in |\mathcal{X}|^{w}$:
        \item ~~~~~~for $\bold{r}\in |\mathcal{X}|^{w}$: 
        \item ~~~~~~~~~Calculate the derivative: $d(\bold{o},\bold{r})={\partial U}/{\partial a_l(\bold{r}|\bold{o},\bold{r}_1^{l-1})}$
        \item ~~~~~~~~~$a_l(\bold{r}|\bold{o},\bold{r}_1^{l-1})\gets a_l(\bold{r}|\bold{o},\bold{r}_1^{l-1})+\phi \cdot d(\bold{o},\bold{j})$
	\item ~~~~~~~~~Check privacy constraints, $s(\bold{o},\bold{r})=0$ 
        \item ~~~~~~~~~for $\tilde{\bold{o}}\in|\mathcal{X}|^{w}$:
        \item ~~~~~~~~~~~~$s(\bold{o},\bold{r}) \gets s(\bold{o},\bold{r}) + a_l(\bold{r}|\tilde{\bold{o}},\bold{\bold{r}}_1^{l-1})\beta_l(\tilde{\bold{o}}|\bold{\bold{r}}_1^{l-1})$
        \item ~~~~~~~~~if $s(\bold{o},\bold{r})/ [a_l(\bold{r}|\tilde{\bold{o}},\bold{r}_1^{l-1})\beta_l(\tilde{\bold{o}}|\bold{r}_1^{l-1})] \notin [e^{-\epsilon},e^{\epsilon}]$:
        % \item ~~~~~~~~~~~~$a^*_l=a_l(r|o,\bold{r}_1^{l-1})$
        % \item ~~~~~~~~~else:
        \item ~~~~~~~~~~~~$Act[\bold{o}][\bold{r}]=0$
        \item Release $\bold{R}_l$ according to $a_l(\bold{r}_l|\bold{o}_l,\bold{r}_1^{l-1})$
	\end{algorithmic}
\end{algorithm}

\begin{figure}[htp]
\begin{small}
\centering 
{ \includegraphics[width=0.4\textwidth]{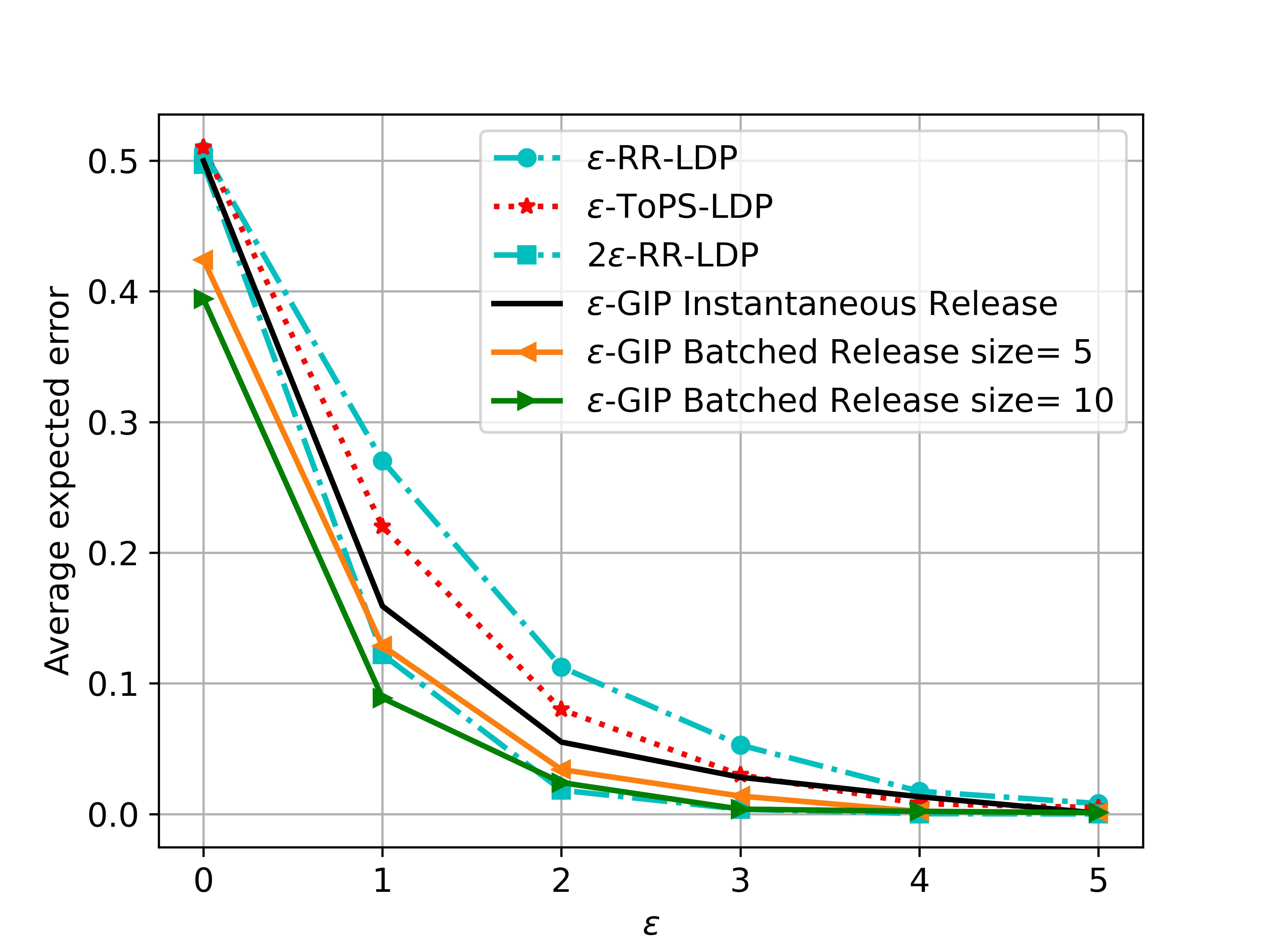}
\label{up_val} } 
{ \includegraphics[width=0.4\textwidth]{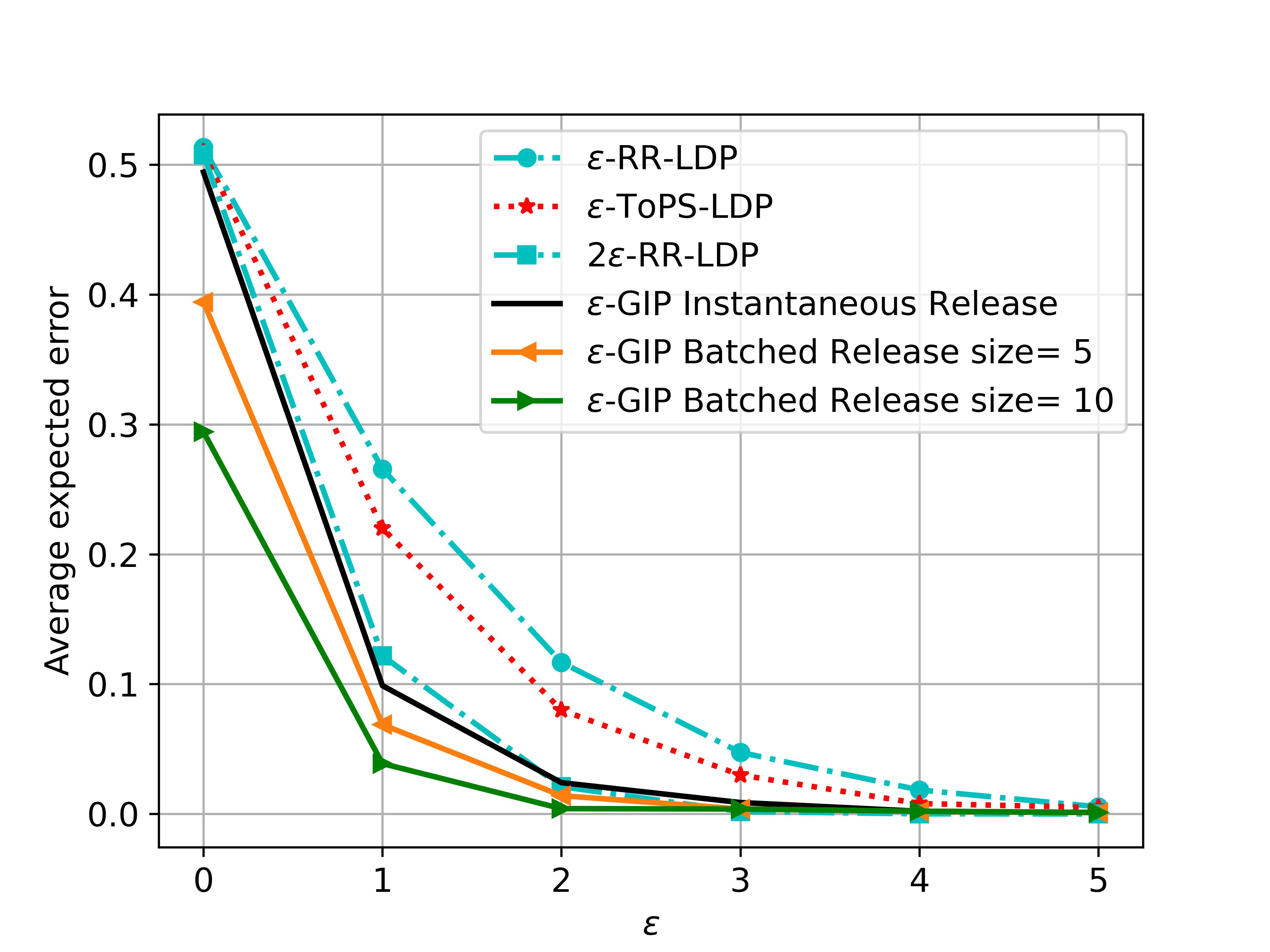} 
\label{up_val2} } 
\caption{Privacy-utility tradeoff comparison for different data release settings using synthetic data with weak (Case 1) and strong (Case 2) data correlations.} 
\label{UP_compare} 
\end{small}
\end{figure}
\begin{figure}[htp]
\begin{small}
\centering 
{ \includegraphics[width=0.4\textwidth]{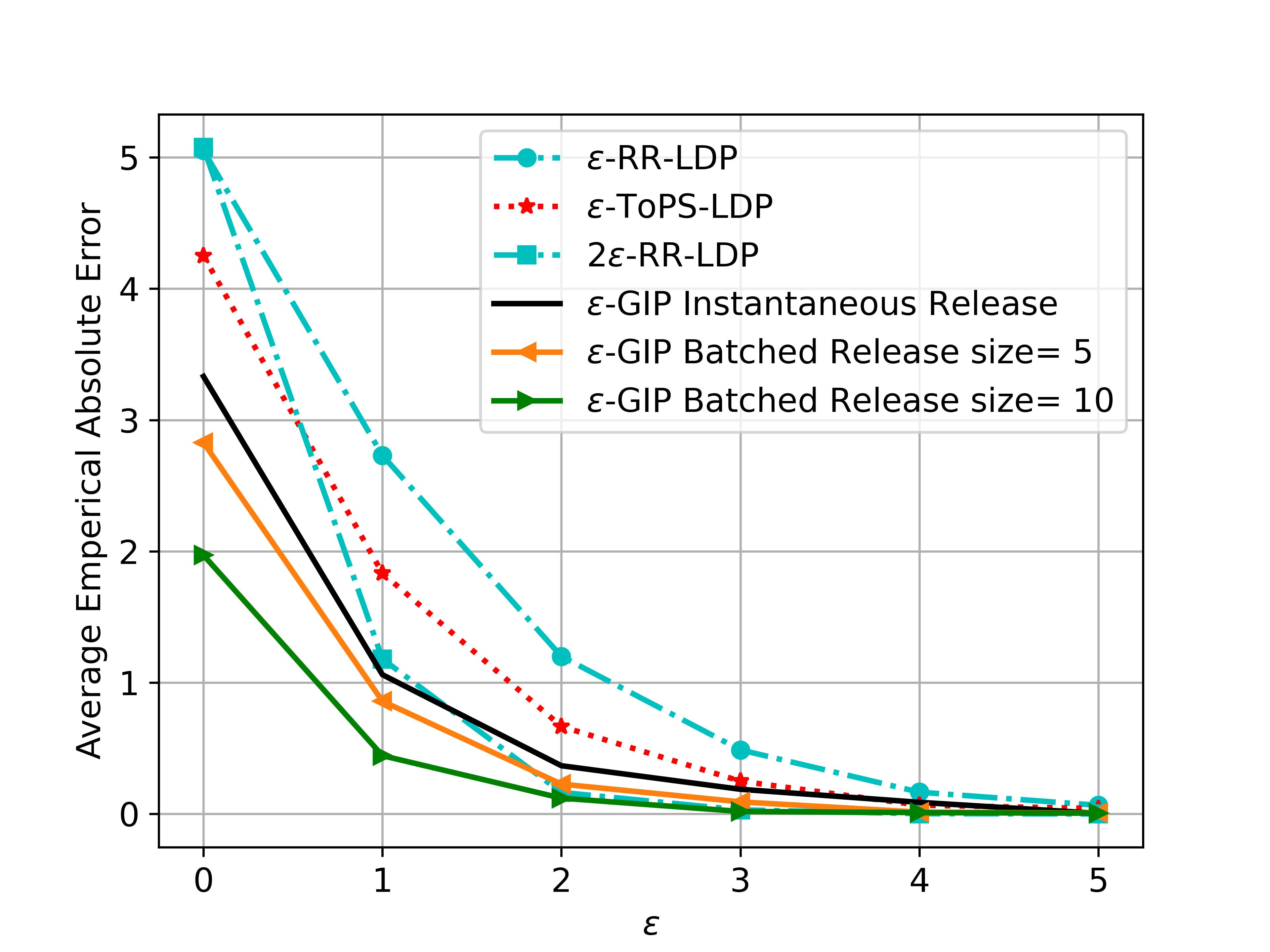}
\label{up_cat} } 
\caption{Utility privacy tradeoff comparison between different data release settings with Kosarak} 
\label{UP_compare2} 
\end{small}
\end{figure}

\begin{figure*}[t]
\label{fig:leakage_com}
\centering 
\subfigure[Pre-processed data] 
{ \includegraphics[width=0.23\textwidth]{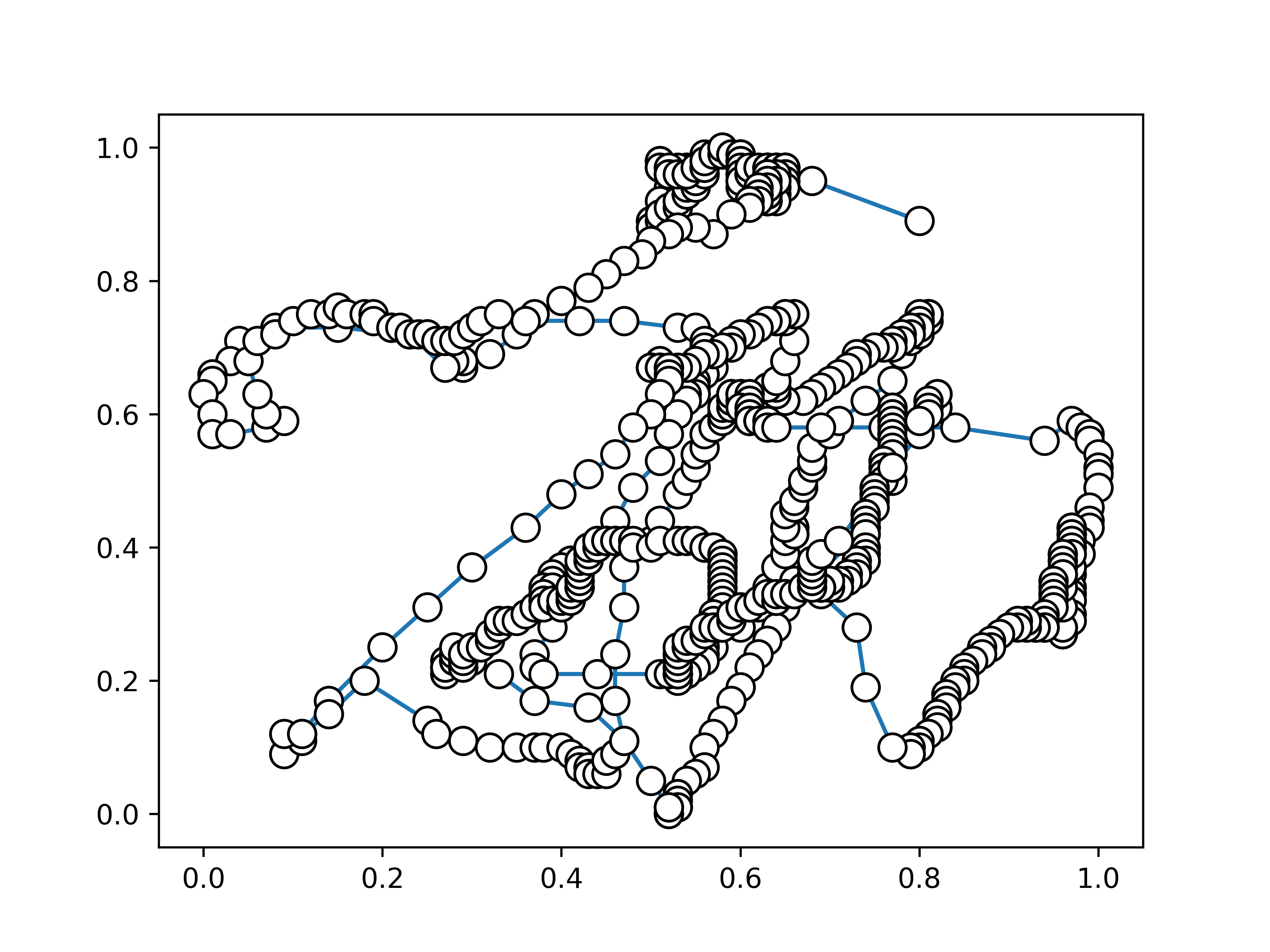}
\label{real1} }
\subfigure[Independent RR-LDP]
{ \includegraphics[width=0.23\textwidth]{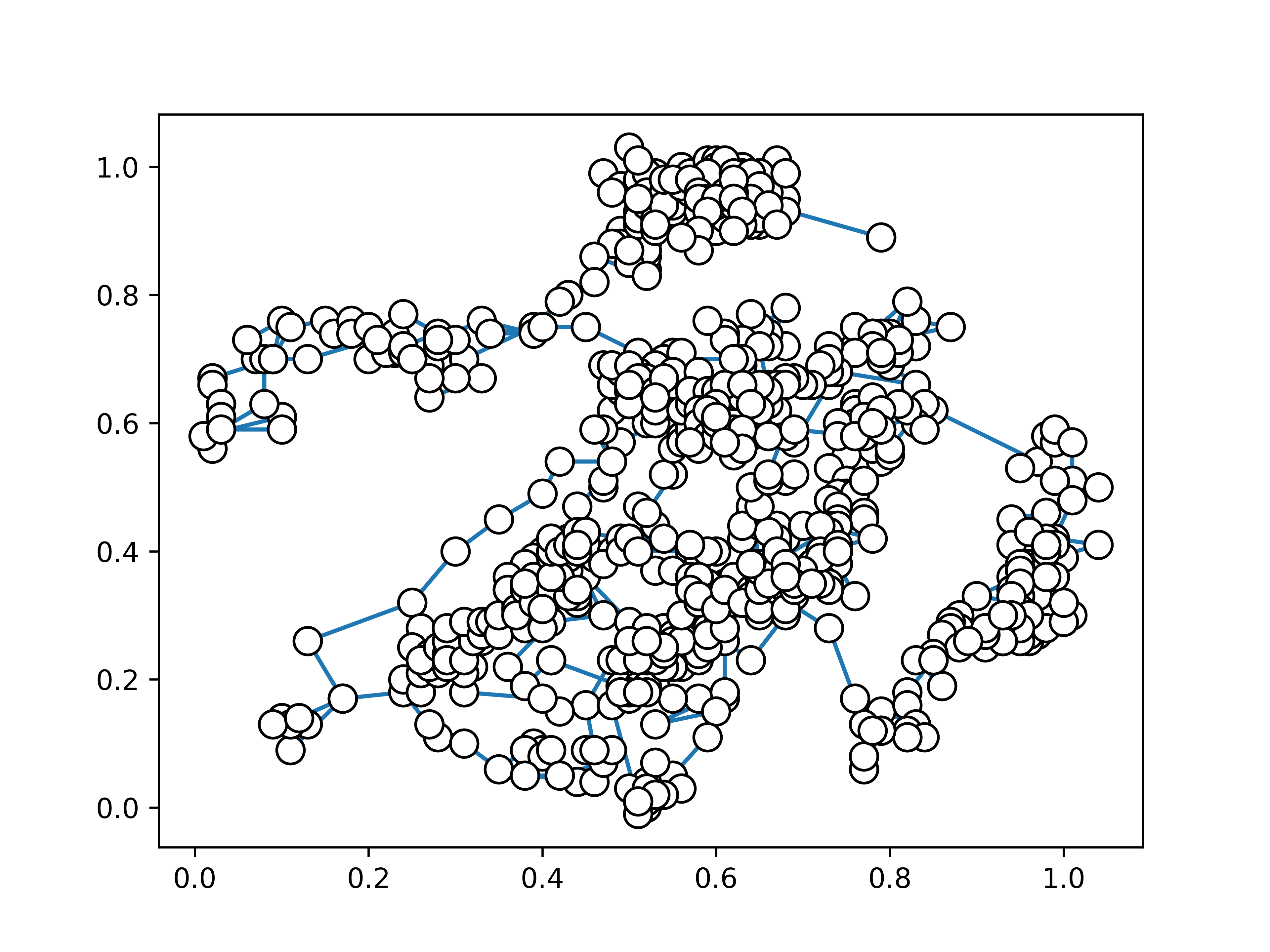}
\label{real2}}
\subfigure[Correlated Instantaneous release]
{\includegraphics[width=0.23\textwidth]{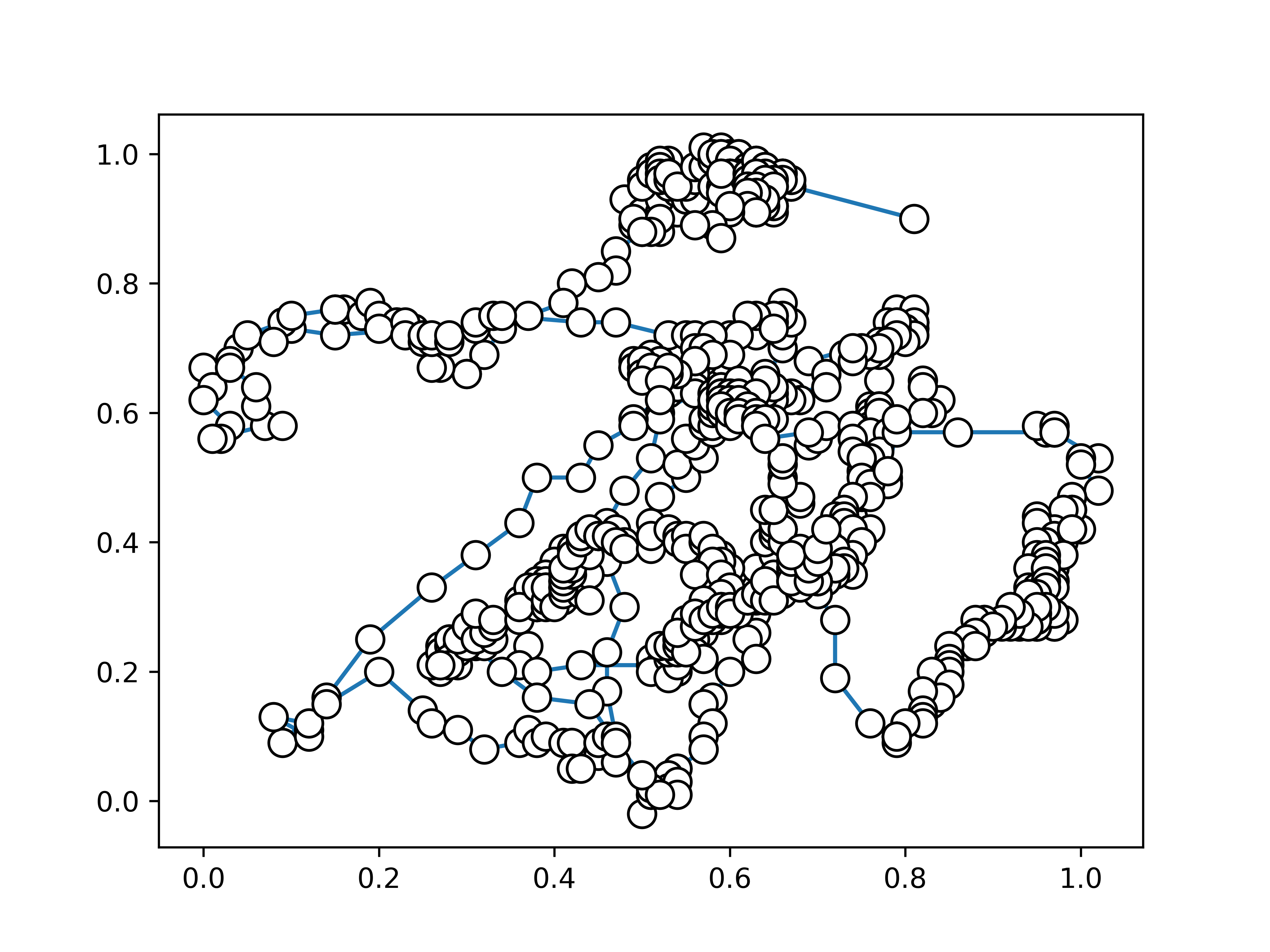}
\label{real3}}
\subfigure[Correlated Batched release]
{\includegraphics[width=0.23\textwidth]{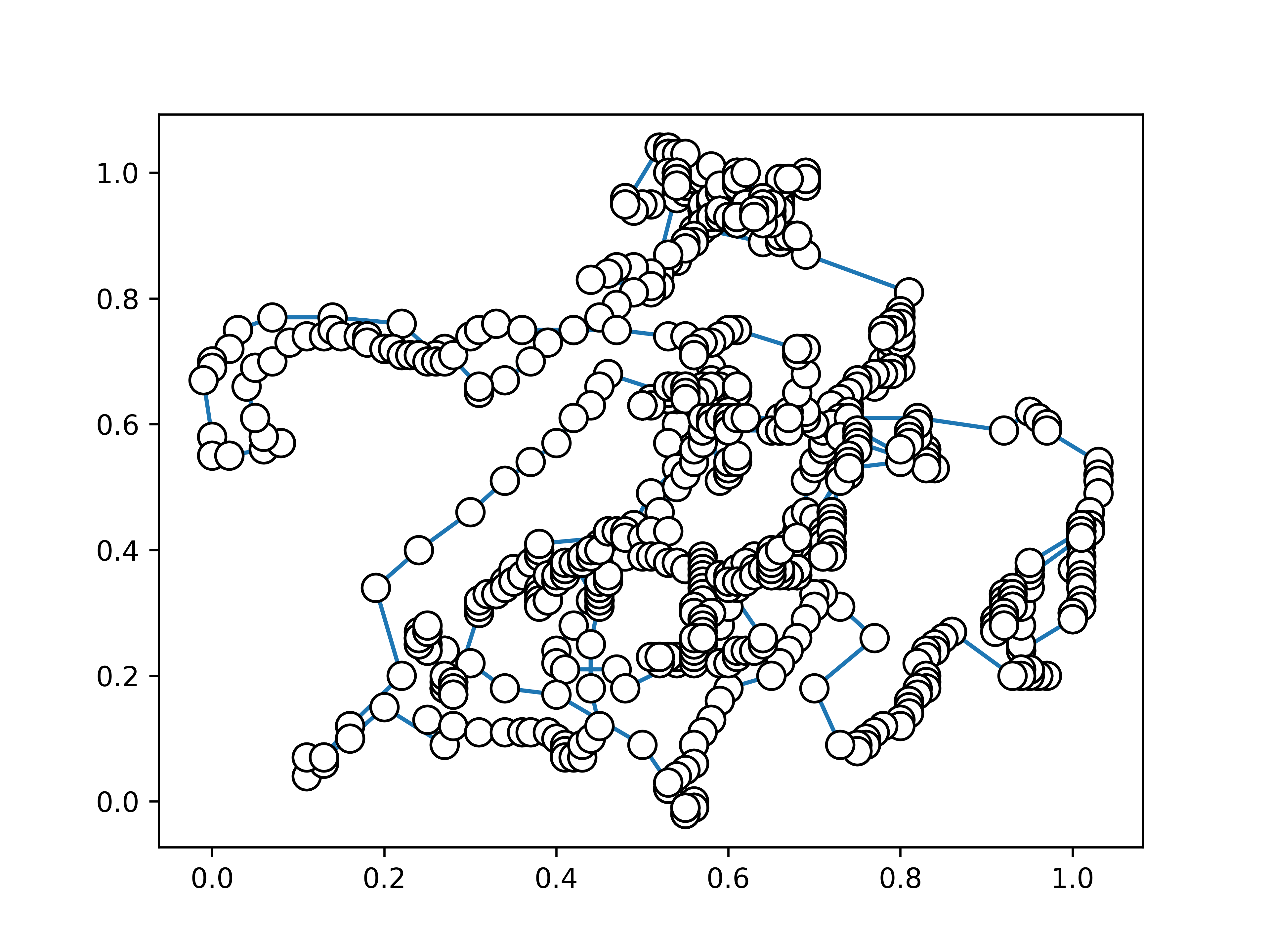}
\label{real4}}
\caption{Visualization of the eye tracking data: (a) quantized normalized eye-gazing data; (b) Release from independent randomized response LDP mechanisms (c) Release from correlated randomized response mechanism with the instantaneous release mechanism, (d) Release from correlated randomized response mechanism with  the batched sequential release mechanism (batch size =10). For (b),(c),(d), total budget $\epsilon=10$;}
\end{figure*}

\begin{figure}[htp]
\label{fig:upreal}
\centering 
\subfigure[Privacy leakage (one step Markov)] 
{ \includegraphics[width=0.4\textwidth]{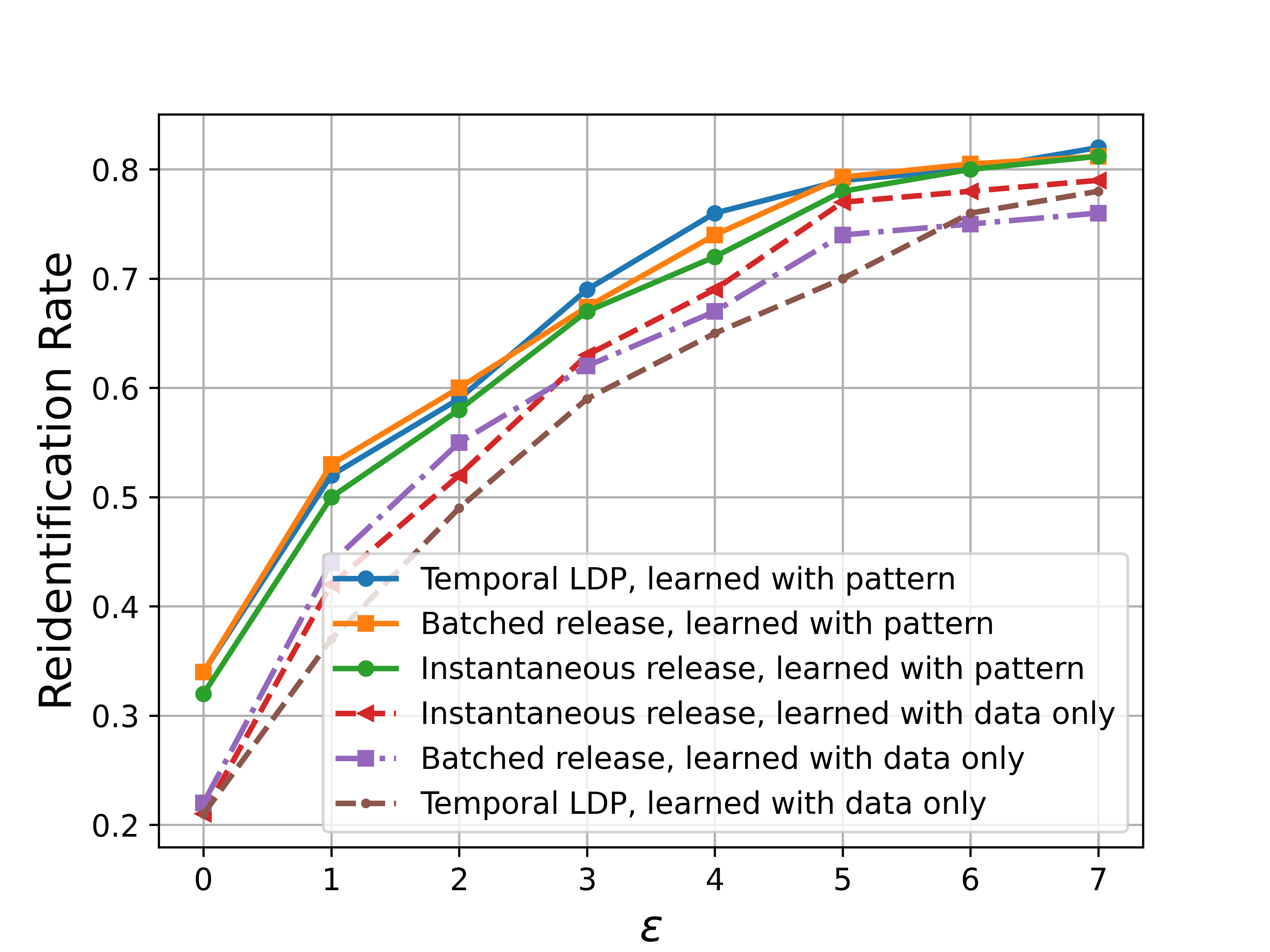}
\label{real5} }
\subfigure[Privacy leakage (two-step Markov)]
{ \includegraphics[width=0.4\textwidth]{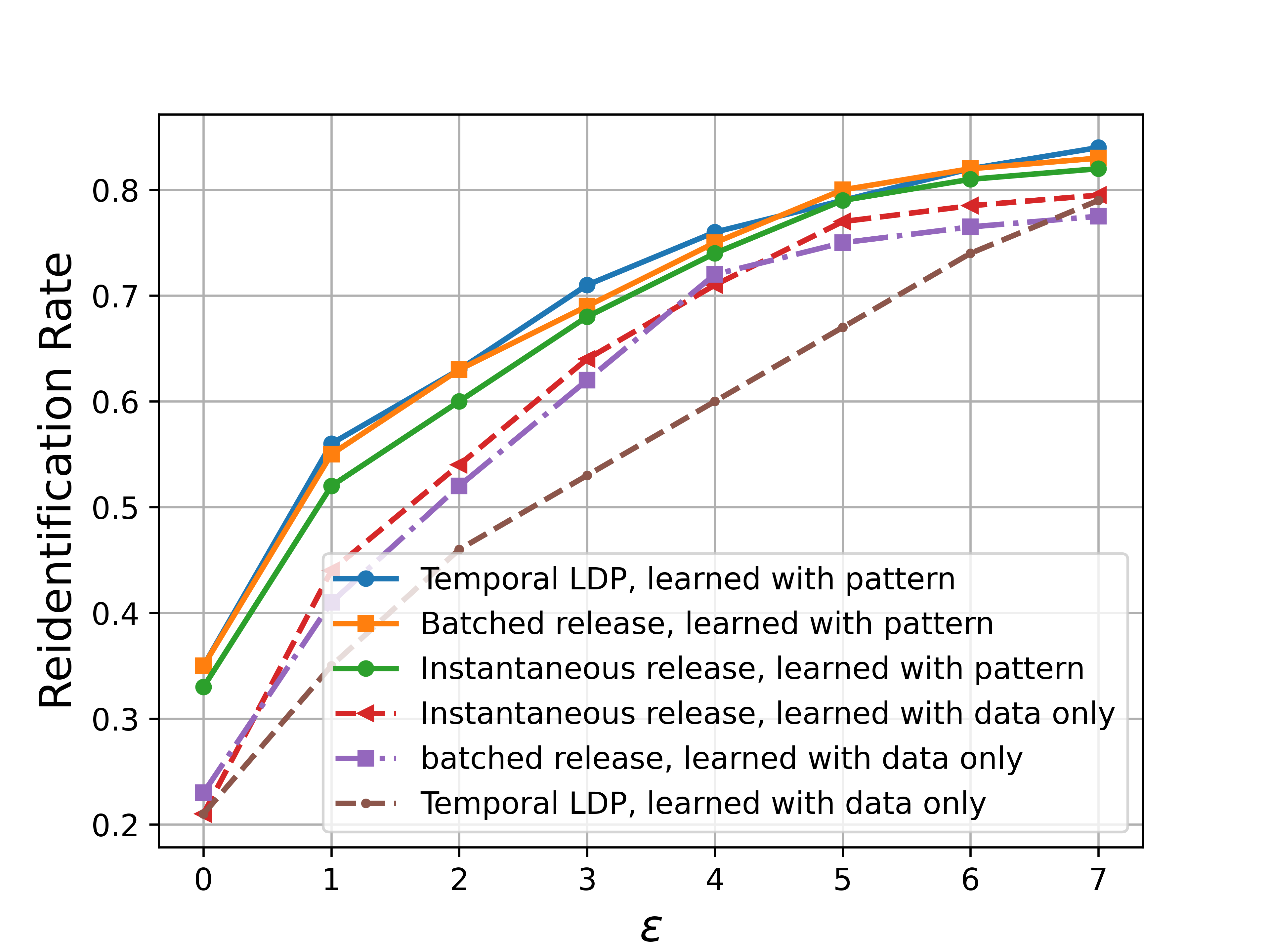}
\label{real6}}
\caption{Privacy leakage comparisons, privacy evaluated by the identification rate of individuals in the dataset; (a) considers a one-step Markov Chain, and (b) considers a two-step Markov Chain.}
\end{figure}

\begin{figure}[htp]
\label{fig:upreal2}
\centering 
\subfigure[Utility measure (one step Markov)]
{\includegraphics[width=0.4\textwidth]{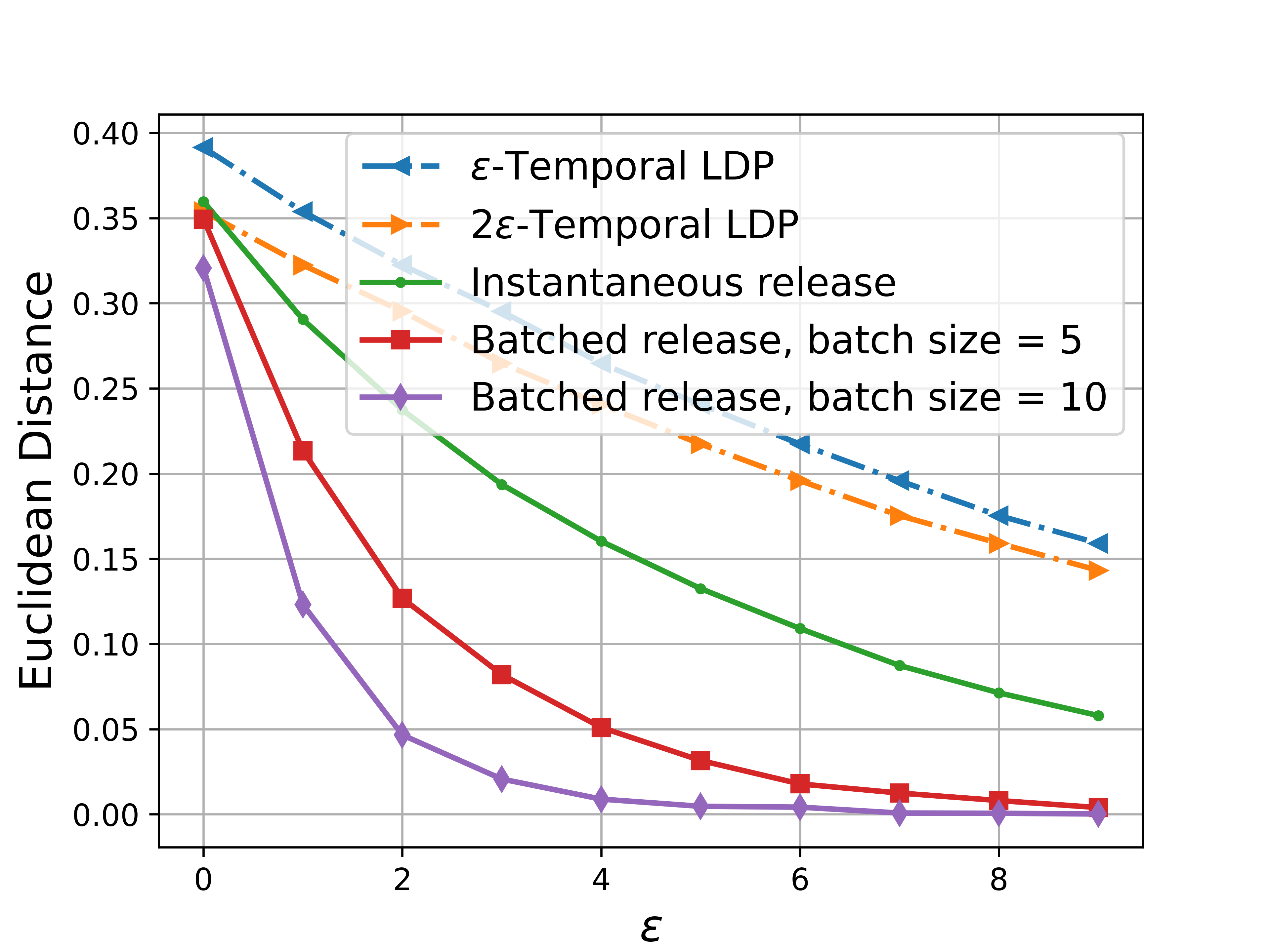}
\label{real7}}
\subfigure[Utility measure (two-step Markov)]
{\includegraphics[width=0.4\textwidth]{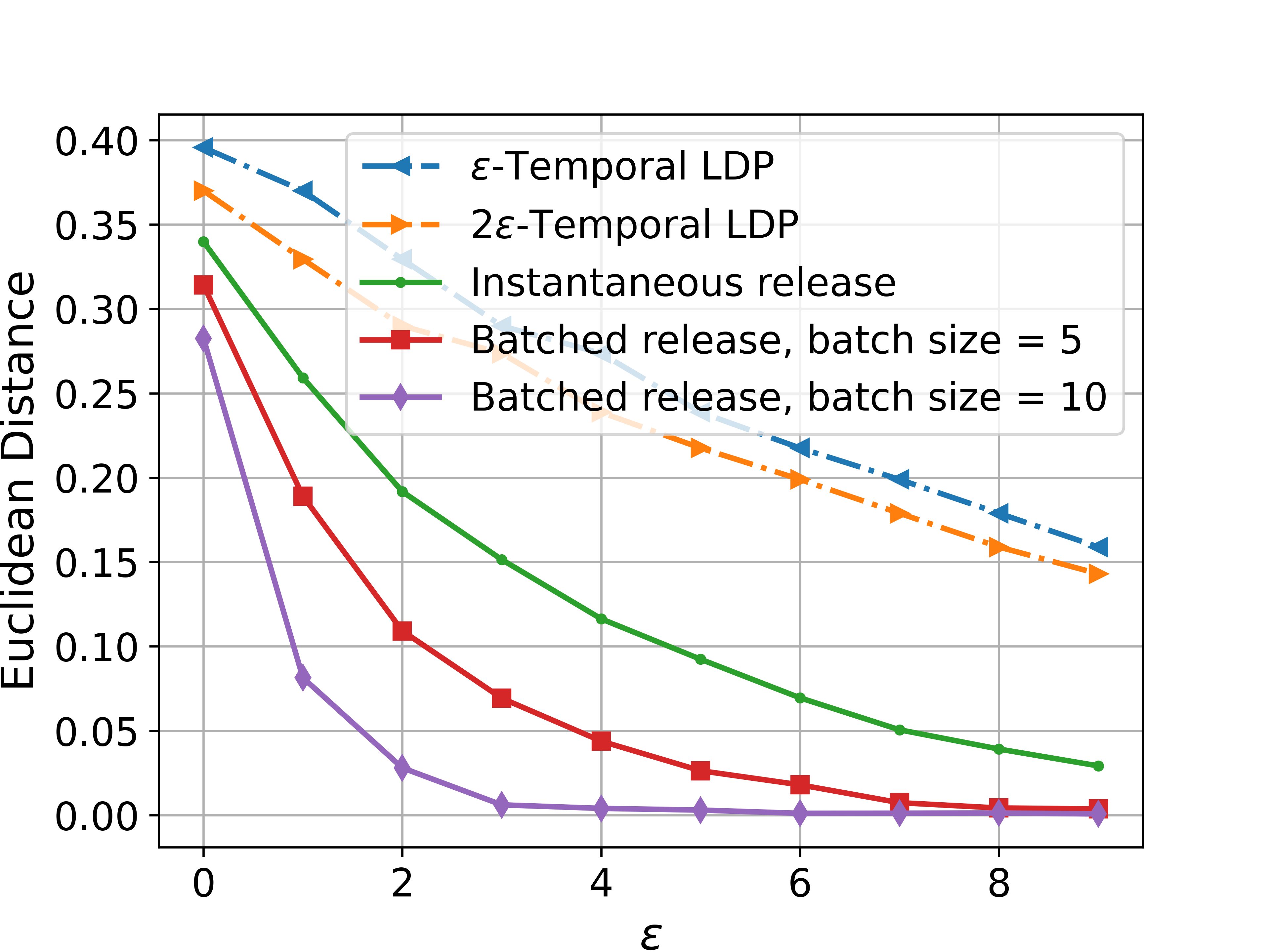}
\label{real8}}
\caption{Utility comparisons, utility measured by the averaged Euclidean distance between the raw and the released data at each time stamp.  (a) considers a one-step Markov Chain, and (b) considers a two-step Markov Chain.}
\end{figure}

\section{Experimental Results}\label{sec:eva}
In this section, we evaluate the performance of our proposed mechanism and compare it with other solutions from related works.

\subsection{Evaluations with Synthetic Data}
In our first evaluation, we conduct a comparative analysis between the proposed instantaneous release mechanism, batched release mechanisms, and the following two  mechanisms from the literature:
1) Randomized Response Mechanism Based on Local Differential Privacy (RR-LDP): This mechanism perturbs data values independently at each timestamp to satisfy $\epsilon_{k}$-LDP. Thanks to the sequential composition theorem of LDP, the global leakage measured by LDP after time $T$ is the sum of $\epsilon_k$ from $k=1$ to $T$. 2) The LDP Mechanism from \cite{Conti_release_LDP} (denoted as ToPS): This mechanism comprises three phases: threshold estimation, perturbation, and smoothing. Initially, the mechanism releases samples using the Laplacian mechanism and uses the first set of samples to estimate a threshold. A hierarchical method is then employed to handle the ranged values. Lastly, a post-processing (smoothing) phase ensures that the data sequence aligns with a specified distribution. Our comparison aims to measure the effectiveness of these techniques, highlighting their relative strengths and weaknesses in handling privacy-preserving data releases.

We first simulate by generating a binary time series that satisfies the first-order Markov chain with an initial probability of $P_1 = \text{Pr}(X_1 = 1)$ and transitional matrix
\begin{equation}
    \begin{bmatrix}
q_{00} & q_{01}\\
q_{10} & q_{11}
\end{bmatrix}
\end{equation}
where $q_{ij} = \text{Pr}(X_{k}=j|X_{k-1}=i)$. We then compare the mechanisms mentioned above under the following two settings: (a) $P_1 = 0.5$, $q_{00} = q_{11} = 0.5$, which means the data in the sequence are more correlated with each other. (b) $P_1 = 0.9$, $q_{00} = q_{11} = 0.9$, which means the dependence in the sequence is strong. The comparison results are shown in Fig. \ref{UP_compare}. 

Observe that generally, the  utility provided by the instantaneous release mechanism is sandwiched between $\epsilon$ and $2\epsilon$-RR-LDP mechanisms given any $\epsilon$. However, as the data dependence increases, $\epsilon$-SIP with instantaneous release even outperforms $\epsilon$-LDP. On the other hand, batched release models always outperform instantaneous release models, and the advantage increases with the batch size. It is also worth noting that when the data correlation is strong.  The advantages of context-aware models (SIP-based mechanisms) are even more obvious.

\subsection{Experiment with Real Data} 

\subsubsection{Click Streams Data (Kosarak)} 
Investigating the clicking streams for a website can be very helpful in learning product popularity or guiding web page design. On the other hand, individuals' clicking data
is privately sensitive as it may infer one's personal interest, working hours, daily behavior, etc. 

\textbf{Dataset:} Kosarak is a dataset of click streams on a Hungarian website that contains around one million users and 41270 categories. The data is formatted so that each user clicks through multiple categories. 
After data cleaning and preprocessing, there are $990002$ users in the dataset, each containing a data stream with lengths from $1$ to $2498$. We first extract each user's data stream and calculate the frequency of the occurrence of different patterns (every consecutive data value). For the instantaneous release model, we calculate the frequency of every pair of data values, and estimate the conditional distribution from the frequency. Then for batched release model, we first truncate the data streaming into several trucks according to the predefined batch size. Then we analyzed the frequency of every consecutive batch value. Then, we estimate the distribution from there.
Finally, we summarized a frequency lookup table, which is then transferred  to a correlation library. Given every possible current value, we are able to look up the following data within the correlation library with certain probabilities. Then we apply the mechanisms proposed in the previous section to the preprocessed model. 

\textbf{Utility and Privacy:} For our experiment with Kosarak, we randomly select $1000$ users' data sequence as the input. Then we perturb these data sequences with the mechanisms proposed in the previous sections. The utility of this experiment is measured by the empirical absolute distance between the input sequence and the output. The privacy on the other hand, is customized by varying the value of $\epsilon$. The comparison is shown in Fig.\ref{UP_compare2}. 

From the above figures, several key observations can be made:
1) Despite $\epsilon$-SIP operating within the bounds of $\epsilon$-LDP and $2\epsilon$-LDP, our proposed mechanisms consistently outperform $2\epsilon$-LDP.
2) In the case of batched release models, we observe a trend where larger batch sizes contribute to improved utility. This concept also extends to the instantaneous release model, which can be considered a special case of batched release with a batch size of one.
3) While the Truncated Output Perturbation under Local Differential Privacy (ToPs-LDP) mechanism improves data utility compared to the Randomized Response under Local Differential Privacy (RR-LDP), primarily due to sensitivity reduction achieved through truncation, it still performs worse than our SIP-based mechanisms. This disparity stems from ToPs-LDP's inability to capture data correlation, a deficiency not present in our proposed mechanisms.

\subsubsection{Experiments with Eye-tracking Data}
Eye-tracking data is usually collected online by AR/VR devices. Usually, cameras are embedded in these devices to track users' eyeball moving, and these axis data are uploaded to the server in return for services, such as online video games, online social, etc. However, studies have figured out several privacy concerns in eye-tracking data consumption: In \cite{10.1145/3314111.3319915}, Steil et al. pointed out that eye-tracking data can reveal one's many private sensitive information, such as age, gender, health status, sexual preference, personal trails, etc. Further, a group of researchers from Stanford University have shown they’re able to reliably identify individuals Using a pool of 511 participants. Their system is said to be capable of identifying $95\%$ of users correctly “when trained on less than 5 min of tracking data per person\cite{Miller_2020}.

\textbf{Dataset:} In this experiment, we compare the performance of our mechanisms with independent LDP-based mechanisms with the dataset of "MOJO", which is collected by "Mojo vision". The Mojo dataset contains ten users' eye-gazing data. Each user's data sequence can be viewed as a three-dimensional vector containing $X$, $Y$, and $Z$ axes, where $Z$ label measures image rotation, and we ignore this factor in the following experiment. The length of each individual's eye-gazing data sequence is different according to his/her recorded period. On average, each one of them has been collected for $5$ hours, with a sample rate of $50$ per second. 
We first normalize the coordinates to be within $[0,1]$. Then we equally divided the area into a $100\times 100$ space. Then we quantify each coordinate and cast the coordinates to the grid. We take a portion of one individual's normalized and quantified eye gazing sequence $(1/1000)$ of the raw data sequence. 

We assume the correlation within the data stream has Markov properties. Further, we assume a first-order and second-order Markov chain in the stream. Then the correlation is measured by the conditional probabilities, which can be estimated by frequency checking in the eye-gazing stream. We assume the initial probability is uniformly distributed.

We consider three types of data release mechanisms: (1) Temporal LDP mechanism proposed in \cite{Quantify_DP}. We first assign a total budget $\epsilon=10$ to the data sequence every second. Then, the local privacy budget can be calculated by the sequential composition Theorem. Note that support of the randomized response mechanism is selected according to the realizations with non-zero conditional probabilities given the previous values (according to the Markov properties). (2) Conditional randomized response mechanism for the instantaneous release model. The parameters involved in this mechanism are summarized in this paper. (3) Batched data release mechanism, where we assume the batch sizes are 5 and 10, respectively. The perturbation parameters are calculated by numerically solving the optimization problem defined in \eqref{eq:batch}. A visualization of the release of different mechanisms is shown in Fig. 7.

\textbf{Privacy Leakage:} Since the privacy protection guarantees provided by different mechanisms are different, we want to see how they protect the re-identification rate of each individual under fixed epsilons. We first design RNN models that leverage data correlations in predictions. Models are trained with half of the eye-gazing data streams from each user. The goal is to learn the eye-moving patterns of each individual. Then, we consider two scenarios regarding the RNN's prediction model: one considers the data correlation estimated from frequency (for the whole data stream) and leverages it in the data prediction; the other one makes predictions solely depending on the observed data. 

The privacy leakage comparison of different models is shown in Fig. 8. Observe that when the reidentification model is trained with data only, the temporal LDP-based mechanism provides the most strict privacy protection. When the adversary trains the model with the pattern information, while the overall leakage of all mechanisms increases, the increase of SIP-based mechanisms is relatively smaller, and the leakage of SIP-based mechanisms is even smaller  than LDP-based mechanisms. The reason is prior information has already been  considered in the definition of SIP and noise is added more effectively to mitigate the privacy leakage against such adversaries.

\textbf{Query Utility:} The utility of different mechanisms is measured by the average Euclidean distance between the raw and released data at each time stamp. The utility comparisons of different models are presented  in Fig. 9. 

Observe that even though different mechanisms achieve similar privacy protections, their utilities are very different. SIP-based batched release models outperform other models significantly and the gap is even larger with bigger batch size. On the other hand, the instantaneous release model provides better utility even than the $2\epsilon$-temporal LDP. This is because even though temporal LDP leverages correlations to reduce sensitivity, the definition and the mechanism do not consider the correlation, while SIP is totally context-aware.

\section{Conclusion and Future Works}

In this paper, we tackle the challenge of releasing streaming data while preserving privacy. Initially, we  introduce Sequence Information Privacy (SIP), which is an extension of local information privacy to sequential data. SIP inherently acknowledges data correlations within its definition. Subsequently, we study its relationships with existing privacy concepts such as Local Differential Privacy and Pufferfish Privacy. We demonstrate that our SIP can be sequentially decomposed into individual local leakages, making the optimization of global utility-privacy tradeoff equivalent to independently solving each local instance. Based on two data release settings, instantaneous and batched release, we propose perturbation mechanisms that optimize this utility-privacy tradeoff. Our evaluation, using both synthetic and real data, compares the utility-privacy tradeoffs provided by our proposed mechanisms with those from existing works. Results indicate that our mechanisms can significantly enhance data utility without compromising data privacy. 

In terms of future work, we are interested in the following directions.One direction is to remove the assumptions that the prior distribution and data correlation are known. We can make the mechanism release the first several data points in a context-free manner (for example, using LDP). As more observations are made, the data prior/correlation becomes more certain, and the mechanism leverages the context to achieve context-aware utility-privacy tradeoffs. Another direction to consider is that the batched release mechanism still involves high computational complexity. We would like to investigate further ways to reduce the computational complexity without violating privacy constraints.

\bibliographystyle{IEEEtran}
\bibliography{ref}

\begin{appendices}
\section{Proof for theorem 1.}
\begin{proof}
One of the metrics in the total leakage can be expressed as:
\begin{equation}\label{eq:proofthm1}
    \begin{aligned}
    &\frac{\text{Pr}(\bold{X}_1^T=\bold{x}_1^T|\bold{Y}_1^T=\bold{y}_1^T)}{\text{Pr}(\bold{X}_1^T=\bold{x}_1^T)}\\
    =&\frac{\text{Pr}(\bold{X}_1^T=\bold{x}_1^T|\bold{Y}_1^T=\bold{y}_1^T)}{\text{Pr}(\bold{X}_1^T=\bold{x}_1^T|\bold{Y}_1^{T-1}=\bold{y}_1^{T-1})}\frac{\text{Pr}(\bold{X}_1^T=\bold{x}_1^T|\bold{Y}_1^{T-1}=\bold{y}_1^{T-1})}{\text{Pr}(\bold{X}_1^T=\bold{x}_1^T)}\\
    \le &e^{\epsilon_T}\frac{\text{Pr}(X_T={x}_T|\bold{Y}_1^{T-1}=\bold{y}_1^{T-1},\bold{X}_1^{T-1}=\bold{x}_1^{T-1})}{\text{Pr}({X}_T={x}_T|\bold{X}_1^{T-1}=\bold{x}_1^{T-1})}\\
    &~~~~~~~~~~~~~~~~~~~~~~~~~~\cdot \frac{\text{Pr}(\bold{X}_1^{T-1}=\bold{x}_1^{T-1}|\bold{Y}_1^{T-1}=\bold{y}_1^{T-1})}{\text{Pr}(\bold{X}_1^{T-1}=\bold{x}_1^{T-1})}\\
    =&e^{\epsilon_T}\frac{\text{Pr}(\bold{X}_1^{T-1}=\bold{x}_1^{T-1}|\bold{Y}_1^{T-1}=\bold{y}_1^{T-1})}{\text{Pr}(\bold{X}_1^{T-1}=\bold{x}_1^{T-1})}\\
    \le &\prod_{k=1}^Te^{\epsilon_k},
    \end{aligned}
\end{equation}
where $\text{Pr}(X_T={x}_T|\bold{Y}_1^{T-1}=\bold{y}_1^{T-1},\bold{X}_1^{T-1}=\bold{x}_1^{T-1})=\text{Pr}(X_T={x}_T|\bold{X}_1^{T-1}=\bold{x}_1^{T-1})$, because $$Y_k\independent\bold{X}_{k+1}^T|\{\bold{X}_1^k,\bold{Y}_1^{k-1}\}.$$
Similarly, with respect to the other metric, we have:
\begin{equation}
    \begin{aligned}
    \frac{\text{Pr}(\bold{X}_1^T=\bold{x}_1^T|\bold{Y}_1^T=\bold{y}_1^T)}{\text{Pr}(\bold{X}_1^T=\bold{x}_1^T)}\ge \prod_{j=1}^Te^{-\epsilon_j},
    \end{aligned}
\end{equation}
which concludes our proof.
\end{proof}

\section{Proof of Theorem 2}
\begin{proof}
    From \eqref{eq:proofthm1}, $\mathcal{L}(\bold{Y}_1^T\to \bold{X}_1^T) = \sum_{k=1}^T \mathcal{L}(Y_k\to \bold{X}_1^k)$, where by Bayes rule:
    \begin{equation}
    \begin{aligned}
        &\frac{\text{Pr}(\bold{X}_1^k=\bold{x}_1^k|\bold{Y}_1^k=\bold{y}_1^k)}{\text{Pr}(\bold{X}_1^k=\bold{x}_1^k|\bold{Y}_1^{k-1}=\bold{y}_1^{k-1})}\\
        =&\frac{\text{Pr}(Y_k=y_k|\bold{X}_1^k=\bold{x}_1^k,\bold{Y}_1^{k-1}=\bold{y}_1^{k-1})}{\text{Pr}(Y_k=y_k|\bold{Y}_1^{k-1}=\bold{y}_1^{k-1})}
    \end{aligned}
    \end{equation}
    Denote $\text{Pr}(Y_k=y_k|\bold{X}_1^k=\bold{x}_1^k,\bold{Y}_1^{k-1}=\bold{y}_1^{k-1})$ as $P_k$ and $\text{Pr}(Y_k=y_k|\bold{Y}_1^{k-1}=\bold{y}_1^{k-1})=Q_k$. Define $c_k=\log \frac{P_k}{Q_k}$, then $ \mathcal{L}(Y_k\to \bold{X}_1^k)$ implies $|c_k|\le \epsilon$. 
    On the other hand, 
    \begin{equation}
        \begin{aligned}
        E[c_k] = &\sum_{y_k} P_k \log \frac{P_k}{Q_k}\\
        \le& \sum_{y_k} P_k \log \frac{P_k}{Q_k} + \sum_{y_k} Q_k \log \frac{Q_k}{P_k}\\
        =& \sum_{y_k} P_k (\log\frac{P_k}{Q_k}+\log\frac{Q_k}{P_k}) + \sum_{y_k} (Q_k - P_k) \log \frac{Q_k}{P_k}\\
        =&\sum_{y_k} (Q_k - P_k) \log \frac{Q_k}{P_k}\\
        \le& \sum_{y_k} |Q_k-P_k| |\log\frac{Q_k}{P_k}|\\
        \le& \epsilon \sum_{y_k} |e^{\epsilon}P_k - P_k|\\
        = & \epsilon (e^{\epsilon} -1 ).
        \end{aligned}
    \end{equation}
    With Azuma-hoeffding inequality, if each $|c_k|\le{\epsilon}$ and each $E[c_k]\le \epsilon (e^{\epsilon} -1 )$. 
    \begin{equation}
        \text{Pr}\left(\sum_{k=1}^T c_k > T\epsilon(e^{\epsilon}-1)+\sqrt{T}\epsilon\sqrt{2\ln(1/\delta)}\right)\le \delta.
    \end{equation}
    This concludes the proof for the first half, the second half of the batch release mechanism follows the same idea.
\end{proof}

\section{Proof of Theorem 3}
\begin{proof}
\textbf{From $\epsilon$-LDP $\to$ $\epsilon$-SIP:}

    By the Bayesian rule, the privacy metric in the sequence information leakage can be expressed as:
    \begin{equation}
        \begin{aligned}
        \frac{\text{Pr}(\bold{Y}_1^T=\bold{y}_1^T|\bold{X}_1^T=\bold{x}_1^{T})}{\text{Pr}(\bold{Y}_1^T=\bold{y}_1^T)}, \frac{\text{Pr}(\bold{Y}_1^T=\bold{y}_1^T)}{\text{Pr}(\bold{Y}_1^T=\bold{y}_1^T|\bold{X}_1^T=\bold{x}_1^{T})}
        \end{aligned}
    \end{equation}
   When $\mathcal{M}$ satisfies $\epsilon$-LDP, 
   \begin{equation}
   \begin{aligned}
       e^{-\epsilon} \text{Pr}(\bold{Y}_1^T=\bold{y}_1^T|\bold{X}_1^T=\bold{x}_1^{T})
       &\le \text{Pr}(\bold{Y}_1^T=\bold{y}_1^T|\bold{X}_1^T=\tilde{\bold{x}}_1^{T}) \\
       &\le e^{\epsilon}\text{Pr}(\bold{Y}_1^T=\bold{y}_1^T|\bold{X}_1^T=\bold{x}_1^{T})
    \end{aligned}
   \end{equation}
   Therefore: 
    \begin{equation}
        \begin{aligned}
        \text{Pr}(\bold{Y}_1^T=\bold{y}_1^T)=&\sum_{\tilde{\bold{x}}_1^{T}}\text{Pr}(\bold{Y}_1^T=\bold{y}_1^T|\bold{X}_1^T=\tilde{\bold{x}}_1^{T})\text{Pr}(\bold{X}_1^T=\tilde{\bold{x}}_1^{T})\\
        \le & \sum_{\tilde{\bold{x}}_1^{T}}e^{\epsilon}\text{Pr}(\bold{Y}_1^T=\bold{y}_1^T|\bold{X}_1^T={\bold{x}}_1^{T})\text{Pr}(\bold{X}_1^T=\tilde{\bold{x}}_1^{T})\\
        =&e^{\epsilon}\text{Pr}(\bold{Y}_1^T=\bold{y}_1^T|\bold{X}_1^T={\bold{x}}_1^{T}).
        \end{aligned}
    \end{equation}
    For the other direction:
    \begin{equation}
        \text{Pr}(\bold{Y}_1^T=\bold{y}_1^T)\ge e^{-\epsilon}\text{Pr}(\bold{Y}_1^T=\bold{y}_1^T|\bold{X}_1^T={\bold{x}}_1^{T}).
    \end{equation}
    As a result:
    \begin{equation}
        \frac{\text{Pr}(\bold{Y}_1^T=\bold{y}_1^T)}{\text{Pr}(\bold{Y}_1^T=\bold{y}_1^T|\bold{X}_1^T={\bold{x}}_1^{T})}\in{[e^{-\epsilon},e^{\epsilon}]},
    \end{equation}
    and the sequence information leakage is upper bounded by $\epsilon$.
    
\textbf{From $\epsilon$-SIP $\to$ $2\epsilon$-LDP:}

Suppose $\bold{x}_1^T$ and $\tilde{\bold{x}}_1^T$ are two realizations of $\bold{X}_1^T$. If $\mathcal{M}$ guarantees $\mathcal{L}(\bold{Y}_1^T\to{\bold{X}_1^T})\le {\epsilon}$:
    \begin{equation}
        \frac{\text{Pr}(\bold{Y}_1^T=\bold{y}_1^T)}{\text{Pr}(\bold{Y}_1^T=\bold{y}_1^T|\bold{X}_1^T={\bold{x}}_1^{T})}\in{[e^{-\epsilon},e^{\epsilon}]}.
    \end{equation}
% and 
% \begin{equation}
%         \frac{\text{Pr}(\bold{Y}_1^T=\bold{y}_1^T)}{\text{Pr}(\bold{Y}_1^T=\bold{y}_1^T|\bold{X}_1^T=\tilde{\bold{x}}_1^{T})}\in{[e^{-\epsilon},e^{\epsilon}]}.
%     \end{equation}
    Therefore:
    \begin{equation}
    \begin{aligned}
        &\frac{\text{Pr}(\bold{Y}_1^T=\bold{y}_1^T)}{\text{Pr}(\bold{Y}_1^T=\bold{y}_1^T|\bold{X}_1^T={\bold{x}}_1^{T})}\cdot \frac{\text{Pr}(\bold{Y}_1^T=\bold{y}_1^T|\bold{X}_1^T=\tilde{\bold{x}}_1^{T})}{\text{Pr}(\bold{Y}_1^T=\bold{y}_1^T)}\\
        =&\frac{\text{Pr}(\bold{Y}_1^T=\bold{y}_1^T|\bold{X}_1^T=\tilde{\bold{x}}_1^{T})}{\text{Pr}(\bold{Y}_1^T=\bold{y}_1^T|\bold{X}_1^T={\bold{x}}_1^{T})}\in[e^{-2\epsilon},e^{2\epsilon}]
        \end{aligned}
        \end{equation}
\end{proof}

\begin{figure*}[t]
\vbox{
 \begin{equation}
 \label{eq49}
 \small
\begin{split}
 \nabla{f}=\left[\begin{array}{c}
       \beta_k(\bold{x}_1^{k-1},1|\bold{y}_1^{k-1})\\\beta_k(\bold{x}_2^{k-1},x|\bold{y}_1^{k-1})\\...\\\beta_k(\bold{x}_1^{k-1},|\mathcal{X}||\bold{y}_1^{k-1})
    \end{array}\right]
    \left[\begin{array}{ccc}
       D[Q(1),Q(1)]&D[Q(1),Q(2)]...&D[Q(1),Q(|\mathcal{Y}|)]\\
       D[Q(2),Q(1)]&D[Q(2),Q(2)]...&D[Q(2),Q(|\mathcal{Y}|)]\\
       ...&...&...\\
       D[Q(|\mathcal{X}|),Q(1)]&D[Q(|\mathcal{X}|),Q(2)]...&D[Q(|\mathcal{X}|),Q(|\mathcal{Y}|)]\\
    \end{array}\right]\\
\end{split}
\end{equation}}
% \hline
\end{figure*}

\begin{figure*}[t]
\vbox{
 \begin{equation}
 \label{eq50}
 \small
\begin{split}
 \begin{aligned}
       &g_1=\frac{\sum_{x\in{\mathcal{X}}}\beta_k(\bold{x}_1^{k-1},x|\bold{y}_1^{k-1})a_k(1|\bold{x}_1^{k-1},x,\bold{y}_1^{k-1})}{a_k(1|\bold{x}_1^{k-1},1,\bold{y}_1^{k-1})}-e^{\epsilon},~~~~~~g_2=e^{-\epsilon}-\frac{\sum_{x\in{\mathcal{X}}}\beta_k(\bold{x}_1^{k-1},x|\bold{y}_1^{k-1})a_k(1|\bold{x}_1^{k-1},x,\bold{y}_1^{k-1})}{a_k(1|\bold{x}_1^{k-1},1,\bold{y}_1^{k-1})},\\
       %&g_3=\frac{\sum_{x\in{\mathcal{X}}}P_X(x)M_{x1}}{M_{21}}-e^{\epsilon},~~~~~~~~g_4=e^{-\epsilon}-\frac{\sum_{x\in{\mathcal{X}}}P_X(x)M_{x1}}{M_{21}},\\
       &...~~~~~~~~~~~~~~~~~~~~~~~~~~~~~~~~~~~~~~~~~~~~~~~~~~~~~...\\
       &g_{2|\mathcal{X}|-1}=\frac{\sum_{x\in{\mathcal{X}}}\beta_k(\bold{x}_1^{k-1},x|\bold{y}_1^{k-1})a_k(1|\bold{x}_1^{k-1},x,\bold{y}_1^{k-1})}{a_k(1|\bold{x}_1^{k-1},|\mathcal{X}|,\bold{y}_1^{k-1})}-e^{\epsilon},~~~g_{2|\mathcal{X}|}=e^{-\epsilon}-\frac{\sum_{x\in{\mathcal{X}}}\beta_k(\bold{x}_1^{k-1},x|\bold{y}_1^{k-1})a_k(1|\bold{x}_1^{k-1},x,\bold{y}_1^{k-1})}{a_k(1|\bold{x}_1^{k-1},|\mathcal{X}|,\bold{y}_1^{k-1})},\\
       &...~~~~~~~~~~~~~~~~~~~~~~~~~~~~~~~~~~~~~~~~~~~~~~~~~~~~~~...\\
       &g_{2|\mathcal{X}||\mathcal{Y}|-1}=\frac{\sum_{x\in{\mathcal{X}}}\beta_k(\bold{x}_1^{k-1},x|\bold{y}_1^{k-1})a_k(|\mathcal{Y}||\bold{x}_1^{k-1},x,\bold{y}_1^{k-1})}{a_k(|\mathcal{Y}||\bold{x}_1^{k-1},|\mathcal{X}|,\bold{y}_1^{k-1})}-e^{\epsilon},~g_{2|\mathcal{X}||\mathcal{Y}|}=e^{-\epsilon}-\frac{\sum_{x\in{\mathcal{X}}}\beta_k(\bold{x}_1^{k-1},x|\bold{y}_1^{k-1})a_k(|\mathcal{Y}||\bold{x}_1^{k-1},x,\bold{y}_1^{k-1})}{a_k(|\mathcal{Y}||\bold{x}_1^{k-1},|\mathcal{X}|,\bold{y}_1^{k-1})}.\\
    \end{aligned}
\end{split}
\end{equation}}
% \hline
\end{figure*}

\begin{figure*}[t]
\vbox{
 \begin{equation}
 \label{eq52}
 \small
\begin{split}
  \nabla g_{1}=\left[\begin{array}{cccc}
       \frac{\sum_{x\neq{1}}\beta_k(\bold{x}_1^{k-1},x|\bold{y}_1^{k-1})a_k(1|\bold{x}_1^{k-1},x,\bold{y}_1^{k-1})}{a^2_k(1|\bold{x}_1^{k-1},1,\bold{y}_1^{k-1})}&0&...&0\\
       \frac{\beta_k(\bold{x}_1^{k-1},2|\bold{y}_1^{k-1})}{a_k(1|\bold{x}_1^{k-1},1,\bold{y}_1^{k-1})}&0&...&0\\
       ...&...&...&...\\
       \frac{\beta_k(\bold{x}_1^{k-1},|\mathcal{X}||\bold{y}_1^{k-1})}{a_k(1|\bold{x}_1^{k-1},1,\bold{y}_1^{k-1})}&0&...&0\\
    \end{array}\right],
    \nabla g_{2}=\left[\begin{array}{cccc}
       \frac{\sum_{-x\neq{1}}\beta_k(\bold{x}_1^{k-1},x|\bold{y}_1^{k-1})a_k(1|\bold{x}_1^{k-1},x,\bold{y}_1^{k-1})}{a^2_k(1|\bold{x}_1^{k-1},1,\bold{y}_1^{k-1})}&0&...&0\\
       -\frac{\beta_k(\bold{x}_1^{k-1},2|\bold{y}_1^{k-1})}{a_k(1|\bold{x}_1^{k-1},1,\bold{y}_1^{k-1})}&0&...&0\\
       ...&...&...&...\\
       -\frac{\beta_k(\bold{x}_1^{k-1},|\mathcal{X}||\bold{y}_1^{k-1})}{a_k(1|\bold{x}_1^{k-1},1,\bold{y}_1^{k-1})}&0&...&0\\
    \end{array}\right],
\end{split}
\end{equation}}
% \hline
\end{figure*}
\begin{figure*}[t]
\vbox{
 \begin{equation}
 \label{eq53}
 \small
\begin{split}
  \begin{aligned}
    &\sum_{i=1}^{2|\mathcal{X}|}\mu_{i}\nabla{g_i}\\=&\left[\begin{array}{c}
       \frac{(\mu_1-\mu_2)\sum_{x\neq{1}}a_k(1|\bold{x}_1^{k-1},x,\bold{y}_1^{k-1})\beta_k(\bold{x}_1^{k-1},x|\bold{y}_1^{k-1})}{a^2_k(1|\bold{x}_1^{k-1},1,\bold{y}_1^{k-1})}+\sum_{i=1}^{|\mathcal{X}|}(\mu_{2i-1}-\mu_{2i})\frac{\beta_k(\bold{x}_1^{k-1},1|\bold{y}_1^{k-1})}{a_k(1|\bold{x}_1^{k-1},i,\bold{y}_1^{k-1})}-(\mu_{1}-\mu_{2})\frac{\beta_k(\bold{x}_1^{k-1},1|\bold{y}_1^{k-1})}{a_k(1|\bold{x}_1^{k-1},1,\bold{y}_1^{k-1})}\\
       \frac{(\mu_3-\mu_4)\sum_{x\neq{1}}a_k(1|\bold{x}_1^{k-1},x,\bold{y}_1^{k-1})\beta_k(\bold{x}_1^{k-1},x|\bold{y}_1^{k-1})}{a^2_k(1|\bold{x}_1^{k-1},2,\bold{y}_1^{k-1})}+\sum_{i=1}^{|\mathcal{X}|}(\mu_{2i-1}-\mu_{2i})\frac{\beta_k(\bold{x}_1^{k-1},2|\bold{y}_1^{k-1})}{a_k(1|\bold{x}_1^{k-1},i,\bold{y}_1^{k-1})}-(\mu_{3}-\mu_{4})\frac{\beta_k(\bold{x}_1^{k-1},2|\bold{y}_1^{k-1})}{a_k(1|\bold{x}_1^{k-1},2,\bold{y}_1^{k-1})}\\
       ...\\
       \frac{(\mu_{2|\mathcal{X}-1|}-\mu_{2|\mathcal{X}|})\sum_{x\neq{1}}a_k(1|\bold{x}_1^{k-1},x,\bold{y}_1^{k-1})\beta_k(\bold{x}_1^{k-1},x|\bold{y}_1^{k-1})}{a_k(1|\bold{x}_1^{k-1},|\mathcal{X}|,\bold{y}_1^{k-1})}+\sum_{i=1}^{|\mathcal{X}|}(\mu_{2i-1}-\mu_{2i})\frac{\beta_k(\bold{x}_1^{k-1},|\mathcal{X}||\bold{y}_1^{k-1}))}{a_k(1|\bold{x}_1^{k-1},i,\bold{y}_1^{k-1})}-(\mu_{2|\mathcal{X}|-1}-\mu_{2|\mathcal{X}|})\frac{\beta_k(\bold{x}_1^{k-1},|\mathcal{X}||\bold{y}_1^{k-1}))}{a_k(1|\bold{x}_1^{k-1},|\mathcal{X}|,\bold{y}_1^{k-1})}\\
    \end{array}\right],
    \end{aligned}
\end{split}
\end{equation}}
% \hline
\end{figure*}
\begin{figure*}[t]
\vbox{
 \begin{equation}
 \label{eq49}
 \small
\begin{split}
\begin{aligned}
&\left[\begin{array}{c}
       \beta_k(\bold{x}_1^{k-1},1|\bold{y}_1^{k-1})\\\beta_k(\bold{x}_1^{k-1},2|\bold{y}_1^{k-1})\\...\\\beta_k(\bold{x}_1^{k-1},|\mathcal{X}||\bold{y}_1^{k-1})
    \end{array}\right]
    \left[\begin{array}{c}
       D[Q(1),Q(1)]\\
       D[Q(2),Q(1)]\\
       ...\\
       D[Q(|\mathcal{X}|),Q(1)]\\
    \end{array}\right]\\
    &-\left[\begin{array}{c}
       \frac{(\mu_1-\mu_2)\beta_k(\bold{x}_1^{k-1},1|\bold{y}_1^{k-1})(1-\beta_k(\bold{x}_1^{k-1},1|\bold{y}_1^{k-1}))}{a^2_k(1|\bold{x}_1^{k-1},1,\bold{y}_1^{k-1})}+\sum_{i\neq1}\mu_{2i-1}e^{\epsilon}\\
       \frac{\mu_3\beta_k(\bold{x}_1^{k-1},2|\bold{y}_1^{k-1})(1-\beta_k(\bold{x}_1^{k-1},2|\bold{y}_1^{k-1}))}{a^2_k(1|\bold{x}_1^{k-1},2,\bold{y}_1^{k-1})}+\sum_{i\neq2}\mu_{2i-1}\frac{\beta_k(\bold{x}_1^{k-1},2|\bold{y}_1^{k-1})}{\beta_k(\bold{x}_1^{k-1},1|\bold{y}_1^{k-1})}e^{\epsilon}\\
       ...\\
       \frac{\mu_{2|\mathcal{X}|-1}\beta_k(\bold{x}_1^{k-1},|\mathcal{X}||\bold{y}_1^{k-1})(1-\beta_k(\bold{x}_1^{k-1},|\mathcal{X}||\bold{y}_1^{k-1}))}{a^2_k(1|\bold{x}_1^{k-1},|\mathcal{X}|,\bold{y}_1^{k-1})}+\sum_{i\neq|\mathcal{X}|}\mu_{2i-1}\frac{\beta_k(\bold{x}_1^{k-1},|\mathcal{X}||\bold{y}_1^{k-1}))}{\beta_k(\bold{x}_1^{k-1},1|\bold{y}_1^{k-1})}e^{\epsilon}\end{array}\right]-
       \left[\begin{array}{c}
       \lambda_1\\
       \lambda_2\\
       ...\\
       \lambda_{|\mathcal{X}|}
    \end{array}\right]=0,
\end{aligned}
\end{split}
\end{equation}}
% \hline
\end{figure*}

\section{Optimal parameters in the example}

Suppose at time step 1, $y_1=1$ is the output.

For the time step 2. We have:
\begin{equation*}
\begin{aligned}
    &\frac{\text{Pr}(Y_2=y_2|X_1=x_1,X_2=x_2,Y_1=1)}{\text{Pr}(Y_2=y_2|Y_1=1)}\\
    =&\frac{q(y_2)_{x_1x_2}^{1}}{\sum_{x'_1,x'_2}q(y_2)_{x_1x_2}^{1}\text{Pr}(X_1=x_1,X_2=x_2|Y_1=1)}
\end{aligned}
\end{equation*}
To find the feasible region for $q(y_2)_{x_1x_2}^{1}$, we need to calculate each possible posterior of $\text{Pr}(X_1=x_1,X_2=x_2|Y_1=1)$, which can be further expressed as:
\begin{equation*}
    \begin{aligned}
        &\text{Pr}(X_1=x_1,X_2=x_2|Y_1=1)\\
        =&\text{Pr}(X_2=x_2|X_1=x_1)\text{Pr}(X_1=x_1|Y_1=1)\\
        =&\text{Pr}(X_2=x_2|X_1=x_1)\frac{q(1)_{x_1}\text{Pr}(X_1=x_1)}{\sum_{x'_1=0}^1 q(1)_{x'_1}\text{Pr}(X_1=x'_1)}\\
    \end{aligned}
\end{equation*}
As there are four combinations of $X_1$ and $X_2$ given $Y_1=1$: $(X_1=0,X_2=0)$, $(X_1=0,X_2=1)$, $(X_1=1,X_2=0)$, $(X_1=1,X_2=1)$, with the conditional marginal probabilities of:
\begin{equation}
\begin{aligned}
    &\text{Pr}(X_1=0,X_2=0|Y_1=1)\\=&\frac{(1-\phi)q(1)_0(1-P_1)}{q(1)_0(1-P_1)+q(1)_1P_1}=\frac{(1-\phi)(1-P_1)}{e^{\epsilon}}
\end{aligned}
\end{equation}
\begin{equation}
\begin{aligned}
    &\text{Pr}(X_1=0,X_2=1|Y_1=1)\\=&\frac{\phi q(1)_0(1-P_1)}{q(1)_0(1-P_1)+q(1)_1P_1}=\frac{\phi(1-P_1)}{e^{\epsilon_1}}
\end{aligned}
\end{equation}
\begin{equation}
\begin{aligned}
    &\text{Pr}(X_1=1,X_2=0|Y_1=1)\\=&\frac{\phi q(1)_1P_1}{q(1)_0(1-P_1)+q(1)_1P_1}=\frac{\phi(e^{\epsilon_1}-1+P_1)}{e^{\epsilon_1}}
\end{aligned}
\end{equation}
\begin{equation}
\begin{aligned}
    &\text{Pr}(X_1=1,X_2=1|Y_1=1)\\=&\frac{(1-\phi) q(1)_1P_1}{q(1)_0(1-P_1)+q(1)_1P_1}=\frac{(1-\phi)(e^{\epsilon_1}-1+P_1)}{e^{\epsilon_1}}
\end{aligned}
\end{equation}
As we want to make $X_2$ as close to $Y_1$ as possible to increase the utility of the data, which means we want to increase the probabilities of $q(1)^{1}_{x_11}$ and $q(0)^{1}_{x_10}$ while decreasing the probabilities of $q(1)^{1}_{x_10}$ and $q(0)^{1}_{x_11}$, with linear constraints, we can found the boundaries of the parameters:

\begin{equation*}
    \begin{aligned}
        &q(1)^{1}_{00}=q(1)^{1}_{10}=\frac{\phi(1-P_1)+(1-\phi)(e^{\epsilon_1}-1+P_1)}{e^{\epsilon_1+\epsilon_2}}\\
        &q(0)^{1}_{01}=q(0)^{1}_{11}=\frac{(1-\phi)(1-P_1)+\phi(e^{\epsilon_1}-1+P_1)}{e^{\epsilon_1+\epsilon_2}}
    \end{aligned}
\end{equation*}
Similarly, we have given $Y_1=0$:
\begin{equation*}
    \begin{aligned}
        &q(1)^{0}_{00}=q(1)^{0}_{10}=\frac{\phi(e^{\epsilon_1  }-P_1)+(1-\phi)P_1}{e^{\epsilon_1+\epsilon_2}}\\
        &q(0)^{0}_{01}=q(0)^{0}_{11}=\frac{(1-\phi)(e^{\epsilon_1}-P_1)+\phi P_1}{e^{\epsilon_1+\epsilon_2}}
    \end{aligned}
\end{equation*}

\section{Values in noise correlations}
Thus the noise distribution at time step 1 can be represented as: $\text{Pr}(N=1)=\text{Pr}(N_1=1|X_1=0)\text{Pr}(X_1=0)+\text{Pr}(N_1=1|X_1=1)\text{Pr}(X_1=1)$.

The noise distribution at time step 2 can be expressed as:
\begin{equation*}
    \begin{aligned}
       \text{Pr}(N_2=1)=&\text{Pr}(X_2\neq{Y_2})\\
       =&\text{Pr}(Y_2=1,X_2=0)+\text{Pr}(Y_2=0,X_2=1)\\
       =&\text{Pr}(Y_2=1|X_2=0)\text{Pr}(X_2=0)\\+&\text{Pr}(Y_2=0|X_2=1)\text{Pr}(X_2=1);
    \end{aligned}
\end{equation*}
where $\text{Pr}(Y_2=1|X_2=0)$ can be expressed as:
\begin{equation*}
    \begin{aligned}
       &\text{Pr}(Y_2=1|X_2=0)\\
      =&\sum_{x_1,y_1=0}^1\text{Pr}(Y_2=1|X_1=x_1,Y_1=y_1,X_2=0)\\&~~~~~~~~~~~~~~~\cdot\text{Pr}(X_1=x_1|X_2=0)\text{Pr}(Y_1=y_1|X_1=x_1).
    \end{aligned}
\end{equation*}
on the other hand, $\text{Pr}(X_2=0)$ can be expressed as:
\begin{equation}
\begin{aligned}
    &\text{Pr}(X_2=0)=\text{Pr}(X_2=0|X_1=0)\text{Pr}(X_1=0)\\&~~~~~~+\text{Pr}(X_2=0|X_1=1)\text{Pr}(X_1=1).
\end{aligned}
\end{equation}

By the same way, we can calculate other terms in the expression of $\text{Pr}(N_2=1)$. 
where $E[N_1]=\text{Pr}(N_1=1)$, $E[N_2]=\text{Pr}(N_2=1)$, $\sqrt{E[N_1^2]-E^2[N_1]}=\text{Pr}(N_1=1)\text{Pr}(N_1=0)$, $\sqrt{E[N_2^2]-E^2[N_2]}=\text{Pr}(N_2=1)\text{Pr}(N_2=0)$.

Now we derive the value of $E[N_1N_2]$, which can be treated as a new binary random variable, thus $E[N_1N_2]=\text{Pr}(N_1=1,N_2=1)=\text{Pr}(N_2=1|N_1=1)\text{Pr}(N_1=1)$, and $\text{Pr}(N_2=1|N_1=1)$ can be expressed as:
\begin{equation}
    \sum_{i=0,j=0}^1\text{Pr}(Y_2=i,X_2=1-i|Y_1=j,X_1=1-j),
\end{equation}
Which means there are four combinations of $i$ and $j$, and for each of them we have:
\begin{equation}
    \begin{aligned}
       &\text{Pr}(Y_2=i,X_2=1-i|Y_1=j,X_1=1-j)\\
       =&\frac{\text{Pr}(Y_2=i,X_2=1-i,Y_1=j,X_1=1-j)}{\text{Pr}(Y_1\neq{X_1})}\\
       =&\frac{q(i)^j_{1-j,1-i}\text{Pr}(X_2=1-i|X_1=1-j)q(j)_{1-j}P_{1-j}}{\text{Pr}(Y_1\neq{X_1})}
    \end{aligned}
\end{equation}
where $\text{Pr}(Y_1\neq{X_1})=\sum_{i=0}^1\text{Pr}(Y_1=i|X_1=1-i)\text{Pr}(X_1=1-i)$.
Taking values in, we have 
\begin{equation}
    \begin{aligned}
        &\text{Pr}(N_2=1|N_1=1)\\
    =&\frac{[(1-\phi)q(1)^1_{00}+\phi q(0)^1_{01}](1-P_1)q(1)_0}{(1-P_1)q(1)_0+P_1q(0)_1}\\
    +&\frac{[\phi q(1)^{0}_{10}+(1-\phi)q(0)^{0}_{11}]q(0)_1P_1}{(1-P_1)q(1)_0+P_1q(0)_1}
    \end{aligned}
\end{equation}
So far, all the values in the expression of $\rho_{N_1N_2}$ can be calculated.

\section{Proof for key insights in Theorem 4}
\begin{proof}

The metric in the local information leakage can be expressed as:
\begin{equation}
    \begin{aligned}
        &\frac{\text{Pr}(\bold{X}_1^k=\bold{x}_1^k|\bold{Y}_1^k=\bold{y}_1^k)}{\text{Pr}(\bold{X}_1^k=\bold{x}_1^k|\bold{Y}_1^{k-1}=\bold{y}_1^{k-1})}\\
        =&\frac{\text{Pr}(X_k=x_k|\bold{Y}_1^k=\bold{y}_1^k)}{\text{Pr}(X_k=x_k|\bold{Y}_1^{k-1}=\bold{y}_1^{k-1})}\\
        &~~~~~~~\cdot\frac{\text{Pr}(\bold{X}_1^{k-1}=\bold{x}_1^{k-1}|\bold{Y}_1^k=\bold{y}_1^k,X_k=x_k)}{\text{Pr}(\bold{X}_1^{k-1}=\bold{x}_1^{k-1}|\bold{Y}_1^{k-1}=\bold{y}_1^{k-1},X_k=x_k)}
    \end{aligned}
\end{equation}
Based on the optimal perturbation parameters, we have 
\begin{equation}
\begin{aligned}
    &\text{Pr}(Y_k=y_k|\bold{X}_1^k=\bold{x}_1^k,\bold{Y}_1^{k-1}=\bold{y}_1^{k-1})\\=& \text{Pr}(Y_k=y_k|X_k=x_k,\bold{Y}_1^{k-1}=\bold{y}_1^{k-1}).
\end{aligned}
\end{equation}
Thus 
\begin{equation}
    \frac{\text{Pr}(\bold{X}_1^k=\bold{x}_1^k|\bold{Y}_1^k=\bold{y}_1^k)}{\text{Pr}(\bold{X}_1^k=\bold{x}_1^k|\bold{Y}_1^{k-1}=\bold{y}_1^{k-1})}=\frac{\text{Pr}(X_k=x_k|\bold{Y}_1^k=\bold{y}_1^k)}{\text{Pr}(X_k=x_k|\bold{Y}_1^{k-1}=\bold{y}_1^{k-1})}
\end{equation}
\end{proof}

\section{Proof of Theorem 4}\label{opt_sol}

% We fist analyze the utility function of $E[D(Q(x),Q(y))]$, where the distance $D(Q(x),Q(y))$ is non-negative. $D(Q(x),Q(y))=0$ when $Q(x)=Q(y)$ is a sufficient but not necessary condition to minimize $D(Q(x),Q(y))$. 
%  As a result, one optimal solution is to minimize all the $q_{xy}$s, $\forall{x\neq{y}}$. 
% The minimized $q_{xy}$ are at the boundaries of $q_{xy}=e^{-\epsilon}\sum_{x'\in{\mathcal{X}}}P_X(x') q_{x'y}$.
% We have $q_{xy}=P_X(y)e^{-\epsilon}$, by the third constraint, we have $q_{xx}=1-[1-P_X(x)]e^{-\epsilon}$. We next check whether the first and second constraints is satisfied:

% Firstly, 
% \begin{equation*}
%     1-\frac{(1-P_X(x))}{e^{\epsilon}}-\frac{P_X(y)}{e^{\epsilon}}\ge{0}
% \end{equation*}
% which means the first constraint always hold for all $x\in{\mathcal{X}}$ and $y\in{\mathcal{Y}}$.

% Secondly, 
% \begin{equation*}
% \begin{aligned}
%     e^{\epsilon}-\frac{\operatorname{Pr}(Y=y)}{{q_{yy}}}=&e^{\epsilon}-\frac{e^{\epsilon}P_X(y)}{e^{\epsilon}+P_X(y)}\ge{0}
%     \end{aligned}
% \end{equation*}
% which means the second constraint holds for all $x\in{\mathcal{X}}$ and $y\in{\mathcal{Y}}$.

{Privacy Constraints for Perturbation Parameters}
The privacy constraints can be further expressed as:
\begin{equation}
\begin{aligned}
    &\frac{\text{Pr}(\bold{X}_{k-L+1}^k=\bold{x}_{k-L+1}^k|\bold{Y}_1^k=\bold{y}_{1}^{k})}{\text{Pr}(\bold{X}_{k-L+1}^k=\bold{x}_{k-L+1}^k|\bold{Y}_1^{k-1}=\bold{y}_{1}^{k-1})}\\
    =&\frac{a^s_{k}(y_{k}| \bold{x}_{k-L+1}^k, \bold{y}_{1}^{k-1})}{\text{Pr}(Y_k=y_k|\bold{Y}_1^{k-1}=\bold{y}_{1}^{k-1})}\\
    =&\frac{ a^s_{k}(y_{k}| \bold{x}_{k-L+1}^k \bold{y}_{1}^{k-1})}{\sum_{\bar{\bold{x}}_{k-L+1}^k\in{\mathcal{D}}}a^s_{k}(y_{k}|\bar{\bold{x}}_{k-L+1}^k, \bold{y}_{1}^{k-1})\beta^s_k(\bar{\bold{x}}_{k-L+1}^k|\bold{y}_1^{k-1})},
\end{aligned}
\end{equation}
from which, we know that, for any $L>=1$, the parameter $a^s_{k}(y_{k}| \bold{x}_{k-L+1}^k \bold{y}_{1}^{k-1})$ is bounded by:
\begin{equation}
    {\text{Pr}(Y_k=y_k|\bold{Y}_1^{k-1}=\bold{y}_{1}^{k-1})}[{e^{-\epsilon}}, e^{\epsilon}]
\end{equation}
which implies  a range of
\begin{equation}
    \left[\frac{\beta^s_k({\bold{x}}_{k-L+1}^k|\bold{y}_1^{k-1})}{e^{\epsilon}},{1-\frac{1-\beta^s_k({\bold{x}}_{k-L+1}^k|\bold{y}_1^{k-1})}{e^{\epsilon}}}\right]
\end{equation}

Next, we show that the solution satisfies KKT condition. The Jacobian matrix of the objective function can be written as \eqref{eq49}.

% \begin{equation*}
%     \nabla{f}=\left[\begin{array}{c}
%        \beta_k(\bold{x}_1^{k-1},1|\bold{y}_1^{k-1})\\\beta_k(\bold{x}_2^{k-1},x|\bold{y}_1^{k-1})\\...\\\beta_k(\bold{x}_1^{k-1},|\mathcal{X}||\bold{y}_1^{k-1})
%     \end{array}\right]
%     \left[\begin{array}{ccc}
%        D[Q(1),Q(1)]&D[Q(1),Q(2)]...&D[Q(1),Q(|\mathcal{Y}|)]\\
%        D[Q(2),Q(1)]&D[Q(2),Q(2)]...&D[Q(2),Q(|\mathcal{Y}|)]\\
%        ...&...&...\\
%        D[Q(|\mathcal{X}|),Q(1)]&D[Q(|\mathcal{X}|),Q(2)]...&D[Q(|\mathcal{X}|),Q(|\mathcal{Y}|)]\\
%     \end{array}\right]
% \end{equation*}
The second order derivative are all zeros, which means the objective function is convex (linear).
The inequality constraints can be rewritten as

As an example, the Jacobian Matrix of the first row of the inequality constraints can be expressed as \eqref{eq50}.

% \begin{equation*}
%        \nabla g_{1}=\left[\begin{array}{cccc}
%        \frac{\sum_{x\neq{1}}\beta_k(\bold{x}_1^{k-1},x|\bold{y}_1^{k-1})a_k(1|\bold{x}_1^{k-1},x,\bold{y}_1^{k-1})}{a^2_k(1|\bold{x}_1^{k-1},1,\bold{y}_1^{k-1})}&0&...&0\\
%        \frac{\beta_k(\bold{x}_1^{k-1},2|\bold{y}_1^{k-1})}{a_k(1|\bold{x}_1^{k-1},1,\bold{y}_1^{k-1})}&0&...&0\\
%        ...&...&...&...\\
%        \frac{\beta_k(\bold{x}_1^{k-1},|\mathcal{X}||\bold{y}_1^{k-1})}{a_k(1|\bold{x}_1^{k-1},1,\bold{y}_1^{k-1})}&0&...&0\\
%     \end{array}\right],
%     \nabla g_{2}=\left[\begin{array}{cccc}
%        \frac{\sum_{-x\neq{1}}\beta_k(\bold{x}_1^{k-1},x|\bold{y}_1^{k-1})a_k(1|\bold{x}_1^{k-1},x,\bold{y}_1^{k-1})}{a^2_k(1|\bold{x}_1^{k-1},1,\bold{y}_1^{k-1})}&0&...&0\\
%        -\frac{\beta_k(\bold{x}_1^{k-1},2|\bold{y}_1^{k-1})}{a_k(1|\bold{x}_1^{k-1},1,\bold{y}_1^{k-1})}&0&...&0\\
%        ...&...&...&...\\
%        -\frac{\beta_k(\bold{x}_1^{k-1},|\mathcal{X}||\bold{y}_1^{k-1})}{a_k(1|\bold{x}_1^{k-1},1,\bold{y}_1^{k-1})}&0&...&0\\
%     \end{array}\right],
% \end{equation*}

Observe that constraints from $g_1$ to $g_{2|\mathcal{X}|}$ are non-negative in the first column, then \eqref{eq53}

% \begin{equation*}
% \begin{aligned}
%     &\sum_{i=1}^{2|\mathcal{X}|}\mu_{i}\nabla{g_i}\\=&\left[\begin{array}{c}
%        \frac{(\mu_1-\mu_2)\sum_{x\neq{1}}a_k(1|\bold{x}_1^{k-1},x,\bold{y}_1^{k-1})\beta_k(\bold{x}_1^{k-1},x|\bold{y}_1^{k-1})}{a^2_k(1|\bold{x}_1^{k-1},1,\bold{y}_1^{k-1})}+\sum_{i=1}^{|\mathcal{X}|}(\mu_{2i-1}-\mu_{2i})\frac{\beta_k(\bold{x}_1^{k-1},1|\bold{y}_1^{k-1})}{a_k(1|\bold{x}_1^{k-1},i,\bold{y}_1^{k-1})}-(\mu_{1}-\mu_{2})\frac{\beta_k(\bold{x}_1^{k-1},1|\bold{y}_1^{k-1})}{a_k(1|\bold{x}_1^{k-1},1,\bold{y}_1^{k-1})}\\
%        \frac{(\mu_3-\mu_4)\sum_{x\neq{1}}a_k(1|\bold{x}_1^{k-1},x,\bold{y}_1^{k-1})\beta_k(\bold{x}_1^{k-1},x|\bold{y}_1^{k-1})}{a^2_k(1|\bold{x}_1^{k-1},2,\bold{y}_1^{k-1})}+\sum_{i=1}^{|\mathcal{X}|}(\mu_{2i-1}-\mu_{2i})\frac{\beta_k(\bold{x}_1^{k-1},2|\bold{y}_1^{k-1})}{a_k(1|\bold{x}_1^{k-1},i,\bold{y}_1^{k-1})}-(\mu_{3}-\mu_{4})\frac{\beta_k(\bold{x}_1^{k-1},2|\bold{y}_1^{k-1})}{a_k(1|\bold{x}_1^{k-1},2,\bold{y}_1^{k-1})}\\
%        ...\\
%        \frac{(\mu_{2|\mathcal{X}-1|}-\mu_{2|\mathcal{X}|})\sum_{x\neq{1}}a_k(1|\bold{x}_1^{k-1},x,\bold{y}_1^{k-1})\beta_k(\bold{x}_1^{k-1},x|\bold{y}_1^{k-1})}{a_k(1|\bold{x}_1^{k-1},|\mathcal{X}|,\bold{y}_1^{k-1})}+\sum_{i=1}^{|\mathcal{X}|}(\mu_{2i-1}-\mu_{2i})\frac{\beta_k(\bold{x}_1^{k-1},|\mathcal{X}||\bold{y}_1^{k-1}))}{a_k(1|\bold{x}_1^{k-1},i,\bold{y}_1^{k-1})}-(\mu_{2|\mathcal{X}|-1}-\mu_{2|\mathcal{X}|})\frac{\beta_k(\bold{x}_1^{k-1},|\mathcal{X}||\bold{y}_1^{k-1}))}{a_k(1|\bold{x}_1^{k-1},|\mathcal{X}|,\bold{y}_1^{k-1})}\\
%     \end{array}\right],
%     \end{aligned}
% \end{equation*}
Similarly, the inequality constraints from $g_{2|\mathcal{X}|+1}$ to $g_{3|\mathcal{X}|}$ determines the second column so on so forth. The equality constraints can be formulated as
\begin{equation*}
    \left\{h_j=\sum_{y\in{\mathcal{Y}}}a_k(y|\bold{x}_1^{k-1},j,\bold{y}_1^{k-1})-1\right\}_{j=1}^{|\mathcal{X}|}.
\end{equation*}
with Jacobian matrix of 
\begin{equation*}
    \nabla h_{j}=\left[\begin{array}{cccc}
       0&0&...&0\\
       ...&...&...&...\\
       1&1&...&1\\
       ...&...&...&...\\
       0&0&...&0\\
    \end{array}\right],
\end{equation*}
with non-negative values at the $j$-th row. As a result, 
\begin{equation*}
    \sum_{j=1}^{|\mathcal{X}|}\lambda_j\nabla{h_j}=\left[\begin{array}{cccc}
       \lambda_1&\lambda_1&...&\lambda_1\\
       ...&...&...&...\\
       \lambda_j&\lambda_j&...&\lambda_j\\
       ...&...&...&...\\
       \lambda_{|\mathcal{X}|}&\lambda_{|\mathcal{X}|}&...&\lambda_{|\mathcal{X}|}\\
    \end{array}\right].
\end{equation*}
Denote the solution we found in proposition 1 as $a^*$, then our objective is to derive a set of $\{\mu_{i}\}_{i=1}^{2|\mathcal{X}||\mathcal{Y}|}$ and a set of $\{\lambda_{j}\}_{j=1}^{|\mathcal{X}|}$ that satisfy:
\begin{equation*}
\begin{aligned}
   &\nabla{f(a^*)}-\sum_{i=1}^{2|\mathcal{X}||\mathcal{Y}|}\mu_i\nabla{g_i(a^*)}-\sum_{j=1}^{|\mathcal{X}|}\lambda_j\nabla{h_j(a^*)}=0\\
   &\mu_ig_i(a^*)=0, \forall{i=1,2,...,|\mathcal{X}||\mathcal{Y}|},\\
   &\mu_i\ge{0}, \forall{i=1,2,...,|\mathcal{X}||\mathcal{Y}|}.
\end{aligned}
\end{equation*}

Observe that, $g_{2i-1}(a^*)=0$ for all $\{1\le{i\le{2|\mathcal{X}||\mathcal{Y}|}}\}/\{1,|\mathcal{X}|+1,...,|\mathcal{X}|(|\mathcal{Y}-1|+|\mathcal{Y}-1|)\}$. As a result, $\mu_{2i}=0$ for all $\{1\le{i\le{2|\mathcal{X}||\mathcal{Y}|}}\}/\{1,|\mathcal{X}|+1,...,|\mathcal{X}|(|\mathcal{Y}-1|+|\mathcal{Y}-1|)\}$. The first column of the Lagrangian function can be expressed as:

% \begin{small}
% \begin{equation*}
% \begin{aligned}
% &\left[\begin{array}{c}
%        \beta_k(\bold{x}_1^{k-1},1|\bold{y}_1^{k-1})\\\beta_k(\bold{x}_1^{k-1},2|\bold{y}_1^{k-1})\\...\\\beta_k(\bold{x}_1^{k-1},|\mathcal{X}||\bold{y}_1^{k-1})
%     \end{array}\right]
%     \left[\begin{array}{c}
%        D[Q(1),Q(1)]\\
%        D[Q(2),Q(1)]\\
%        ...\\
%        D[Q(|\mathcal{X}|),Q(1)]\\
%     \end{array}\right]\\
%     &-\left[\begin{array}{c}
%        \frac{(\mu_1-\mu_2)\beta_k(\bold{x}_1^{k-1},1|\bold{y}_1^{k-1})(1-\beta_k(\bold{x}_1^{k-1},1|\bold{y}_1^{k-1}))}{a^2_k(1|\bold{x}_1^{k-1},1,\bold{y}_1^{k-1})}+\sum_{i\neq1}\mu_{2i-1}e^{\epsilon}\\
%        \frac{\mu_3\beta_k(\bold{x}_1^{k-1},2|\bold{y}_1^{k-1})(1-\beta_k(\bold{x}_1^{k-1},2|\bold{y}_1^{k-1}))}{a^2_k(1|\bold{x}_1^{k-1},2,\bold{y}_1^{k-1})}+\sum_{i\neq2}\mu_{2i-1}\frac{\beta_k(\bold{x}_1^{k-1},2|\bold{y}_1^{k-1})}{\beta_k(\bold{x}_1^{k-1},1|\bold{y}_1^{k-1})}e^{\epsilon}\\
%        ...\\
%        \frac{\mu_{2|\mathcal{X}|-1}\beta_k(\bold{x}_1^{k-1},|\mathcal{X}||\bold{y}_1^{k-1})(1-\beta_k(\bold{x}_1^{k-1},|\mathcal{X}||\bold{y}_1^{k-1}))}{a^2_k(1|\bold{x}_1^{k-1},|\mathcal{X}|,\bold{y}_1^{k-1})}+\sum_{i\neq|\mathcal{X}|}\mu_{2i-1}\frac{\beta_k(\bold{x}_1^{k-1},|\mathcal{X}||\bold{y}_1^{k-1}))}{\beta_k(\bold{x}_1^{k-1},1|\bold{y}_1^{k-1})}e^{\epsilon}\end{array}\right]-
%        \left[\begin{array}{c}
%        \lambda_1\\
%        \lambda_2\\
%        ...\\
%        \lambda_{|\mathcal{X}|}
%     \end{array}\right]=0,
% \end{aligned}
% \end{equation*}
% \end{small}
For the other values of $\mu$ and $\lambda$ in the first column, one feasible solution are
\begin{align}
\mu_{2i-1} & =\frac{D(Q(i),Q(1))\beta^2_k(\bold{x}_1^{k-1},1|\bold{y}_1^{k-1})}{\beta_k(\bold{x}_1^{k-1},1|\bold{y}_1^{k-1})(1-\beta_k(\bold{x}_1^{k-1},i|\bold{y}_1^{k-1}))e^{\epsilon}} \nonumber \\ 
\mu_2 &=\frac{D(Q(i),Q(1))(e^{\epsilon}-1)(e^{\epsilon}-1+2\beta_k(\bold{x}_1^{k-1},1|\bold{y}_1^{k-1}))}{\beta_k(\bold{x}_1^{k-1},1|\bold{y}_1^{k-1})(1-\beta_k(\bold{x}_1^{k-1},i|\bold{y}_1^{k-1}))e^{\epsilon}}, \nonumber \\ 
\lambda_i & =\beta_k(\bold{x}_1^{k-1},i|\bold{y}_1^{k-1})\sum_{j=1}^{|\mathcal{Y}|}D(Q(i),Q(j)).\nonumber
\end{align}
Similarly, we can find feasible solutions of $\mu$s and $\lambda$s for the second column. Note that all the $\mu$s are non-negative, which means the KKT condition is satisfied, further implying that $q^*$ is a globally optimal solution.

%\section{Relation to Input Sensitivity}

\end{appendices}

\end{document}